\documentclass{sig-alternate-2013}

\permission{Permission to make digital or hard copies of all or part of this work for personal or classroom use is granted without fee provided that copies are not made or distributed for profit or commercial advantage and that copies bear this notice and the full citation on the first page. Copyrights for components of this work owned by others than ACM must be honored. Abstracting with credit is permitted. To copy otherwise, or republish, to post on servers or to redistribute to lists, requires prior specific permission and/or a fee. Request permissions from permissions@acm.org.}
\conferenceinfo{SIGMOD'14,}{June 22--27, 2014, Snowbird, UT, USA.}
\copyrightetc{Copyright 2014 ACM \the\acmcopyr}
\crdata{978-1-4503-2376-5/14/06\ ...\$15.00.\\
http://dx.doi.org/10.1145/2588555.2588581}

\usepackage{graphicx,epsfig,subfigure}
\usepackage{amsfonts,amsmath,amssymb,color,latexsym,multirow,xspace,color,url}
\usepackage[boxed]{algorithm}
\usepackage[noend]{algorithmic}
\usepackage{hyperref}

\usepackage{balance}  

\newcommand{\pred}{{\phi}\xspace}
\newcommand{\one}{{\mathbf{1}}\xspace}
\newcommand{\query}{{\sf q}\xspace}
\newcommand{\maxcomp}{{\rm maxcomp}\xspace}

\renewcommand{\S}{\ensuremath{{\cal S}}}
\newcommand{\Spairs}{\ensuremath{{\cal S}_{\rm pairs}}} 
\newcommand{\inp}{\ensuremath{{\cal I}}}
\newcommand{\dom}{\ensuremath{{\cal T}}}

\newcommand{\dist}{d} 
\newcommand{\error}{\ensuremath{{\cal E}}}

\newcommand{\squishlist}{
   \begin{list}{$\bullet$}
    {
      \setlength{\itemsep}{0pt}
      \setlength{\parsep}{3pt}
      \setlength{\topsep}{3pt}
      \setlength{\partopsep}{0pt}
      \setlength{\leftmargin}{1.5em}
      \setlength{\labelwidth}{1em}
      \setlength{\labelsep}{0.5em} } }

\newcommand{\squishend}{
    \end{list}  }
\newcommand{\squishenum}{
   
   \begin{list}{scount}{ \usecounter{scount}}
    {
      \setlength{\itemsep}{0pt}
      \setlength{\parsep}{3pt}
      \setlength{\topsep}{3pt}
      \setlength{\partopsep}{0pt}
      \setlength{\leftmargin}{1.5em}
      \setlength{\labelwidth}{1em}
      \setlength{\labelsep}{0.5em} } }


\newtheorem{definition}{Definition}[section]

\newtheorem{example}{Example}[section]
\newtheorem{lemma}{Lemma}[section]

\newtheorem{theorem}[lemma]{Theorem}

\newtheorem{corollary}[lemma]{Corollary}

        {\hspace*{\fill}$\Box$\par\vspace{4mm}}
        {\hspace*{\fill}$\Box$\par}

\newcommand{\stitle}[1]{\smallskip \noindent{\bf #1}}


\newcommand{\nop}[1]{{}\xspace}
\newcommand{\hide}[1]{\hspace*{-5pt}\xspace}

\newcommand{\newstuff}[1]{#1}



\newcommand{\ie}{{i.e.}\xspace}

\newcommand{\csec}{Section~}

\newcommand{\cfig}{Figure~}

\newcommand{\cthm}{Theorem~}

\newcommand{\cexa}{Example~}

\newcommand{\xxx}[1]{\xspace}


\newcommand{\ceil}[1]{\lceil #1 \rceil}

\newcommand{\true}{{\rm true}\xspace}
\newcommand{\false}{{\rm false}\xspace}
\newcommand{\size}{{\rm size}\xspace}

\newcommand{\lat}{{\cal L}\xspace}


\def\blfootnote{\xdef\@thefnmark{}\footnotetext} 

\begin{document}
\conferenceinfo{SIGMOD}{'14 Snowbird, Utah USA}


\title{Blowfish Privacy: Tuning Privacy-Utility Trade-offs using Policies}


\numberofauthors{3}

\author{
\alignauthor
Xi He\\
	\affaddr{Duke University}\\
	\affaddr{Durham, NC, USA}\\
	\email{hexi88@cs.duke.edu}
\alignauthor
Ashwin Machanavajjhala\\
       \affaddr{Duke University}\\
       \affaddr{Durham, NC, USA}\\
       \email{ashwin@cs.duke.edu}
\alignauthor
Bolin Ding\\
       \affaddr{Microsoft Research}\\
       \affaddr{Redmond, WA, USA}\\
       \email{bolin.ding@microsoft.com}
}

\maketitle
\begin{abstract}
Privacy definitions provide ways for trading-off the privacy of individuals in a statistical database for the utility of downstream analysis of the data. In this paper, we present {\em Blowfish}, a class of privacy definitions inspired by the Pufferfish framework, that provides a rich interface for this trade-off. In particular, we allow data publishers to extend differential privacy using a {\em policy}, which specifies (a) {\em secrets}, or information that must be kept secret, and (b) {\em constraints} that may be known about the data. While the secret specification allows increased utility by lessening protection for certain individual properties, the constraint specification provides added protection against an adversary who knows correlations in the data (arising from constraints). We formalize policies and present novel algorithms that can handle general specifications of sensitive information and certain count constraints. We show that there are reasonable policies under which our privacy mechanisms for k-means clustering, histograms and range queries  introduce significantly lesser noise than their differentially private counterparts. We quantify the privacy-utility trade-offs for various policies analytically and  empirically on real datasets.
\end{abstract}

\vspace{-3mm}
\category{H.2.8}{Database Applications}{Statistical Databases}
\category{K.4.1}{Computers and Society}{Privacy}
\vspace{-3mm}
\keywords{privacy, differential privacy, Blowfish privacy}

\section{Introduction}
\label{sec:intro}
With the increasing popularity of ``big-data'' applications which collect, analyze and disseminate individual level information in literally every aspect of our life, ensuring that these applications do not breach the privacy of individuals is an important problem. The last decade has seen the development of a number of privacy definitions and mechanisms that trade-off the privacy of individuals in these databases for the utility (or accuracy) of data analysis (see \cite{chen09:fnt} for a survey). Differential privacy \cite{icalp:Dwork06} has emerged as a gold standard not only because it is not susceptible to attacks that other definition can't tolerate, but also since it provides a simple knob, namely $\epsilon$, for trading off privacy for utility. 

While $\epsilon$ is intuitive, it does not sufficiently capture the diversity in the privacy-utility trade-off space. For instance, recent work has shown two seemingly contradictory results. In certain applications (e.g., social recommendations \cite{ashwin11:vldb}) differential privacy is too strong and does not permit sufficient utility. Next, when data are correlated (e.g., when constraints are known publicly about the data, or in social network data) differentially private mechanisms may not limit the ability of an attacker to learn sensitive information \cite{sigmod:KiferM11}. 
Subsequently, Kifer and Machanavajjhala \cite{pods:KiferM12} proposed a semantic privacy framework, called Pufferfish, which helps clarify assumptions underlying privacy definitions -- specifically, the information that is being kept secret, and the adversary's background knowledge. They showed that differential privacy is equivalent to a specific instantiation of the Pufferfish framework, where (a) every property about an individual's record in the data is kept secret, and (b) the adversary assumes that every individual is independent of the rest of the individuals in the data (no correlations).
We believe that these shortcomings severely limit the applicability of differential privacy to real world scenarios that either require high utility, or deal with correlated data. 

Inspired by Pufferfish, we seek to  better explore the trade-off between privacy and utility by providing a richer set of ``tuning knobs''. We explore a class of definitions called {\em Blowfish privacy}. In addition to $\epsilon$, which controls the amount of information disclosed, Blowfish definitions take as input a privacy {\em policy} that specifies two more parameters -- which information must be kept secret about individuals, and what constraints may be known publicly about the data. By extending differential privacy using these policies, we can hope to develop mechanisms that permit more utility since not all properties of an individual need to be kept secret. Moreover, we also can limit adversarial attacks that leverage correlations due to publicly known constraints. 

\noindent We make the following contributions in this paper:
\squishlist
\item We introduce and formalize sensitive information specifications, constraints, policies and Blowfish privacy. We consider a number of realistic examples of sensitive information specification, and focus on count constraints.
\item We show how to adapt well known differential privacy mechanisms to satisfy Blowfish privacy, and using the example of k-means clustering illustrate the gains in accuracy for Blowfish policies having weaker sensitive information specifications.
\item We propose  the ordered mechanism, a novel strategy for releasing cumulative histograms and answering range queries. We show analytically and using experiments on real data that, for reasonable sensitive information specifications, the ordered hierarchical mechanism is more accurate than the best known differentially private mechanisms for these workloads. 
\item We study how to calibrate noise for policies expressing count constraints, and its applications in several practical scenarios.
\squishend
{\bf Organization:} Section~\ref{sec:notation} introduces the notation. Section~\ref{sec:policy} formalizes privacy policies. We define Blowfish privacy, and discuss composition properties and its relationship to prior work in Section~\ref{sec:blowfish}. 
We define the policy specific global sensitivity of queries in Section~\ref{sec:noconstraints}. We describe mechanisms for kmeans clustering (Section~\ref{sec:kmeans}), and releasing cumulative histograms \& answering range queries (Section~\ref{sec:cdf}) under  Blowfish policies without constraints and empirically evaluate the resulting privacy-utility trade-offs on real datasets.
We show how to release histograms in the presence of count constraints in Section~\ref{sec:auxiliary} and then conclude in Section~\ref{sec:conclusions}.

\section{Notation}
\label{sec:notation}
We consider a dataset $D$ consisting of $n$ tuples. Each tuple $t$ is considered to be drawn from a
domain $\dom = A_1 \times A_2 \times \ldots \times A_m$ constructed from the cross product of $m$
categorical attributes. We assume that each tuple $t$ corresponds to the data collected from a unique
individual with identifier $t.\_id$. We will use the notation $x\in \dom$ to denote a value in the
domain, and $x.A_i$ to denote the $i^{th}$ attribute value in $x$.  

Throughout this paper, we will make an assumption that the set of individuals in the dataset $D$ is known in advance to the adversary and does not change. Hence we will use the indistinguishability notion of differential privacy \cite{tcc:DworkMNS06}. We will denote the set of possible databases using $\inp_n$, or the set of databases with $|D| = n$.\footnote{\small In Sec.~\ref{sec:policy} we briefly discuss how to generalize our results to other differential privacy notions by relaxing this assumption.}

\xxx{
\begin{definition}[Neighbors]
Two datasets $D_1$ and $D_2$ are neighbors, denoted by $(D_1, D_2) \in N)$, if they differ in the
value of one tuple. 
\end{definition}
}
 
\begin{definition}[Differential Privacy \cite{icalp:Dwork06}]\label{def:dp}
Two \\ datasets $D_1$ and $D_2$ are neighbors, denoted by $(D_1, D_2) \in N$, if they differ in the
value of one tuple. A randomized mechanism $M$ satisfies $\epsilon$-differential privacy if for every set of outputs $S
\subseteq range(M)$, and every pair of neighboring datasets $(D_1$, $D_2) \in N$, 
\begin{equation}
Pr[M(D_1) \in S] \leq e^\epsilon Pr[M(D_2) \in S]
\end{equation}  
\end{definition} 

Many techniques that satisfy differential privacy use the following notion of {\em global sensitivity}: 
\begin{definition}[Global Sensitivity]
The global sensitivity of a function $f:\inp_n \rightarrow \mathbb{R}^d$, denoted by $S(f)$ is defined as the largest L1 difference $||f(D_1) - f(D_2)||_1$, where $D_1$ and $D_2$ are databases that differ in one tuple. More formally, 
\begin{equation}
S(f) \ = \ \max_{(D_1, D_2) \in N} ||f(D_1) - f(D_2)||_1
\end{equation}
\end{definition}

A popular technique that satisfies $\epsilon$-differential privacy is the Laplace mechanism \cite{tcc:DworkMNS06} defined as follows:
\begin{definition}\label{def:laplace}
The Laplace mechanism, $M^{Lap}$, privately computes a function $f: \inp_n \rightarrow \mathbb{R}^d$ by computing $f(D) + \mathbf{\eta}$. $\mathbf{\eta} \in \mathbb{R}^d$ is a vector of independent random variables, where each $\eta_i$ is drawn from the Laplace distribution with parameter $S(f)/\epsilon$. That is, $P[\eta_i = z] \propto e^{-z\cdot \epsilon / S(f)}$.
\end{definition} 

Given some partitioning of the domain ${\cal P} = (P_1, \ldots, P_k)$, we denote by $h_{\cal P}: \inp \rightarrow Z^{k}$ the histogram query. $h_{\cal P}(D)$ outputs for each $P_i$ the number of times values in $P_i$ appears in $D$. $h_\dom(\cdot)$ (or $h(\cdot)$ in short) is the {\em complete} histogram query that reports for each $x \in \dom$ the number of times it appears in $D$. It is easy to see that $S(h_{\cal P}) = 2$ for all histogram queries, and the Laplace mechanism adds noise proportional to $Lap(2/\epsilon)$ to each component of the histogram.
We will use Mean Squared Error as a measure of accuracy/error. 
\begin{definition}\label{def:mse}
Let $M$ be a randomized algorithm that privately computes a function $f:\inp_n \rightarrow \mathbb{R}^d$. The expected mean squared error of $M$ is given by:
\begin{equation}
\error_M(D) \ = \ \sum_{i} \mathbb{E}(f_i(D) - \tilde{f_i}(D))^2
\end{equation}
where $f_i(\cdot)$ and $\tilde{f_i}(\cdot)$ denote the $i^{th}$ component of the true and noisy answers, respectively.
\end{definition}
Under this definition the accuracy of the Laplace mechanism for histograms is given by $|\dom|\cdot\mathbb{E}(Laplace(2/\epsilon))^2 = 8|\dom|/\epsilon^2$.

\section{Policy Driven Privacy}
\label{sec:policy}
In this section, we describe an abstraction called a {\em policy} that helps specify which information has to be kept secret and what background knowledge an attacker may possess about the correlations in the data. We will use this policy specification as input in our privacy definition, called Blowfish, described in Section~\ref{sec:blowfish}.

\subsection{Sensitive Information}
\label{sec:sensitive_example}
As indicated by the name, Blowfish\footnote{\small Pufferfish and Blowfish are common names of the same family of marine fish, Tetraodontidae.} privacy is inspired by the Pufferfish privacy framework \cite{pods:KiferM12}. In fact, we will show later (in Section~\ref{sec:related}) that Blowfish privacy is equivalent to specific instantiations of semantic definitions arising from the Pufferfish framework. 
 
Like Pufferfish, Blowfish privacy also uses the notions of secrets and discriminative pairs of secrets. We define a secret to be an arbitrary propositional statement over the values in the dataset. For instance, the secret $s:\, t.\_id = \mbox{`Bob'} \wedge t.Disease = \mbox{`Cancer'}$ is true in a dataset where Bob has Cancer.  We denote by $\S$ a set of secrets that the data publisher would like to protect. As we will see in this section each individual may have multiple secrets. Secrets may also pertain to sets of individuals. For instance, the following secret $s: t_1.\_id = \mbox{`Alice'} \wedge t_2.\_id = \mbox{`Bob'} \wedge t_1.Disease = t_2.Disease$ is true when Alice and Bob have the same disease. However, in this paper, we focus on the case where each secret is about a single individual.

We call a pair of secrets $(s, s') \in \S \times \S$ {\em discriminative} if
they are mutually exclusive. Each discriminative pair describes properties that an adversary must not
be able to distinguish between. One input to a policy is a set of discriminative pairs of
secrets $\Spairs$.

We now present  a few examples of sensitive information specified as a set of discriminative secrets. 

\squishlist
\item {\em Full Domain}: Let $s^i_x$ be the secret $(t.\_id = i \wedge t = x)$, for some $x \in \dom$. We define $\Spairs^{\rm full}$ as: 
\begin{equation}
\Spairs^{\rm full} \ = \ \{(s^i_x, s^i_y) | \forall i, \forall (x,y) \in \dom \times \dom\}
\end{equation}
This means that for every individual, an adversary should not be able to distinguish whether that individual's value is $x$ or $y$, for all $x, y \in \dom$. 
\item {\em Attributes}: 
Let  $\mathbf{x} \in \dom$ denote a multidimensional value. Let $\mathbf{x}[A]$ denote value of attribute $A$, and $\mathbf{x}[\bar{A}]$ the value for the other attributes. 
Then a second example of sensitive information is:
\begin{equation}
\Spairs^{\rm attr} \ = \ \{(s^{i}_\mathbf{x}, s^{i}_\mathbf{y}) | \forall i, \exists A, \mathbf{x}[A] \neq \mathbf{y}[A] 
\wedge \mathbf{x}[\bar{A}] = \mathbf{y}[\bar{A}]\}
\end{equation}
$\Spairs^{attr}$ ensures that an adversary should not be able to sufficiently distinguish between any two values for each attribute of every individual's value. 
\item {\em Partitioned}: Let ${\cal P} = \{P_1, \ldots, P_p\}$ be a partition that divides the domain into $p$ disjoint sets ($\cup_i P_i = \dom$ and $\forall 1
\leq i,j \leq p, P_i \cap P_j = \emptyset$). We define partitioned sensitive information as: 
\begin{equation}
\Spairs^{\cal P} \ = \ \{(s^i_x, s^i_y) | \forall i, \exists j, (x,y) \in P_j \times P_j\}
\end{equation}
In this case, an adversary is allowed to deduce whether  an
individual is in one of two different partitions, but can't distinguish between two values within a single
partition. This is a natural specification for location data -- an individual may be
OK with releasing his/her location at a coarse granularity (e.g., a coarse grid), but location within
each grid cell must be hidden from the adversary.
\item {\em Distance Threshold}: In many situations there is an inherent distance metric $\dist$ associated with the points in the domain (e.g., $L_1$ distance on age or salary, or Manhattan distance on locations). Rather than requiring that an adversary should not be able to distinguish  between any pairs of points $x$ and $y$, one could require that each pair of points that are close are not distinguishable. So, for this purpose, the set of discriminative secrets is:
\begin{equation}
\Spairs^{\dist, \theta} \ = \ \{(s^i_x, s^i_y) | \forall i, \dist(x,y) \leq \theta \}
\end{equation} 
Under this policy, the adversary will not be able to distinguish any pair of values with certainty. However, the adversary may distinguish points that are farther apart better that points that are close. 
\squishend

\noindent All of the above specifications of sensitive information can be generalized using the {\em discriminative secret graph}, defined below. Consider a graph $G = (V, E)$, where $V = \dom$ and the set of edges $E \subseteq \dom \times \dom$. The set of edges can be interpreted as values in the domain that an adversary must not distinguish between; i.e., the set of discriminative secrets is $\Spairs^{G} = \{(s^i_x, s^i_y) \ | \ \forall i, \forall (x,y) \in E\}$.
The above examples correspond to the following graphs: $G^{\rm full}$ corresponds to a complete graph on all the elements in $\dom$. $G^{\rm attr}$ corresponds to a graph where two values are connected by an edge when only one attribute value changes. $G^{\cal P}$ has $|{\cal P}|$ connected components, where each component is a complete graph on vertices in $P_i$. Finally, in $G^{d, \theta}$, $(x, y) \in E$ iff $d(x,y) \leq \theta$. 

We would like to note that a policy could have secrets and discriminative pairs about sets of individuals. However, throughout this paper, we only consider secrets pertaining to a single individual, and thus discriminative pairs refer to two secrets about the same individual. Additionally, the set of discriminative pairs is the same for all individuals. One can envision different individuals having different sets of discriminative pairs. For instance, we can model an individual  who is privacy agnostic and does not mind disclosing his/her value exactly by having no discriminative pair involving that individual. Finally note that in all of the discussion in this section, the specification of what is sensitive information {\em does not depend on the original database $D$}. One could specify sensitive information that depends on $D$, but one must be wary that this might leak additional information to an adversary. In this paper, we focus on data-independent discriminative pairs, uniform secrets and secrets that only pertain to single individuals. 

Throughout this paper, we will assume that the adversary knows the total number of tuples in the database (i.e., the set of possible instances is $\inp_n$). Hence, we can limit ourselves to considering changes in tuples (and not additions or deletions). We can in principle relax this assumption about cardinality, by adding an additional set of secrets of the form $s^i_\bot$ which mean {\em ``individual $i$ is not in dataset''}. All of our definitions and algorithms can be modified to handle this case by adding $\bot$ to the domain and to the discriminative secret graph $G$. We defer these extensions to future work.
\ \\
\subsection{Auxiliary Knowledge}
Recent work \cite{sigmod:KiferM11} showed that differentially private mechanisms could still lead to an inordinate disclosure of sensitive information when adversaries have access to publicly known constraints about the data that induce correlations across tuples. This can be illustrated by the following example. Consider a table $D$ with one attribute $R$ that takes values $r_1, \ldots, r_k$. Suppose, based on publicly released datasets the following $k-1$ constraints are already known: $c(r_1) + c(r_2)  = a_1$, $c(r_2) + c(r_3) = a_2$, and so on, where $c(r_i)$ is the number of records with value $r_i$. This does not provide  enough information to always reconstruct the counts in $D$ ($k$ unknowns but $k-1$ linear equations). However, if we knew the answer to some $c(r_i)$, then all counts can be reconstructed -- in this way tuples are correlated.

Differential privacy allows answering all the count queries $c(r_i)$ by adding independent noise with variance $2/\epsilon^2$ to each count. While these noisy counts $\tilde{c}(r_i)$ themselves do not disclose information about any individual, they can be combined with the constraints to get very precise estimates of $c(r_i)$. That is, we can construct $k$ independent estimators for each count as follow. For $r_1$, $\tilde{c}(r_1), a_1 -\tilde{c}(r_2), a_1 - a_2 + \tilde{c}(r_3), \ldots$ each equal $c(r_1)$ in expectation and have a variance of $2/\epsilon^2$. By averaging these estimators, we can predict the value of $c(r_i)$ with a variance of $2/(k\epsilon^2)$. For large $k$ (e.g., when there are $2^d$ values in $R$), the variance is small so that the table $D$ is reconstructed with very high probability, thus causing a complete breach of privacy.

Therefore, our policy specification also takes into account auxiliary knowledge that an adversary might know about the individuals in the private database. In Blowfish, we consider knowledge in the form of a set of deterministic constraints $Q$ that are publicly known about the dataset. We believe these are easier to specify than probabilistic correlation functions for data publishers. The effect of the constraints in $Q$ is to make only a subset of the possible database instances $\inp_{Q} \subset \inp_n$ possible; or equivalently, all instances in $\inp_n \setminus \inp_Q$ are impossible. For any database $D \in \inp_n$, we denote by $D \vdash Q$ if $D$ satisfies the constraints in $Q$; i.e., $D \in \inp_Q$. Examples of deterministic constraints include: 
\squishlist
\item {\em Count Query Constraints:} A count query on a database returns the number of tuples that satisfy a certain predicate. A count query constraints is a set of (count query, answer) pairs over the database that are publicly known. 
\item {\em Marginal Constraints:} A marginal is a projection of the database on a subset of attributes, and each row counts the number of tuples that agree on the subset of attributes. The auxiliary knowledge of marginals means these database marginals  are known to the adversary. 
\squishend
\subsection{Policy}
\vspace{-3mm}
\newstuff{
\begin{definition}[Policy]
A policy is a triple\\ $P = (\dom, G, \inp_Q)$, where $G = (V,E)$ is a discriminative secret graph with $V \subseteq \dom$. 
In $P$, the set of discriminative pairs $\Spairs^{G}$ is defined as the set $\{(s^i_x, s^i_y) \, | \, \forall i \in \_id, \forall (x,y) \in E\}$, where $s^i_x$ denotes the statement: $t.\_id = i \wedge t = x$. $\inp_Q$ denotes the set of databases that are possible under the constraints $Q$ that are known about the database.
\end{definition}
}
Note that the description of the policy can be exponential in the size of the input dataset. We will use shorthand to describe certain types of sensitive information (e.g., full domain, partition, etc), and specify the set of possible databases $\inp_Q$ using the description of $Q$.

\section{Blowfish Privacy}
\label{sec:blowfish}
In this section, we present our new privacy definition, called Blowfish Privacy. 
Like differential privacy, Blowfish uses the notion of neighboring datasets. The key difference is that the set of neighbors in Blowfish depend on the policy $P$ -- both on the set of discriminative pairs as well as on the constraints known about the database. 


\begin{definition}[Neighbors]
Let $P = (\dom, G, \inp_Q)$ be a policy. For any pair of datasets $D_1, D_2$, let $T(D_1, D_2) \subseteq \Spairs^{G}$ be the set of discriminative pairs $(s^i_x, s^i_y)$ such that the $i^{th}$ tuples in $D_1$ and $D_2$ are $x$ and $y$, resp. Let $\Delta(D_1, D_2) = D_1 \setminus D_2 \cup D_2 \setminus D_1$. $D_1$ and $D_2$ are neighbors with respect to a policy $P$, denoted by  $(D_1, D_2) \in N(P)$, if: 
\begin{enumerate}
\vspace{-2mm}
\item $D_1, D_2 \in \inp_Q$. (i.e., both the datasets satisfy $Q$). 
\vspace{-2mm}
\item $T \neq \emptyset$. (i.e., $\exists (s^i_x, s^i_y) \in \Spairs^{G}$ such that the $i^{th}$ tuples in $D_1$ and $D_2$ are $x$ and $y$, resp). 
\vspace{-2mm}
\item There is no database $D_3 \vdash Q$ such that 
\begin{enumerate}
\vspace{-2mm}
\item $T(D_1, D_3) \subset T(D_1, D_2)$, or 
\vspace{-1mm}
\item $T(D_1, D_3) = T(D_1, D_2)$ \& $\Delta(D_3, D_1) \subset \Delta(D_2, D_1) $.
\end{enumerate} 
\end{enumerate}
\end{definition} 
When $P = (\dom, G, \inp_n)$ (i.e., no constraints), $D_1$ and $D_2$ are neighbors if some individual tuples value is changed from $x$ to $y$, where $(x,y)$ is an edge in $G$. Note that $T(D_1, D_2)$ is non-empty and has the smallest size (of 1). Neighboring datasets in differential privacy correspond to neighbors when $G$ is a complete graph. 

For policies having constraints, conditions 1 and 2 ensure that neighbors satisfy the constraints (i.e., are in $\inp_Q$), and that they differ in at least one discriminative pair of secrets. Condition 3 ensures that $D_1$ and $D_2$ are minimally different in terms of discriminative pairs and tuple changes. 

\begin{definition}[Blowfish Privacy]
Let $\epsilon > 0$ be a real number and $P = (\dom, G, \inp_Q)$ be a policy. A randomized mechanism $M$ satisfies $(\epsilon, P)$-Blowfish privacy if for every pair of neighboring databases $(D_1, D_2) \in N(P)$, and every set of outputs $S \subseteq range(M)$, we have 
\begin{equation}
Pr[M(D_1) \in S] \ \leq e^{\epsilon} Pr[M(D_2) \in S]
\end{equation}
\end{definition}  
Note that Blowfish privacy takes in the policy $P$ in addition to $\epsilon$ as an input, and is different from differential privacy in only the set of neighboring databases $N(P)$. For $P = (\dom, G, \inp_n)$ (i.e., no constraints), it is easy to check that for any two databases that arbitrarily differ in one tuple ($D_1 = D \cup \{x\}, D_2 = D\cup\{y\}$), and any set of outputs $S$, 
\begin{equation}
Pr[M(D_1) \in S] \ \leq e^{\epsilon\cdot  d_G(x,y)} Pr[M(D_2) \in S]
\end{equation}
where $d_G(x,y)$ is the shortest distance between $x, y$ in $G$. This implies that an attacker may better distinguish pairs of points farther apart in the graph (e.g., values with many differing attributes in  $\Spairs^{\rm attr}$), than those that are closer.  Similarly, an attack can distinguish between $x,y$ with probability 1, when $x$ and $y$ appear in different partitions under partitioned sensitive information $\Spairs^{\cal P}$ ($d_G(x,y) = \infty$).  


\subsection{Composition}
Composition \cite{kdd:GantaKS08} is an important property that any privacy notion should satisfy in order to be able to reason about independent data releases. 
Sequential composition ensures that a sequence of computations that each ensure privacy in isolation also ensures privacy. This allows breaking down computations into smaller building blocks. Parallel composition is crucial to ensure that too much error is not introduced on computations occurring on {\em disjoint} subsets of data. We can show that Blowfish satisfies sequential composition, and a weak form of parallel composition. 

\begin{theorem}[Sequential Composition]
Let $P=(\dom, G, \inp_Q)$ be a policy and $D \in \inp_Q$ be an input database. Let $M_1(\cdot)$ and $M_2(\cdot, \cdot)$ be algorithms with independent sources of randomness that satisfy $(\epsilon_1, P)$ and $(\epsilon_2, P)$-Blowfish privacy, resp. Then an algorithm that outputs both $M_1(D) = \omega_1$ and $M_2(\omega_1, D) = \omega_2$ satisfies $(\epsilon_1 + \epsilon_2, P)$-Blowfish privacy.
\end{theorem}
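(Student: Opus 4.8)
The plan is to follow the standard template for sequential composition proofs of differential-privacy-style definitions, adapting it to the policy-dependent neighbor relation $N(P)$. The key observation is that both $M_1$ and $M_2$ are required to satisfy the $\epsilon_i$-privacy bound on \emph{the same} set of neighboring pairs $N(P)$, so the composed mechanism only ever needs to be analyzed on pairs $(D_1, D_2) \in N(P)$, and the argument reduces to the familiar case.

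First I would fix an arbitrary neighboring pair $(D_1, D_2) \in N(P)$ and an arbitrary output pair $(\omega_1, \omega_2)$ in the (WLOG discrete) range of the composed mechanism; the continuous case follows by the usual integration/measure-theoretic wrapper, or one can phrase everything in terms of output sets $S_1 \times S_2$. Writing the composed mechanism's output density at $(\omega_1, \omega_2)$ on input $D_j$ as a product using independence of the randomness sources, namely $\Pr[M_1(D_j) = \omega_1] \cdot \Pr[M_2(\omega_1, D_j) = \omega_2]$, I would then bound the ratio of the $D_1$-density to the $D_2$-density factor by factor: the first factor is bounded by $e^{\epsilon_1}$ because $M_1$ is $(\epsilon_1, P)$-Blowfish private and $(D_1, D_2) \in N(P)$; the second factor is bounded by $e^{\epsilon_2}$ because, for the fixed value $\omega_1$, the map $D \mapsto M_2(\omega_1, D)$ is an $(\epsilon_2, P)$-Blowfish private mechanism and again $(D_1, D_2) \in N(P)$. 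Multiplying the two bounds gives $e^{\epsilon_1 + \epsilon_2}$, and then summing (or integrating) over $(\omega_1,\omega_2) \in S$ for an arbitrary measurable $S \subseteq \mathrm{range}(M)$ yields the claimed inequality $\Pr[M(D_1) \in S] \le e^{\epsilon_1 + \epsilon_2}\Pr[M(D_2) \in S]$.

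The one subtlety worth spelling out — and the step I expect to be the only real obstacle — is the justification that $D \mapsto M_2(\omega_1, D)$ inherits $(\epsilon_2, P)$-Blowfish privacy for each fixed auxiliary input $\omega_1$. This follows directly from the definition of $(\epsilon_2,P)$-Blowfish privacy applied with $\omega_1$ treated as a constant: the privacy guarantee of $M_2$ is a statement quantified over its second argument (the database) for every fixed first argument, so instantiating the first argument to the specific realized value $\omega_1$ preserves the bound over all $(D_1,D_2)\in N(P)$. It is important here that the neighbor set $N(P)$ is the same for $M_1$ and $M_2$ and does not itself depend on the realized output $\omega_1$ — it depends only on the fixed policy $P$ — which is exactly what makes the factorization argument go through unchanged from the classical differential-privacy proof. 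No properties of $G$ or $\inp_Q$ beyond "$N(P)$ is a fixed set of pairs" are needed, so the proof is genuinely policy-agnostic.
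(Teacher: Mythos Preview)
Your proposal is correct and follows essentially the same approach as the paper's proof: fix a neighboring pair $(D_1,D_2)\in N(P)$, factor the joint output probability as $\Pr[M_1(D_j)=\omega_1]\cdot\Pr[M_2(\omega_1,D_j)=\omega_2]$ using independence, bound each factor by $e^{\epsilon_1}$ and $e^{\epsilon_2}$ respectively, and multiply. Your write-up is in fact a bit more careful than the paper's sketch, since you explicitly justify why $D\mapsto M_2(\omega_1,D)$ inherits $(\epsilon_2,P)$-Blowfish privacy for each fixed $\omega_1$ and note that $N(P)$ is independent of the realized output.
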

\begin{proof} See Appendix~\ref{proof_seqComp}.
\end{proof}

\begin{theorem}
{\sc (Parallel Composition with Cardinality Constraint).}
Let $P=(\dom, G, \inp_n)$ be a policy where the cardinality of the input $D \in \inp_n$ is known. Let $S_1, \ldots, S_p$ be disjoint subsets of $\_id$s; $D \cap S_i$ denotes the dataset restricted to the individuals in $S_i$. Let $M_i$ be mechanisms that each ensure $(\epsilon_i, P)$-Blowfish privacy. Then the sequence of $M_i(D \cap S_i)$ ensures $(\max_i \epsilon_i, P)$-Blowfish privacy. 
\end{theorem}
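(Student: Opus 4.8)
The plan is to use the fact that $P$ carries no constraints beyond a fixed cardinality, so by the characterization of neighbors following the definition of $N(P)$, any pair $(D_1,D_2)\in N(P)$ differs in the value of exactly one tuple: there is an individual $j$ whose value is $x$ in $D_1$ and $y$ in $D_2$ with $(x,y)\in E$, and every other tuple is identical in $D_1$ and $D_2$. Since $S_1,\dots,S_p$ are pairwise disjoint, $j$ belongs to at most one $S_{i_0}$. Write the combined mechanism as $M(D)=(M_1(D\cap S_1),\dots,M_p(D\cap S_p))$, with the $M_i$ run on independent randomness.

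First I would dispose of the case $j\notin\bigcup_i S_i$: then $D_1\cap S_i=D_2\cap S_i$ for all $i$, so $M(D_1)$ and $M(D_2)$ are identically distributed and $\pr{M(D_1)\in S}= \pr{M(D_2)\in S}\le e^{\max_i\epsilon_i}\,\pr{M(D_2)\in S}$ for every $S$. For the remaining case $j\in S_{i_0}$, I would observe that $D_1\cap S_i=D_2\cap S_i$ for all $i\neq i_0$ --- so these coordinates have the same distribution under $D_1$ and $D_2$ --- while $D_1\cap S_{i_0}$ and $D_2\cap S_{i_0}$ have a common cardinality and differ only in the value of $j$ along the edge $(x,y)\in E$. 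Here I would note that, in the absence of constraints, the Blowfish neighbor relation depends only on the graph $G$ and not on the number of tuples, so $(D_1\cap S_{i_0},\,D_2\cap S_{i_0})$ is a neighboring pair for the (constraint-free) policy and the $(\epsilon_{i_0},P)$-Blowfish guarantee of $M_{i_0}$ applies to it.

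With these two observations in hand, the argument is the standard parallel-composition computation: fix an output set $S\subseteq\mathrm{range}(M)$, and using independence of the $M_i$ write $\pr{M(D_1)\in S}$ as an iterated integral that first integrates out the coordinates $i\neq i_0$ against their (common) distributions, leaving an inner probability $\pr{M_{i_0}(D_1\cap S_{i_0})\in S'}$ over the corresponding section $S'$ of $S$; bound this inner probability by $e^{\epsilon_{i_0}}\,\pr{M_{i_0}(D_2\cap S_{i_0})\in S'}$ using the privacy of $M_{i_0}$; then reassemble (the outer coordinates are distributed identically under $D_1$ and $D_2$) to obtain $\pr{M(D_1)\in S}\le e^{\epsilon_{i_0}}\,\pr{M(D_2)\in S}\le e^{\max_i\epsilon_i}\,\pr{M(D_2)\in S}$. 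Since this holds for every neighboring pair and every $S$, $M$ satisfies $(\max_i\epsilon_i,P)$-Blowfish privacy.

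The main obstacle is conceptual rather than computational: verifying that restricting a neighboring pair to a disjoint sub-population again produces a neighboring pair of the constraint-free policy. This is precisely where the hypothesis that the policy is $(\dom,G,\inp_n)$ with known cardinality is needed --- with genuine count constraints, a single-tuple substitution need not be a legal move and a neighboring pair can differ on tuples spread across several of the $S_i$, so the per-block privatization no longer composes in parallel; this is why only a weak form of parallel composition is claimed.
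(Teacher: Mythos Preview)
Your proposal is correct and follows essentially the same approach as the paper's proof sketch: since the cardinality-only policy makes neighbors differ in exactly one tuple, at most one block $S_{i_0}$ changes, so the joint probability factors over independent blocks, the $(\epsilon_{i_0},P)$-guarantee of $M_{i_0}$ handles the changed block, and equality on the rest yields the $e^{\max_i\epsilon_i}$ bound. You are slightly more careful than the paper in treating the case $j\notin\bigcup_i S_i$ and in flagging why the restricted pair is still a legitimate neighboring pair, but the argument is the same.
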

\begin{proof} See Appendix~\ref{proof_parComp}.
\end{proof}

Reasoning about parallel composition in the presence of general constraints is non-trivial. Consider two neighboring datasets $D_a, D_b \in N(P)$. For instance, suppose one of the attributes is gender, we know the number of males and females in the dataset, and we are considering full domain sensitive information. Then there exist neighboring datasets such that differ in two tuples $i$ and $j$ that are alternately male and female in $D_a$ and $D_b$. If $i$ and $j$ appear in different subsets $S_1$ and $S_2$ resp., then $D_a \cap S_1 \neq D_b \cap S_2$ and $D_a \cap S_1 \neq D_b \cap S_2$. Thus the sequence $M_i(D \cap S_i)$ does not ensure $(\max_i \epsilon_i, P)$-Blowfish privacy. We generalize this observation below.

Define a pair of secrets $(s,s')$ to be {\em critical} to a constraint $q$ if there exist $D_s, D_{s'}$ such that $T(D_s, D_{s'}) = (s,s')$, and $D_s \vdash q$, but $D_{s'} \not\vdash q$. Let $crit(q)$ denote the set of secret pairs that are critical to $q$. Next, consider $S_1, \ldots, S_k$ disjoint subsets of ids. We denote by $SP(S_i)$ the set of secret pairs that pertain to the ids in $S_i$. We say that a constraint $q$ {\em affects} $D \cap S_i$ if $crit(q) \cap SP(S_i) \neq \emptyset$. We can now state a sufficient condition for parallel composition. 
\begin{theorem}
{\sc (Parallel Composition with General Constraints).}
Let $P=(\dom, G, \inp_Q)$ be a policy and $S_1, \ldots, S_p$ be disjoint subsets of $\_id$s. Let $M_i$ be mechanisms that each ensure $(\epsilon_i, P)$-Blowfish privacy. Then the sequence of $M_i(D \cap S_i)$ ensures $(\max_i \epsilon_i, P)$-Blowfish privacy if there exist {\em disjoint} subsets of constraints  $Q_1, \ldots, Q_p \subset Q$ such that all the constraints in $Q_i$ only affects $D \cap S_i$.
\end{theorem}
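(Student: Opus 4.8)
The plan is to follow the proof strategy used for parallel composition under a cardinality constraint, adding one combinatorial ingredient: the hypothesis forces any two neighbours to differ inside at most one of the blocks $S_i$. Since the $M_i$ use independent randomness, the distribution of the released tuple $(M_1(D\cap S_1),\dots,M_p(D\cap S_p))$ is the product of the marginals $M_i(D\cap S_i)$, and the $i$-th marginal is identical for $D_1$ and $D_2$ whenever $D_1\cap S_i=D_2\cap S_i$; so for any $(D_1,D_2)\in N(P)$ it suffices to establish (A) there is at most one index $k$ with $D_1\cap S_k\neq D_2\cap S_k$, and (B) for that $k$, the pair $(D_1\cap S_k,D_2\cap S_k)$ is a pair of neighbours for the policy restricted to $S_k$, so that the ratio of its two marginals is at most $e^{\epsilon_k}\le e^{\max_i\epsilon_i}$ by $(\epsilon_k,P)$-Blowfish privacy of $M_k$. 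At the outset I normalize to the case where $S_1,\dots,S_p$ partition $\_id$ (ids in no block do not affect the release), and I fix a partition $Q=Q_1\sqcup\dots\sqcup Q_p$ realizing the hypothesis, so that every $q\in Q_i$ has $crit(q)\subseteq SP(S_i)$.

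For (A) I argue by contradiction. Suppose $D_1$ and $D_2$ differ on ids lying in two distinct blocks $S_a\neq S_b$; let $I$ be the set of ids on which they differ. Let $D_3$ be the database that agrees with $D_1$ on the ids in $S_a$ and with $D_2$ on all other ids, i.e. $D_3$ is obtained from $D_2$ by undoing, on $S_a$, the change that turns $D_1$ into $D_2$. Then $\Delta(D_1,D_3)$ is $\Delta(D_1,D_2)$ with its $S_a$-part deleted, hence a proper subset since $I\cap S_a\neq\emptyset$, while $T(D_1,D_3)\subseteq T(D_1,D_2)$; so if $D_3\vdash Q$ then $D_3$ witnesses condition 3(a) (when that inclusion is strict) or 3(b) (when it is an equality), contradicting $(D_1,D_2)\in N(P)$. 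Everything therefore hinges on $D_3\vdash Q$, and this is where the hypothesis enters. Take $q\in Q_m$. If $m=a$, then $D_3$ and $D_1$ agree on $S_a$ and differ only at ids outside $S_a$; walking from $D_1$ to $D_3$ one tuple at a time, each step changes a tuple at an id lying in some block $S_j$ with $j\neq a$, whose secret pair lies in $SP(S_j)$, which is disjoint from $crit(q)\subseteq SP(S_a)$, so by the definition of $crit(q)$ no step can change whether $q$ holds; since $D_1\vdash q$ we get $D_3\vdash q$. If $m\neq a$, then symmetrically $D_3$ and $D_2$ agree on $S_m$ and the tuples undone in passing from $D_2$ to $D_3$ all lie in $S_a\cap I$, which is disjoint from $S_m$, so the same walking argument (now started at $D_2$) gives $D_3\vdash q$. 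Hence $D_3\vdash Q$, and (A) follows.

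Claim (B) uses the mirror of this localization: if $q\in Q_k$ then no single-tuple change outside $S_k$ can flip $q$, so the satisfaction of $q$ depends only on the $S_k$-part of a database; consequently conditions 1 and 2 of the neighbour definition descend to $(D_1\cap S_k,D_2\cap S_k)$, and if some $S_k$-database $D_3'$ witnessed a failure of the minimality condition for this restricted pair, then extending $D_3'$ by $D_1$ on the complement of $S_k$ would yield a full database $D_3\vdash Q$ contradicting the minimality of $(D_1,D_2)$; hence $(D_1\cap S_k,D_2\cap S_k)$ is a pair of neighbours and $M_k$'s guarantee applies. I expect the main obstacle to be the verification that $D_3\vdash Q$ in (A): one must argue carefully that a walking step inside block $S_j$ genuinely cannot flip a constraint assigned to a different block, which is exactly the force of ``$q$ only affects $D\cap S_i$'', and handle the boundary cases (changes that are not along edges of $G$, and, in the version without the partition normalization, ids belonging to no block), which is precisely why I normalize to the partition case first.
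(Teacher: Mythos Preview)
Your proposal is correct and follows essentially the same approach as the paper: the paper's proof is a brief sketch that simply asserts that for any $(D_a,D_b)\in N(P)$ there is a single block $S_{i^*}$ on which they differ, then applies the product decomposition and the guarantee on $M_{i^*}$. You supply exactly the details the sketch omits---the minimality-based contradiction for claim (A) and the restriction argument for claim (B)---in the same spirit as the paper's (commented-out) longer version.
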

\begin{proof} See Appendix~\ref{proof_parComp}.
\end{proof}
We conclude this section with an example of parallel composition. Suppose $G$ contains two disconnected components on nodes $S$ and $\dom \setminus S$. The set of all secret pairs correspond to pairs of values that come either from $S$ or from $\dom \setminus S$. Suppose we know two count constraints $q_S$ and $q_{\dom \setminus S}$ that count the number of tuples with values in $S$ and $\dom \setminus S$, respectively. It is easy to see that $crit(q_S) = crit(q_{\dom \setminus S}) = 0$. Therefore, running an $(\epsilon, (\dom, G, \{q_S, q_{\dom \setminus S}\}))$-Blowfish private mechanism on disjoint subsets results in no loss of  privacy.

\xxx{
\begin{proof}
Let $M_{M_1,M_2}$ denote the mechanism that outputting the results of $M_1$ and $M_2$ sequentially.
As $M_1$ satisfies $(\epsilon_1,P)$-Blowfish privacy, for every pair of neighboring databases $(D_a,D_b)\in N(P)$,
and every result $r_1\in range(M_1)$, we have  
\begin{eqnarray}
 Pr[M_1(D_a)=r_1] \leq e^{\epsilon_1}Pr[M_1(D_b)=r_1]
\end{eqnarray}
The result of $M_1$ is outputted before the result of $M_2$,
so $r_1$ will turn out to be another input of $M_2$, together with the original dataset.
As $M_1$ satisfies $(\epsilon_1,P)$-Blowfish privacy,
for every pair of neighboring databases $(D_a,D_b)\in N(P)$ coupling with the same $r_1$, 
and for every result $r_2\in range(M_2)$, we have
\begin{eqnarray}
 Pr[M_2(D_a,r_1)=r_2] \leq e^{\epsilon_1}Pr[M_2(D_b,r_1)=r_2]
\end{eqnarray}
Therefore, for every pair of neighboring databases $(D_a,D_b)\in N(P)$,
and every set of output sequence $(r1,r2)$, we have  
\begin{eqnarray}
&& Pr[M_{M_1,M_2}(D_a)=(r_1,r_2)] \nonumber \\
&=& Pr[M_1(D_a)=r_1]Pr[M_2(D_a,r_1)=r_2] \nonumber \\
&\leq& e^{\epsilon_1}Pr[M_1(D_b)=r_1]e^{\epsilon_2}Pr[M_2(D_b,r_1)=r_2] \nonumber \\
&\leq& e^{\epsilon_1+\epsilon_2}Pr[M_1(D_b)=r_1]Pr[M_2(D_b,r_1)=r_2] \nonumber \\
&=& e^{\epsilon_1+\epsilon_2} Pr[M_{M_1,M_2}(D_b)=(r_1,r_2)]
\end{eqnarray}
\end{proof}

\ \\
{\bf Parallel Composition} 
\begin{definition}[Critical Secret Pair]
Given $P=(\mathcal{T},G,\mathcal{I}_Q)$, 
a discriminative secret pair, $(s,s') \in S^G_{pairs}$ is critical to a constraint $q\in Q$
if $(s,s')\in T(D_a,D_b)$ and  $D_a \vdash q$, but $D_b \not\vdash q$. 
In addition, if $(s,s')$ is critical to $q$
and a constraint set $Q_i\subseteq Q$ contains $q$,  
then we will say $(s,s')$ is critical to the constraint set $Q_i$. 
\end{definition} 
\ \\
We denote the collection of all secret pairs critical to $Q_i$, where $Q_i\subseteq Q$,
as $criticalSP(Q_i)$, i.e.
\begin{eqnarray}
criticalSP(Q_i) = \left\{  (s,s')\in S^G_{pairs} | (s,s') \mbox{ is critical to q}, \forall q\in Q_i \right\}
\end{eqnarray}
Given a database $D$, and for any $D_i\subseteq D$, 
we denote $SP(D_i)$ as a collection of secret pairs which are related to $D_i$ (they share the same tuple id), i.e.
\begin{eqnarray}
SP(D_i) = \left\{ (s,s')\in S^G_{pairs} | s.id \in D_i \right\}
\end{eqnarray}
For each $D_i\subseteq D$, 
we denote $Q_i$ as a collection of constraints in $Q$ which the secret pairs in $SP(D_i)$ are critical to, i.e.
\begin{eqnarray}
Q_i = \left\{ q\in Q | (s,s') \mbox{ is critical to } q, \forall (s,s')\in SP(D_i) \right\}
\end{eqnarray}

\begin{definition}[Mutually Exclusive Constraint Sets]
Given $P=(\mathcal{T},G,\mathcal{I}_Q)$, 
we say constraint sets, $Q_1,Q_2,..\subseteq Q$,  are mutually exclusive, if
$criticalSP(Q_i)\cap criticalSP(Q_j)=\emptyset$ for all $i\neq j$, i.e. 
each secret pair is only critical to at most 1 constraint set $Q_i$.  
\end{definition}

\begin{theorem}[Parallel Composition] Given $P=(\mathcal{T},G,\mathcal{I}_Q)$,
consider a sequence of mechanism $M_i$ with disjoint input dataset $D_{\_id_i}$ for $i=1,2,...$,
if the constraint sets $Q_i$ corresponding to $D_{\_id_i}$ are mutually exclusive 
and $\cup_i Q_i = Q$, 
and each mechanism satisfy $(\epsilon,P_i)$-Blowfish privacy, 
where $P_i=(\mathcal{T},G,\mathcal{I}_{Q_i})$, 
then outputting the answers to all mechanisms $M_i$ for $i=1,2,...$ satisfies 
$(\epsilon,P)$-Blowfish privacy, where P= $(\mathcal{T},G,\mathcal{I}_Q)$.
\end{theorem}

\ \\
Before we show the proof of Parallel Composition, we would like to prove the lemma below.
\begin{lemma}
Given $P=(\mathcal{T},G,\mathcal{I}_Q)$,
consider a sequence of mechanism $M_i$ with disjoint input dataset $D_{\_id_i}$  for $i=1,2,...$,
each mechanism $M_i$ has its corresponding constraint set $Q_i$ mutually exclusive to each other. 
For each pair of $(D_a,D_b) \in N(P)$, 
either $T(D_a,D_b)\cap criticalSP(Q) = \emptyset$ 
or $T(D_a,D_b)\subseteq criticalSP(Q_i)$ for some $i$.
In other words, the secret pairs in $T(D_a,D_b)$ are only critical to at most one constraint set $Q_{i^*}$ and hence the corresponding $M_{i^*}$. 
\end{lemma} 
\begin{proof}
We will prove this lemma by contradiction. 
Suppose that $T(D_a,D_b)\cap criticalSP(Q) \neq \emptyset$  and there is no $Q_{i^*}$ such that $T(D_a,D_b) \subseteq criticalSP(Q_{i^*})$.
In other words,
there exists $Q_1\neq Q_2$ such that $criticalSP(Q_1)\cap T(D_a,D_b) \neq \emptyset$ and $criticalSP(Q_2)\cap T(D_a,D_b) \neq \emptyset$.

Let us create a new database $D_c$ as the same copy of $D_b$, except:
if $(s,s')\in criticalSP(Q_2)\cap T(D_a,D_b)$ and $s(D_a)=true \wedge s'(D_b)=true$, 
then $s(D_c)=true$ and hence $s'(D_c)=false$. 
Hence, $T(D_a,T_c)=T(D_a,D_b)\backslash criticalSP(Q_2)$ and then $T(D_a, D_c)\subset T(D_a,D_b)$.

In this way, we have $s(D_a)=s(D_c)$ and $s'(D_a)=s'(D_c)$ for each $(s,s')\in criticalSP(Q_2)$. 
As for any $q\in Q_2$, $D_a\vdash q$, we have $D_c\vdash q$.
Comparing $D_c$ with $D_b$, only secret pairs critical to $Q_2$ are changed, 
so for any $q\in Q\backslash Q_2$, we have $D_c \vdash q$.

Hence, $(D_a,D_b)\notin N(P)$. This leads to contradiction.
Therefore, the secret pairs in $T(D_a,D_b)$ are only critical to at most one constraint set, $Q_{i^*}$, and its corresponding $M_{i^*}$.
\end{proof}

\begin{lemma}
If $Q_{i^*}$ in the previous lemma exists, then 
\begin{enumerate}
 \item for $i\neq i^*$, $D_{a\_id_i} = D_{b\_id_i}$; 
 \item $(D_{a\_id_{i^*}}, D_{a\_id_{i^*}})\in N(P_{i^*})$.
\end{enumerate}
\end{lemma}
\begin{proof}
The first point follows the proof of the previous lemma. We would like to show the second point.
It is true that for  $D_{a\_id_{i^*}}$, $D_{a\_id_{i^*}}$ :
\begin{enumerate}
 \item $D_{1\_id_{i^*}}, D_{2\_id_{i^*}}\in I_{Q_1}$;
 \item $T(D_{1\_id_{i^*}}, D_{2\_id_{i^*}}) = T(D_1,D_2)\neq \emptyset$
\end{enumerate}
In order to prove $(D_{a\_id_{i^*}}, D_{a\_id_{i^*}})\in N(P_{i^*})$, we need to show the last condition is true. 
Suppose the last condition is not true, i.e. there exist $D_{c\_id_{i^*}} \vdash Q_{i^*}$, such that
\begin{enumerate}
 \item[(a)] $T(D_{a\_id_{i^*}},D_{c\_id_{i^*}}) \subset T(D_{a\_id_{i^*}},D_{b\_id_{i^*}})$, or
 \item[(b)] $T(D_{a\_id_{i^*}},D_{c\_id_{i^*}}) =T(D_{a\_id_{i^*}},D_{b\_id_{i^*}})$ and \\
$\triangle (D_{a\_id_{i^*}},D_{c\_id_{i^*}}) \subset \triangle(D_{a\_id_{i^*}},D_{b\_id_{i^*}})$
\end{enumerate}
Then this condition is not true for $(D_a,D_b)$, as we could find a $D_c$ from $D_{c\_id_{i^*}}$ 
which satisfies (a) and (b) by adding tuples $t$ in $D_a$ with $t.id \neq \_id_{i^*}$.
Then we get $(D_a,D_b)\notin N(P)$. This leads to contradiction.
Therefore,  $(D_{a\_id_{i^*}}, D_{b\_id_{i^*}})\in N(P_{i^*})$.
\end{proof}

\ \\
\begin{proof}[Proof for Parallel Composition] 
As each $M_i$ satisfies $(\epsilon,P_i)$-Blowfish privacy, 
for every pair of neighboring databases $(D_{a\_id_i},D_{b\_id_i})\in N(P_i)$,
and every output $r_i\subseteq range(M_i)$, we have  
\begin{eqnarray}
 Pr[M_i(D_{a\_id_i})=r_i] \leq e^{\epsilon}Pr[M_i^r(D_{b\_id_i})=r_i]
\end{eqnarray}
If $D_{a\_id_i}=D_{b\_id_i}$, then 
\begin{eqnarray}
 Pr[M_i(D_{a\_id_i})=r_i] = Pr[M_i^r(D_{b\_id_i})=r_i]
\end{eqnarray}
Therefore, for every pair of neighboring databases $(D_a,D_b)\in N(P)$,
and every set of output sequence $r$, by the previous lemma, we have 
\begin{eqnarray}
&& Pr[M(D_a)=r] \nonumber \\
&=& \prod_i Pr[M_i(D_{a\_id_i})=r_i] \nonumber \\
&\leq&  e^{\epsilon} Pr[M_{i^*}(D_{b\_id_{i^*}})=r_{i^*}] \prod_{i,i\neq i^*}Pr[M_i^r(D_{b\_id_i})=r_i] \nonumber \\
&=& e^{\epsilon}\prod_iPr[M_i(D_{b\_id_i})=r_i] \nonumber \\
&=& e^{\epsilon}Pr[M(D_b)=r]
\end{eqnarray}
\end{proof}
}

\subsection{Relation to other definitions}
\label{sec:related}
In this section, we relate Blowfish privacy to existing notions of privacy. 
We discuss variants of differential privacy \cite{icalp:Dwork06} (including restricted sensitivity \cite{Blocki13:restricted}), the Pufferfish framework \cite{pods:KiferM12}, privacy axioms \cite{Kifer10axioms}, and a recent independent work on extending differential privacy with metrics \cite{pets13:metric}. 

\newstuff{\stitle{Differential Privacy \cite{icalp:Dwork06}:}}
 One can easily verify that a mechanism satisfies $\epsilon$-differential privacy (Definition~\ref{def:dp}) if and only if it satisfies $(\epsilon, P)$-Blowfish privacy, where $P = (\dom, K, \inp_n)$, and $K$ is the complete graph on the domain. \newstuff{Thus, Blowfish privacy is a generalization of differential privacy that allows a data curator to trade-off privacy vs utility by controlling sensitive information $G$ (instead of $K$) and auxiliary knowledge $\inp_Q$ (instead of $\inp_n$) in the policy. }

\newstuff{\stitle{Pufferfish Framework \cite{pods:KiferM12,tods:KiferM13}:}}
 Blowfish borrows the sensitive information specification from Pufferfish. Pufferfish defines adversarial knowledge using a set of data generating distributions, while Blowfish instantiates the same using publicly known constraints. We can show formal relationships between Blowfish and Pufferfish instantiations. 
\begin{theorem}\label{thm:puff1}
Let $\Spairs$ be the set of discriminative pairs corresponding to policy $P = (\dom, G, \inp_n)$. Let ${\cal D}$ denote the set of all product distributions $\{p_i(\cdot)\}_i$ over $n$ tuples.  $p_i(\cdot)$ denotes a probability distribution for tuple $i$ over  $\dom$. Then a mechanism satisfies $(\epsilon, \Spairs, {\cal D})$-Pufferfish privacy if and only if it satisfies $(\epsilon, P)$-Blowfish privacy.
\end{theorem}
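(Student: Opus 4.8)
The plan is to prove the two implications separately, using the fact that for the unconstrained policy $P=(\dom,G,\inp_n)$ a pair $(D_1,D_2)$ lies in $N(P)$ exactly when $D_1$ and $D_2$ agree on every tuple except one, say tuple $i$, whose value is $x$ in $D_1$ and $y$ in $D_2$ with $(x,y)\in E$ (the characterization already noted after the definition of $N(P)$). Recall that $(\epsilon,\Spairs,{\cal D})$-Pufferfish privacy asks that for every $(s^i_x,s^i_y)\in\Spairs$, every product prior $\theta=\{p_j\}_j\in{\cal D}$ with $\Pr[s^i_x\mid\theta]>0$ and $\Pr[s^i_y\mid\theta]>0$, and every output $w$, we have $\Pr[M(D)=w\mid s^i_x,\theta]\le e^{\epsilon}\,\Pr[M(D)=w\mid s^i_y,\theta]$ for $D$ drawn from $\theta$. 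I will work throughout with this pointwise/density form of both definitions, which for discrete output ranges is equivalent to the set-of-outputs form and extends to continuous ranges by the usual density argument.

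For Pufferfish $\Rightarrow$ Blowfish, take any $(D_1,D_2)\in N(P)$ and let $i,x,y$ be as above. Construct the product prior $\theta^{\star}$ in which $p_j$ is the point mass on the common value of tuple $j$ in $D_1$ and $D_2$, for every $j\neq i$, and $p_i(x)=p_i(y)=\tfrac12$. Then $\Pr[s^i_x\mid\theta^{\star}]=\Pr[s^i_y\mid\theta^{\star}]=\tfrac12>0$; conditioning on $s^i_x$ and $\theta^{\star}$ makes $D=D_1$ almost surely, and conditioning on $s^i_y$ and $\theta^{\star}$ makes $D=D_2$ almost surely. Since $(s^i_x,s^i_y)\in\Spairs$ (which holds because $T(D_1,D_2)$, nonempty by condition~2, equals $\{(s^i_x,s^i_y)\}$ in the unconstrained case), the Pufferfish inequality for this pair and $\theta^{\star}$ reads $\Pr[M(D_1)=w]\le e^{\epsilon}\Pr[M(D_2)=w]$, which is exactly $(\epsilon,P)$-Blowfish privacy.

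For Blowfish $\Rightarrow$ Pufferfish, fix $(s^i_x,s^i_y)\in\Spairs$ (so $(x,y)\in E$), a product prior $\theta=\{p_j\}_j$ with $p_i(x),p_i(y)>0$, and an output $w$. Because $\theta$ is a product distribution, conditioning on $s^i_x$ (resp.\ $s^i_y$) fixes tuple $i$ to $x$ (resp.\ $y$) and leaves the remaining tuples independently distributed according to the $p_j$; writing $D'$ for an assignment to the tuples other than $i$ and $D'\!\cup\{x\}$ for the $n$-tuple database obtained by adjoining value $x$ as tuple $i$,
\[
\Pr[M(D)=w\mid s^i_x,\theta]=\sum_{D'}\Pr[D'\mid\theta]\;\Pr\!\big[M(D'\!\cup\{x\})=w\big],
\]
and likewise for $s^i_y$ with the \emph{same} weights $\Pr[D'\mid\theta]$. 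For each fixed $D'$, the databases $D'\!\cup\{x\}$ and $D'\!\cup\{y\}$ differ in a single tuple along the edge $(x,y)$ of $G$ and both lie in $\inp_n$, so they are neighbors in $N(P)$, and Blowfish privacy gives $\Pr[M(D'\!\cup\{x\})=w]\le e^{\epsilon}\Pr[M(D'\!\cup\{y\})=w]$. Multiplying by $\Pr[D'\mid\theta]\ge 0$ and summing over $D'$ yields the Pufferfish inequality.

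The step that carries all the weight — and the one place where the hypothesis that ${\cal D}$ consists of \emph{product} distributions is indispensable — is the summation at the end of the Blowfish $\Rightarrow$ Pufferfish direction: it works only because the weight $\Pr[D'\mid\theta]$ of the rest of the database does not depend on whether tuple $i$ equals $x$ or $y$. For a correlated prior the two weight vectors differ (and $D'\!\cup\{x\}$, $D'\!\cup\{y\}$ need not even both have positive probability), so the reduction to a convex combination of neighboring instances fails; this is precisely why Pufferfish under correlations is strictly stronger than differential/Blowfish privacy. Apart from that, the only routine points are the equivalence between the pointwise and set-of-outputs formulations and checking that the selected pair genuinely lies in $\Spairs$, which is immediate from the neighbor conditions.
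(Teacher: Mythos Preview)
Your argument is correct. Both directions are handled cleanly: the Pufferfish $\Rightarrow$ Blowfish direction via the degenerate product prior that pins down every coordinate except the $i$th, and the Blowfish $\Rightarrow$ Pufferfish direction via the key observation that under a product prior the posterior weight $\Pr[D'\mid\theta]$ on the remaining coordinates is the same whether one conditions on $s^i_x$ or $s^i_y$, so the Pufferfish ratio reduces to a weighted average of Blowfish-neighbor ratios. Your remark that this last step is exactly where the product-distribution assumption is used is also on point.

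As for comparison with the paper: the paper does not actually supply a proof of its own. It simply states that the result ``follows from Theorem~6.1'' of the Pufferfish journal paper~\cite{tods:KiferM13}, which establishes the analogous equivalence between differential privacy and Pufferfish under product priors. Your write-up is essentially the natural adaptation of that argument to a general discriminative secret graph $G$ rather than the complete graph, and is precisely the kind of proof one would expect to find behind that citation. So you have not taken a different route; you have spelled out the route the paper leaves implicit.
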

\begin{theorem}\label{thm:puff2}
Consider a policy $P = (\dom, G, \inp_Q)$ corresponding to a set of constraints $Q$. Let $\Spairs$ be defined as in Theorem~\ref{thm:puff1}. Let ${\cal D}_Q$ be the set of product  distributions conditioned on the constraints in $Q$; i.e., 
\[ P[D = {x_1, \ldots, x_k}] \propto \left\{\begin{array}{ll}\prod_{i} p_i(x_i) & \mbox{ if $D \in \inp_Q$} \\ 0 & \mbox { otherwise}\end{array}\right.\]
A mechanism $M$ that satisfies $(\epsilon, \Spairs, {\cal D}_Q)$-Pufferfish privacy also satisfies $(\epsilon, P)$-Blowfish privacy.
\end{theorem}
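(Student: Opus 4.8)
The plan is to derive $(\epsilon,P)$-Blowfish privacy from $(\epsilon,\Spairs,{\cal D}_Q)$-Pufferfish privacy by the same witnessing technique that proves Theorem~\ref{thm:puff1}: for each Blowfish neighbor pair I will exhibit a data-generating distribution in ${\cal D}_Q$ under which conditioning on one discriminative secret essentially pins down $D_1$ and conditioning on its partner secret pins down $D_2$, so that the Pufferfish likelihood-ratio bound for that distribution becomes the required bound between $M(D_1)$ and $M(D_2)$. Two features of the constrained setting force the argument to be more delicate than in Theorem~\ref{thm:puff1}: Blowfish neighbors may differ in several tuples at once (since $Q$ couples tuple changes), and the witnessing distribution must be a product distribution conditioned on $Q$.

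First I would fix an arbitrary pair $(D_1,D_2)\in N(P)$, invoke condition~2 of the neighbor definition to pick a discriminative pair $(s^i_x,s^i_y)\in T(D_1,D_2)$ — so $(x,y)\in E$ and the tuple with id $i$ equals $x$ in $D_1$ and $y$ in $D_2$ — and let $J$ be the set of ids on which $D_1$ and $D_2$ disagree (thus $i\in J$, and in general $|J|>1$). For a small parameter $\delta\in(0,1)$ I would build the product distribution $\theta_\delta$ whose marginal on id $i$ is uniform on $\{x,y\}$, whose marginal on each $j\in J\setminus\{i\}$ puts weight $1-\delta$ on $D_1[j]$ and $\delta$ on $D_2[j]$, and whose marginal on each $j\notin J$ is the point mass on the common value $D_1[j]=D_2[j]$; conditioning this product distribution on $Q$ yields $\theta_\delta\in{\cal D}_Q$, which is well defined because $D_1,D_2\in\inp_Q$ (condition~1) are both in its support.

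The core of the proof is to control the two conditionals of $\theta_\delta$, and this is the step I expect to be the main obstacle. Conditioning on $s^i_y$ leaves in the support only ``hybrid'' databases — each id at its $D_1$-value or its $D_2$-value — in which id $i$ equals $y$; I claim the unique such database in $\inp_Q$ is $D_2$. Indeed, any other such $D_3\vdash Q$ would differ from $D_1$ on a proper subset of $J$ still containing $i$, so either $T(D_1,D_3)$ is a nonempty proper subset of $T(D_1,D_2)$ (violating condition~3(a)) or $T(D_1,D_3)=T(D_1,D_2)$ while $\Delta(D_3,D_1)\subsetneq\Delta(D_2,D_1)$ (violating condition~3(b)). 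Hence the conditional of $\theta_\delta$ on $s^i_y$ is exactly the point mass on $D_2$. Conditioning on $s^i_x$ is the subtle case: condition~3 eliminates every competing hybrid $D_3\vdash Q$ with id $i$ equal to $x$ \emph{except} those that differ from $D_1$ only on ids whose value-pair is not an edge of $G$, and such spurious hybrids can genuinely occur. However, any spurious hybrid differs from $D_1$ on at least one id of $J\setminus\{i\}$, which $\theta_\delta$ charges probability only $\delta$, so the conditional of $\theta_\delta$ on $s^i_x$ assigns mass $1-O(\delta)$ to $D_1$ and only $O(\delta)$ to the spurious hybrids. The essential ingredients here are the minimality requirement (condition~3) of the Blowfish neighbor relation and the $\delta$-perturbation that makes the residual hybrid mass vanish in the limit.

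Granting these two facts, the conclusion is routine. For every output $w$, Pufferfish applied to $(s^i_x,s^i_y)$ under $\theta_\delta\in{\cal D}_Q$ bounds a convex combination putting weight $1-O(\delta)$ on $\Pr[M(D_1)=w]$ (plus a nonnegative contribution from the spurious hybrids) by $e^{\epsilon}\Pr[M(D_2)=w]$; summing over $w\in S$ and letting $\delta\to 0$ gives $\Pr[M(D_1)\in S]\le e^{\epsilon}\Pr[M(D_2)\in S]$, which is $(\epsilon,P)$-Blowfish privacy for the ordered pair $(D_1,D_2)$ (the symmetric Pufferfish inequality for the same $\theta_\delta$ supplies the companion bound if $(D_2,D_1)$ is also a neighbor). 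I would close with a remark on why only this direction holds: a distribution in ${\cal D}_Q$ can have a conditional that spreads over several pairwise non-neighboring databases, so Pufferfish constrains $M$ on mixtures invisible to the Blowfish neighbor bounds — which is exactly why Theorem~\ref{thm:puff2} is one-directional whereas Theorem~\ref{thm:puff1} is an equivalence.
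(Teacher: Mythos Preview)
The paper does not actually supply a proof of this theorem: it merely states the result, notes that it makes Blowfish a necessary condition for the corresponding Pufferfish instantiation, and conjectures that the converse also holds. There is therefore no ``paper's own proof'' to compare against, and your proposal should be read as an original argument.

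Evaluated on its own merits, your argument is essentially correct and is the natural extension of the product-distribution witnessing technique that underlies \cthm\ref{thm:puff1}. The two places where the constrained setting forces real work are handled properly. For the $s^i_y$ conditional, your use of condition~3 of the neighbor definition is sound: any hybrid $D_3\vdash Q$ with the $i$th tuple equal to $y$ and $D_3\neq D_2$ differs from $D_1$ on a strict subset $J'\subsetneq J$ containing $i$; either some id in $J\setminus J'$ contributes an edge of $G$ (so $T(D_1,D_3)$ is a nonempty proper subset, violating~3(a)) or none does (so $T(D_1,D_3)=T(D_1,D_2)$ while $\Delta(D_3,D_1)\subsetneq\Delta(D_2,D_1)$, violating~3(b)). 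For the $s^i_x$ conditional, your identification of the residual ``spurious'' hybrids (those differing from $D_1$ only on ids whose $D_1/D_2$ value pair is not an edge of $G$) is exactly right, and the $\delta$-weighting makes their conditional mass $O(\delta)$; taking $\delta\to 0$ then yields the required Blowfish inequality for the ordered pair $(D_1,D_2)$.

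One small remark: your parenthetical that ``the symmetric Pufferfish inequality for the same $\theta_\delta$ supplies the companion bound'' is slightly imprecise, because under $\theta_\delta$ the $s^i_x$ conditional is a mixture rather than a point mass on $D_1$, so the reverse Pufferfish inequality bounds $\Pr[M(D_2)=w]$ only by $e^{\epsilon}$ times that mixture. The clean way to obtain the companion bound when $(D_2,D_1)\in N(P)$ is to rerun your construction with the roles of $D_1$ and $D_2$ swapped (placing $1-\delta$ mass on the $D_2$ values instead), which your framework accommodates without change. Since Blowfish privacy is phrased as a requirement on each ordered pair in $N(P)$, this is all that is needed.
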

\newstuff{Theorem~\ref{thm:puff1} states that Blowfish policies without constraints are equivalent to  Pufferfish instantiated using adversaries who believe tuples in $D$ are independent (proof follows from Theorem 6.1 \cite{tods:KiferM13}). Theorem~\ref{thm:puff2} states that when constraints are known, Blowfish is a necessary condition for any mechanism that satisfies a similar Pufferfish instantiation with constraints (we conjecture the sufficiency of Blowfish as well). Thus Blowfish privacy policies correspond to a subclass of privacy definitions  that can be instantiated using  Pufferfish. 

Both Pufferfish and Blowfish aid the data publisher to customize privacy definitions by carefully defining sensitive information and adversarial knowledge. However, Blowfish improves over Pufferfish in three key aspects. First, there are no general algorithms known for Pufferfish instantiations. In this paper, we present of algorithms for various Blowfish policies. Thus, we can't compare Blowfish and Pufferfish experimentally. Second, all Blowfish privacy policies result in composable privacy definitions. This is not true for the Pufferfish framework. Finally, we believe Blowfish privacy is easier to understand and use than the Pufferfish framework for data publishers who are not privacy experts.\footnote{\small We have some initial anecdotal evidence of this fact working with statisticians from the US Census.} For instance, one needs to specify adversarial knowledge as sets of complex probability distributions in Pufferfish, while in Blowfish policies one only needs to specify conceptually simpler publicly known constraints.} 

\newstuff{\stitle{Other Privacy Definitions:}}
Kifer and Lin \cite{Kifer10axioms} stipulate that every ``good'' privacy definition should satisfy two axioms -- transformation invariance, and convexity. We can show that  Blowfish privacy satisfy both these axioms.

Recent papers have extended differential privacy to handle constraints. Induced neighbor privacy \cite{sigmod:KiferM11, pods:KiferM12} extends the notion of neighbors such that neighboring databases satisfy the constraints and are minimally far apart (in terms of tuple changes). Blowfish extends this notion of induced neighbors to take into account discriminative pairs of secrets and measures distance in terms of the set of different discriminative pairs. Restricted sensitivity \cite{Blocki13:restricted} extends the notion of sensitivity to account for constraints. In particular, the restricted sensitivity of a function $f$ given a set of constraints $Q$, or $RS_f(Q)$, is the maximum $|f(D_1) - f(D_2)| / d(D_1, D_2)$, over all $D_1, D_2 \in \inp_Q$. However, tuning noise to $RS_f(Q)$ may not limit the ability of an attacker to learn sensitive information. For instance, if $\inp_Q = \{0^n, 1^n\}$, then the restricted sensitivity of releasing the number of 1s is $1$. Adding constant noise does not disallow the adversary from knowing whether the database was $0^n$ or $1^n$.

A very recent independent work suggests extending differential privacy using a metric over all possible databases \cite{pets13:metric}. In particular, given a distance metric $d$ over instances, they require an algorithm to ensure that $P[M(X) \subseteq S] \leq e^{\epsilon\cdot d(X,Y)}P[M(Y) \subseteq S]$, for all sets of outputs $S$ and all instances $X$ and $Y$. Thus differential privacy corresponds to a specific distance measure -- Hamming distance. The sensitive information specification in Blowfish can also be thought of in terms of a distance metric over tuples. In addition we present novel algorithms (ordered mechanism) and allow incorporating knowledge of constraints. We defer a more detailed comparison to future work.



\section{Blowfish without Constraints}
\label{sec:noconstraints}
Given any query $f$ that outputs a vector of reals, we can define a {\em policy specific sensitivity} of $f$. Thus, the Laplace mechanism with noise calibrated to the policy specific sensitivity ensures Blowfish privacy.
\begin{definition}\label{def:policyGS} 
{\sc (Policy Specific Global Sensitivity).} Given a policy $(\dom, G, \inp_Q)$, $S(f,P)$ denotes the policy specific global sensitivity of a function $f$ and is defined as $\max_{(D_1, D_2) \in N(P)} ||f(D_1) - f(D_2)||_1$.
\end{definition}
\begin{theorem}\label{thm:lappolicy}
Let $P = (\dom, G, \inp_Q)$ be a policy. Given a function $f:\inp_Q \rightarrow \mathbb{R}^d$, outputting $f(D) + \mathbf{\eta}$ ensures $(\epsilon, P)$-Blowfish privacy if $\mathbf{\eta} \in \mathbb{R}^d$ is a vector of independent random numbers drawn from $Lap(S(f,p)/\epsilon)$.
\end{theorem}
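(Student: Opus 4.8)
The plan is to reduce the claim to the standard analysis of the Laplace mechanism, just with the ordinary global sensitivity $S(f)$ replaced by the policy specific global sensitivity $S(f,P)$ from Definition~\ref{def:policyGS}. First I would fix an arbitrary pair of neighboring databases $(D_1,D_2)\in N(P)$ and an arbitrary measurable set $S\subseteq\mathbb{R}^d$, and directly bound the ratio of output densities of $M(D) = f(D)+\eta$. Writing $p_D(\cdot)$ for the density of $f(D)+\eta$ at a point $z\in\mathbb{R}^d$, the product structure of $\eta$ gives $p_D(z) = \prod_{i=1}^d \frac{\epsilon}{2S(f,P)}\exp\!\big(-|z_i - f_i(D)|\,\epsilon / S(f,P)\big)$, so that the ratio $p_{D_1}(z)/p_{D_2}(z)$ equals $\exp\!\big(\frac{\epsilon}{S(f,P)}\sum_i (|z_i - f_i(D_2)| - |z_i - f_i(D_1)|)\big)$.

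Next I would apply the triangle inequality termwise, $|z_i - f_i(D_2)| - |z_i - f_i(D_1)| \le |f_i(D_1) - f_i(D_2)|$, to get $p_{D_1}(z)/p_{D_2}(z) \le \exp\!\big(\frac{\epsilon}{S(f,P)}\,\|f(D_1)-f(D_2)\|_1\big)$. Then the key step is to invoke the definition of policy specific global sensitivity: since $(D_1,D_2)\in N(P)$, we have $\|f(D_1)-f(D_2)\|_1 \le \max_{(D_1',D_2')\in N(P)}\|f(D_1')-f(D_2')\|_1 = S(f,P)$, so the exponent is at most $\epsilon$ and $p_{D_1}(z) \le e^{\epsilon} p_{D_2}(z)$ pointwise. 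Integrating this pointwise bound over $z\in S$ yields $Pr[M(D_1)\in S] \le e^{\epsilon}\,Pr[M(D_2)\in S]$, which is exactly the $(\epsilon,P)$-Blowfish privacy condition.

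There is essentially no hard obstacle here — the argument is the textbook Laplace-mechanism proof with $S(f)$ swapped for $S(f,P)$ — but I would be mildly careful about two points. One is a typographical/consistency check: the statement writes $Lap(S(f,p)/\epsilon)$ with a lowercase $p$, which should be read as the policy $P$; I would make sure the scale parameter used is exactly $S(f,P)/\epsilon$. The other is that the theorem implicitly requires $S(f,P)$ to be finite (and positive) for the Laplace density to be well-defined; this holds whenever $N(P)\neq\emptyset$ and $f$ is bounded on the relevant pairs, and in the degenerate case $N(P)=\emptyset$ the privacy condition is vacuous. With those caveats noted, the pointwise-ratio-plus-integrate structure above completes the proof.
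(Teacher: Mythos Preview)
Your proposal is correct and is exactly the standard Laplace-mechanism argument with $S(f)$ replaced by the policy specific sensitivity $S(f,P)$. The paper itself does not spell out a proof for this theorem---it states it and moves on, treating it as an immediate adaptation of the classical proof (Definition~\ref{def:laplace})---so your writeup is, if anything, more detailed than what the paper provides.
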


\noindent{\bf 
When policies do not have constraints} ($P = (\dom, G, \inp_n)$), $(\epsilon, P)$-Blowfish differs from $\epsilon$-differential privacy only in the specification of sensitive information. Note that every pair $(D_1, D_2) \in N(P)$ differ in only one tuple when $P$ has no constraints. Therefore, the following result trivially holds.  
\begin{lemma}
Any mechanism $M$ that satisfies\\ $\epsilon$-differential privacy also satisfies $(\epsilon, (\dom, G, \inp_n))$-Blowfish privacy for all discriminative secret graphs $G$. 
\end{lemma}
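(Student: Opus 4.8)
The plan is to show that, in the unconstrained case $P = (\dom, G, \inp_n)$, the set of Blowfish neighbors $N(P)$ is contained in the set of differential-privacy neighbors $N$. Once this containment is established the lemma is immediate: $(\epsilon, P)$-Blowfish privacy only imposes the inequality $\pr{M(D_1) \in S} \le e^{\epsilon}\pr{M(D_2) \in S}$ on pairs $(D_1, D_2) \in N(P)$, and $\epsilon$-differential privacy already guarantees exactly this inequality on the (larger) set $N$, for every measurable $S \subseteq range(M)$.

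First I would unpack the definition of $N(P)$ with $\inp_Q = \inp_n$. Condition~1 is vacuous here, since every database of size $n$ lies in $\inp_n$. Take any $(D_1, D_2) \in N(P)$. Condition~2 gives an index $i$ and an edge $(x,y) \in E$ with the $i$th tuples of $D_1, D_2$ equal to $x, y$; in particular $D_1$ and $D_2$ disagree on tuple $i$. I then claim they disagree on no other tuple. Suppose they also disagreed at some index $j \neq i$, and let $D_3$ be obtained from $D_2$ by resetting its $j$th tuple to agree with $D_1$ (leaving everything else as in $D_2$); then $D_3 \in \inp_n = \inp_Q$. If the change at $j$ was itself a discriminative pair, then $T(D_1, D_3) \subsetneq T(D_1, D_2)$, contradicting clause~3(a). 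Otherwise $T(D_1, D_3) = T(D_1, D_2)$ while $\Delta(D_3, D_1) \subsetneq \Delta(D_2, D_1)$, contradicting clause~3(b). Either way we reach a contradiction, so $D_1$ and $D_2$ differ in exactly one tuple, hence $(D_1, D_2) \in N$. This shows $N(P) \subseteq N$, and the argument never used any property of $G$, so it holds for all discriminative secret graphs.

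I do not expect any real obstacle: the statement is essentially a bookkeeping exercise on the Blowfish neighbor definition. The only mildly delicate point is the minimality step above, where one must verify that the candidate $D_3$ genuinely has strictly smaller $\Delta(\cdot, D_1)$ without enlarging $T(\cdot, D_1)$; splitting into the two cases according to whether the disagreement at $j$ corresponds to an edge of $G$ handles this cleanly. Combining $N(P) \subseteq N$ with the hypothesis that $M$ is $\epsilon$-differentially private then yields $(\epsilon, (\dom, G, \inp_n))$-Blowfish privacy directly.
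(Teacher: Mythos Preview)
Your proof is correct and follows essentially the same route as the paper. The paper's argument is a one-liner: it invokes the equivalence of $\epsilon$-differential privacy with $(\epsilon,(\dom,K,\inp_n))$-Blowfish privacy for the complete graph $K$, from which $N(P_G)\subseteq N(P_K)=N$ is immediate since $E(G)\subseteq E(K)$. You instead prove $N(P)\subseteq N$ directly by unpacking the minimality clauses of the neighbor definition, which is exactly the ``one-tuple'' fact the paper asserts without proof in the paragraph preceding the lemma; your case split on whether the extra disagreement at $j$ is an edge of $G$ is the right way to discharge conditions~3(a) and~3(b).
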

\noindent The proof follows from the fact that $\epsilon$-differential privacy is equivalent to $(\epsilon, (\dom, K, \inp_n))$-Blowfish privacy, where $K$ is the complete graph.

In many cases, we can do better in terms of utility than differentially privacy mechanisms. It is easy to  see that $S(f, P)$ is never larger than the global sensitivity $S(f)$. Therefore, just using the Laplace mechanism with $S(f,P)$ can provide better utility.

For instance, consider a linear sum query $f_{\mathbf{w}} = \sum_{i=1}^n w_i x_i$, where $\mathbf{w} \in \mathbb{R}^n$ is a weight vector, and each value $x_i \in \dom = [a,b]$. For $G^{\rm full}$, the policy specific sensitivity is $(b-a) \cdot (\max_i w_i)$ the same as the global sensitivity. For $G^{d, \theta}$, where $d(x,y) = |x-y|$, the policy specific sensitivity is $\theta\cdot (\max_iw_i)$, which can be much smaller than the global sensitivity when $\theta \ll (b-a)$. 

As a second example, suppose ${\cal P}$ is a partitioning of the domain. If the policy specifies sensitive information partitioned by ${\cal P}$ ($G^{\cal P}$), then the policy specific sensitivity of $h_{\cal P}$ is 0. That is, the histogram of ${\cal P}$ or any coarser partitioning can be released without any noise. We will show more examples of improved utility under Blowfish policies in Sec~\ref{sec:noconstraints}.  

However, for histogram queries, the policy specific sensitivity for most reasonable policies (with no constraints) is $2$, the same as global sensitivity.\footnote{\small The one exception is partitioned sensitive information.} Thus, it cannot significantly improve the accuracy for histogram queries.

\newstuff{
Next, we present examples of two analysis tasks -- $k$-means clustering (Section~\ref{sec:kmeans}), and releasing cumulative histograms (Section~\ref{sec:cdf})-- for which we can design mechanisms for Blowfish policies without constraints with more utility (lesser error) than mechanisms that satisfy differential privacy. In $k$-means clustering we will see that using Blowfish policies helps reduce the sensitivity of intermediate queries on the data. In the case of the cumulative histogram workload, we can identify novel query answering strategies given a Blowfish policy that helps reduce the error. 
}


\section{K-means Clustering}
\label{sec:kmeans}
\newcommand{\skin}{{\sc skin segmentation}\xspace}
\newcommand{\BGR}{{\sf B, G, R}\xspace}
\newcommand{\adult}{{\sc adult}\xspace}
\newcommand{\tweet}{{\sc twitter}\xspace}
\newcommand{\caploss}{{\sf capital loss}\xspace}
\newcommand{\work}{{\sf work class}\xspace}
\newcommand{\sex}{{\sf sex}\xspace}
\renewcommand{\lat}{{\sf latitude}\xspace}
\renewcommand{\long}{{\sf longitude}\xspace}

\begin{figure*}[t]
\centering
\subfigure[\tweet, $G^{L_1, \theta}$]{
\includegraphics[scale=0.6]{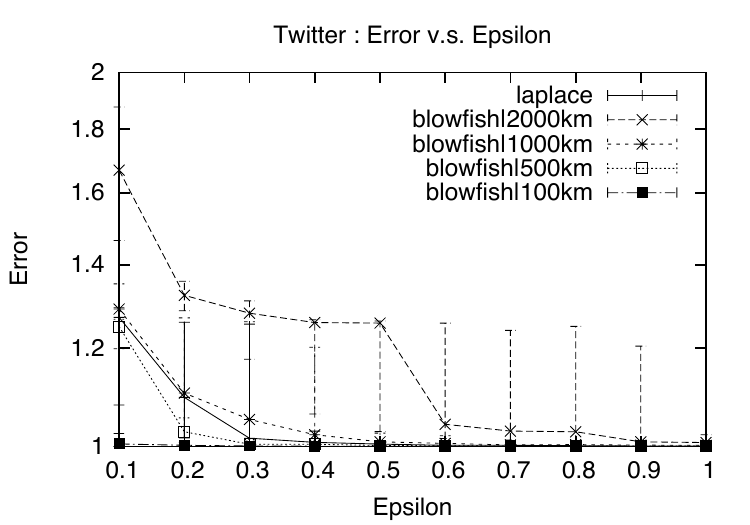}
\label{fig:kmeans_twitter}
}
\subfigure[\skin, $G^{L_1, \theta}$]{
\includegraphics[scale=0.6]{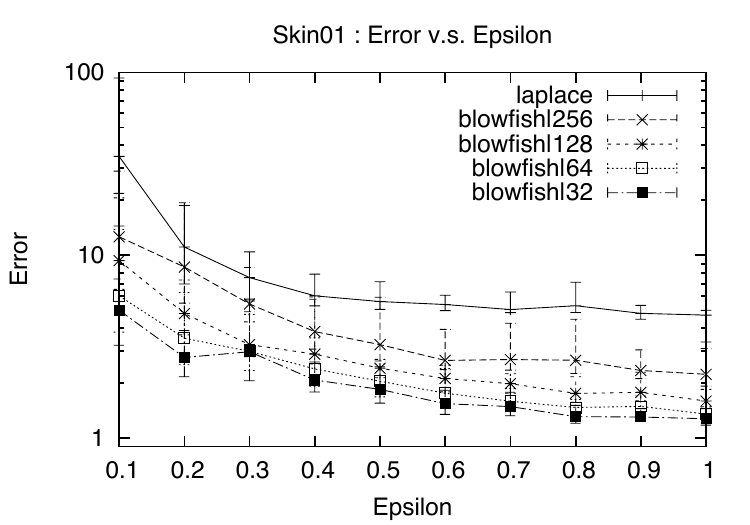}
\label{fig:kmeans_skin01}
}
\subfigure[Synthetic data set, $G^{L_1, \theta}$]{
\includegraphics[scale=0.6]{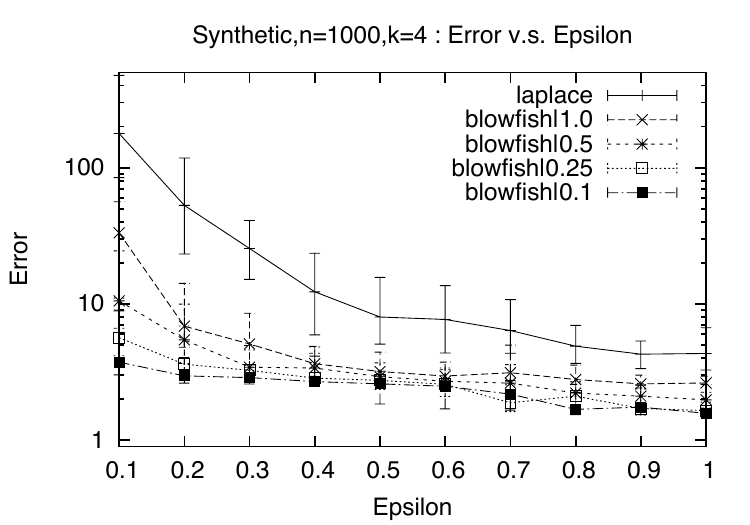}
\label{fig:kmeans_synth_1000_4}
}
\newstuff{
\subfigure[\skin, $G^{L_1, \theta}$]{
\includegraphics[scale=0.6]{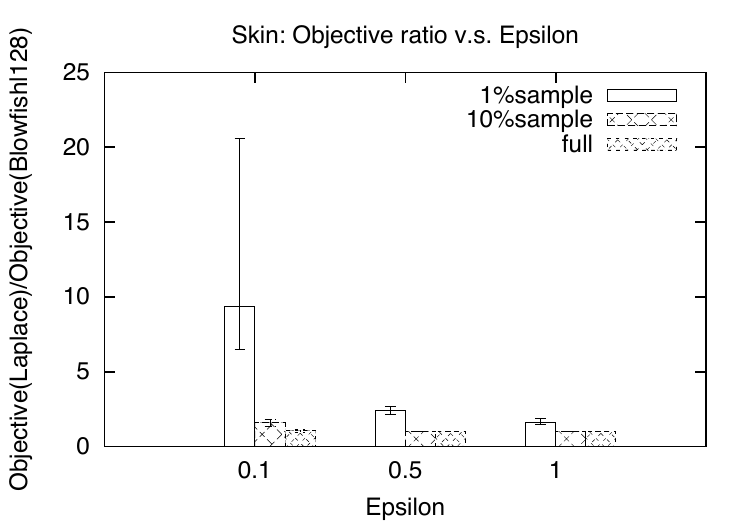}
\label{fig:kmeans_skinall}
}
\subfigure[All datasets, $G^{\mathrm{attr}}$]{
\includegraphics[scale=0.6]{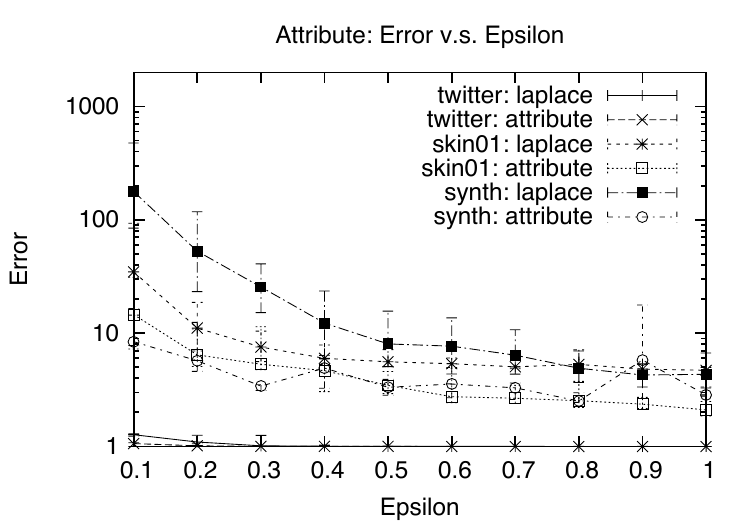}
\label{fig:kmeans_attr}
}
\subfigure[\tweet, $G^{\cal P}$]{
\includegraphics[scale=0.6]{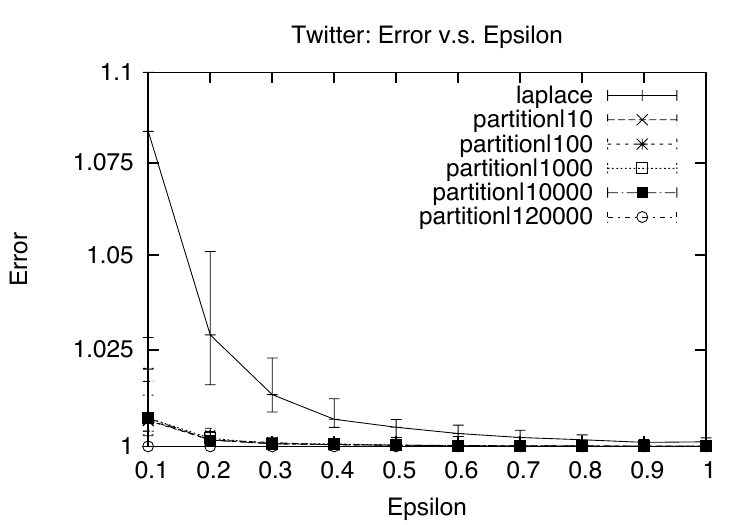}
\label{fig:kmeans_partition}
}
}
\caption{\label{fig:kmeans}K-Means: Error under Laplace mechanism v.s Blowfish privacy for different discriminative graphs.}
\end{figure*}

$K$-means clustering is widely used in many applications such as classification and feature learning. It aims to cluster proximate data together and is formally defined below.
\begin{definition}[$K$-means clustering]
Given a data set of $n$ points $(t_1,...,t_n) \in \dom^n$, $k$-means clustering aims to partition the points into $k \leq n$ clusters $S=\left\{ S_1,...,S_k\right\}$ in order to minimize  
\begin{equation} \label{eq:kmeans}
\sum_{i=1}^{k}\sum_{t_j\in S_i} ||t_j-\mu _i ||^2,
\end{equation}
where $\mu_i = \frac{1}{|S_i|}\sum_{t_j\in S_i} t_j$, and $||x-y||$ denotes $L_2$ distance.
\end{definition}

The non-private version of $k$-means clustering initializes the means/centroids $(\mu_1,...,\mu_k)$ (e.g. randomly) and updates them iteratively as follows:
1) assign each point to the nearest centroid; 
2) recompute the centroid of each cluster,
until reaching some convergence criterion or a fixed number of iterations.

The first differentially private $k$-means clustering algorithm was proposed by Blum et al.~\cite{pods:blum05} as SuLQ $k$-means. Observe that only two queries are required explicitly: 1) the number of points in each new cluster, $q_{size} = (|S_1|,...,|S_k|)$  and 2) the sum of the data points for each cluster, $q_{sum} = (\sum_{t_j\in S_1}t_j,...,\sum_{t_j\in S_k}t_j)$, to compute the centroid. The sensitivity of $q_{size}$ is $2$ (same as a histogram query). \newstuff{Let $d(\dom)$ denote the diameter of the domain, or the largest $L_1$ distance ($||x-y||_1$) between any two points $x, y \in \dom$. The sensitivity of $q_{sum}$ could be as large as the diameter $2\cdot d(\dom)$ since a tuple from $x$ to $y$ can only change the sums for two clusters by at most $d(\dom)$}. 

Under Blowfish privacy policies, the policy specific sensitivity of $q_{sum}$ can be much smaller than $|\dom|$ under Differential privacy (i.e. complete graph $G^{full}$ for Blowfish policies). Since $q_{size}$ is the histogram query, the sensitivity of $q_{size}$ under Blowfish is also $2$. 
\newstuff{
\begin{lemma}\label{lemma:sensitivity}
Policy specific global sensitivities of $q_{sum}$ under the attribute $G^{\mathrm{attr}}$,  $L_1$-distance $G^{(L_1, \theta)}$, and partition $G^{\cal P}$ discriminative graphs (from Section~\ref{sec:policy}) are smaller than the global sensitivity of $q_{sum}$ under differential privacy.
\end{lemma}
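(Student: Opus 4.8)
# Proof Proposal for Lemma~\ref{lemma:sensitivity}

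The plan is to bound the policy specific global sensitivity $S(q_{sum}, P)$ for each of the three discriminative graphs by using the fact (established earlier) that when a policy has no constraints, any pair $(D_1, D_2) \in N(P)$ differs in exactly one tuple, and that tuple's value changes along an edge $(x,y) \in E$ of the graph $G$. Since $q_{sum}$ is a vector with one component per cluster, and changing a single tuple from $x$ to $y$ can only affect the two clusters to which $x$ and $y$ are respectively assigned, we get $\|q_{sum}(D_1) - q_{sum}(D_2)\|_1 \le \|x\|_1^{(\text{moved out})} + \|y\|_1^{(\text{moved in})}$ summed appropriately; more precisely, the total $L_1$ change is at most $2\|x - y\|_1$ in the worst case where $x$ and $y$ land in different clusters, and exactly $\|x-y\|_1$ when they stay in the same cluster. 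So in all cases $S(q_{sum}, P) \le 2 \max_{(x,y) \in E} \|x - y\|_1$, whereas under differential privacy (complete graph) this is $2\, d(\dom)$, the diameter.

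First I would handle $G^{(L_1,\theta)}$: here every edge $(x,y)$ satisfies $\|x-y\|_1 = d(x,y) \le \theta$ by definition of the distance-threshold graph, so $S(q_{sum}, P) \le 2\theta$, which is strictly smaller than $2\,d(\dom)$ whenever $\theta < d(\dom)$ (and for a non-degenerate domain $\theta$ can be taken well below the diameter). Next, for $G^{\mathrm{attr}}$: an edge connects $\mathbf{x}$ and $\mathbf{y}$ that differ in exactly one attribute $A$, so $\|\mathbf{x} - \mathbf{y}\|_1 = |\mathbf{x}[A] - \mathbf{y}[A]| \le \max_A (\,\max A_j - \min A_j\,)$, the largest spread of any single attribute; this is at most $d(\dom)$ and is strictly smaller as soon as the domain has two or more attributes with positive range, since the diameter is the sum of the per-attribute spreads. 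Hence $S(q_{sum}, G^{\mathrm{attr}}) \le 2 \max_A \mathrm{range}(A) < 2\, d(\dom)$.

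Finally, for the partition graph $G^{\cal P}$: every edge $(x,y)$ lies within a single block $P_j$, so $\|x-y\|_1 \le \max_j \mathrm{diam}_{L_1}(P_j) =: d_{\max}({\cal P})$, the largest intra-block diameter, giving $S(q_{sum}, G^{\cal P}) \le 2\, d_{\max}({\cal P})$, which is at most the global diameter and strictly smaller whenever the partition is non-trivial (i.e., no single block is all of $\dom$). I would also note that for $G^{\cal P}$ one actually gets an even better bound in the special case where clusters respect the partition, but for the lemma as stated the intra-block diameter bound suffices.

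The main obstacle — really the only subtle point — is justifying the factor-of-2 / single-cluster distinction carefully: when the tuple moves from $x$ to $y$, if both $x$ and $y$ are assigned to the same cluster then only one component of $q_{sum}$ changes, by $\|x-y\|_1$; if they go to different clusters, two components change, one by $\le \|x\|_1$-worth and one by $\le \|y\|_1$-worth, but a cleaner bound is that the combined $L_1$ change is still at most $\|x-y\|_1 + 2\min(\ldots)$ — I would instead argue directly that the change is bounded by $2 \max_{(x,y)\in E}\|x-y\|_1$ by observing the out-change is $\|x\|_1$ and in-change is $\|y\|_1$ only if we (wrongly) ignored that we may re-center; the robust statement is simply that each of the (at most two) affected cluster sums changes by at most $\max_{(x,y)\in E}\|x-y\|_1$ in $L_1$ norm, since removing $x$ and adding $y$ to possibly-different sums each contributes a displacement controlled by the edge length after accounting for the assignment rule. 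I would state this bound cleanly and then plug in the three per-graph edge-length bounds above; comparison with $2\, d(\dom)$ then gives the claimed strict inequality in each case under the stated non-degeneracy of the domain.
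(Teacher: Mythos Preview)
Your approach is essentially the same as the paper's: bound $S(q_{sum},P)$ by $2\max_{(x,y)\in E}\|x-y\|_1$ and then read off the per-graph edge-length bounds ($\theta$, $\max_A \mathrm{range}(A)$, $\max_j d(P_j)$), comparing each to $2\,d(\dom)$. The paper's proof is terser---it simply asserts each of the three bounds in one sentence apiece---so your ``main obstacle'' paragraph on the cluster-reassignment case is more than the paper itself provides; note that this paragraph is somewhat tangled (the claim that ``each of the at most two affected cluster sums changes by at most $\max_{(x,y)\in E}\|x-y\|_1$'' is asserted rather than derived), but the paper handles that point at exactly the same level of rigor, so you are matching it.
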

\begin{proof}
First, in the attribute discriminative graph $G^{\mathrm{attr}}$, edges correspond to  $(x, y) \in \dom \times \dom$ that differ only in any one attribute. Thus, if $|A|$ denotes maximum distance between two elements in $A$, then the policy specific sensitivity of $q_{sum}$ under $G^{\mathrm{attr}}$ is $\max_A (2\cdot |A|) < 2 \cdot d(\dom)$. 
Next, suppose we use $G^{L_1, \theta}$, where  $x, y \in \dom$ are connected by an edge if $||x-y||_1 \leq \theta$. Thus, policy specific sensitivity of $q_{sum}$ is $2\theta$.

Finally, consider the policy specified using the partitioned sensitive graph $G^{\cal P}$, where ${\cal P} = \{P_1, P_2, \ldots, P_k\}$ is some data independent partitioning of the domain $\dom$. Here, an adversary should not distinguish between an individual's tuple taking a pair of values $x, y \in \dom$  only if $x$ and $y$ appear in the same partition $P_i$ for some $i$. Under this policy the sensitivity of $q_{sum}$ is at most $\max_{P \in {\cal P}} 2 \cdot d(P) < 2 \cdot d(\dom)$.
\end{proof}

Thus, by Theorem~\ref{thm:lappolicy}, we can use the SULQ $k$-means mechanism with the appropriate policy specific sensitivity for $q_{sum}$ (from Lemma~\ref{lemma:sensitivity}) and thus satisfy privacy under the Blowfish policy while ensuring better accuracy. }

\newstuff{
\subsection{Empirical Evaluation}\label{sec:kmeanseval}
}
We empirically evaluate the accuracy of $k$-means clustering for $(\epsilon, (\dom, G, \inp_n))$-Blowfish privacy on three data sets. The first two datasets are real-world datasets -- \tweet and \skin \footnote{{\small\url{http://archive.ics.uci.edu/ml/datasets/Skin+Segmentation}}}. The \tweet data set consists of a total of $193563$ tweets collected using Twitter API that all contained a \lat/\long within a bounding box of $50N,125W$ and $30N,110W$ (western USA) -- about $2222\times 1442$ square km. By setting the precision of \lat/\long coordinates to be 0.05, we obtain a 2D domain of size $400 \times 300$. 
The \skin data set consists of $245057$ instances. 
Three ordinal attributes are considered and they are \BGR values from face images of different classes. Each of them has a range from $0$ to $255$. To understand the effect of Blowfish policies on datasets of different sizes, we consider the full dataset {\sc  skin}, as well as a 10\% and 1\% sub-sample ({\sc skin10}, {\sc skin01}) of the data. 

The third dataset is a synthetic dataset where we generate 1000 points from $(0,1)^4$ with $k$ randomly chosen centers and a Gaussian noise with $\sigma(0,0.2)$ in each direction. 

In Figures \ref{fig:kmeans_twitter}-\ref{fig:kmeans_skinall}, we report the ratio of the mean of the objective value in Eqn. (\ref{eq:kmeans}) between private clustering methods including Laplace mechanism and Blowfish privacy with $\Spairs^{\dist, \theta}$, and the non-private k-means algorithm, for various values of $\epsilon=\left\{0.1,0.2,...,0.9,1.0\right\}$. For all datasets, $d(\cdot)$ is $L_1$ (or Manhattan) distance. The number of iterations is fixed to be 10 and the number of clusters is $k=4$. Each experiment is repeated 50 times to find mean, lower and upper quartile. 
Figure \ref{fig:kmeans_twitter} clusters according to \lat/\long of each tweet. We consider five different policies: $G^{full}$ (Laplace mechanism), $G^{d,2000km}$, $G^{d,1000km}$, $G^{d,500km}$, $G^{d,100km}$. Here, $\theta=100$km means that the adversary cannot distinguish locations within a 20000 square km region. Figure \ref{fig:kmeans_skin01} clusters the 1\% subsample {\sc skin01} based on three attributes: \BGR values and considers 5 policies as well: $G^{full}$, $G^{d,256}$, $G^{d,128}$, $G^{d,64}$ and $G^{d,32}$. Lastly, we also consider five policies for the synthetic dataset in Figure \ref{fig:kmeans_synth_1000_4}: $G^{full}$, $G^{d,1.0}$, $G^{d,0.5}$, $G^{d,0.25}$, $G^{d,0.1}$. 

From Figures \ref{fig:kmeans_twitter}-\ref{fig:kmeans_synth_1000_4}, we observe that the objective value of Laplace mechanism could deviate up to 100 times away from non-private method, but under Blowfish policies objective values could be less than 5 times that for non-private k-means. Moreover, the error introduced by Laplace mechanism becomes larger with higher dimensionality -- the ratio for Laplace mechanism in Figure Figure \ref{fig:kmeans_synth_1000_4} and \ref{fig:kmeans_skin01} (4 and 3 dimensional resp.) is much higher than that in the 2D \tweet dataset in Figure \ref{fig:kmeans_twitter}. From Figure \ref{fig:kmeans_skin01}, we observe that the error introduced by private mechanisms do not necessarily reduce monotonically as we reduce Blowfish privacy protection (i.e. reduce $\theta$). The same pattern is observed in Figure \ref{fig:kmeans_twitter} and Figure \ref{fig:kmeans_synth_1000_4}. One possible explanation is that adding a sufficient amount of noise could be helpful to get out of local minima for clustering, but adding too much noise could lead to less accurate results. 

\newstuff{
To study the interplay between dataset size and Blowfish, we plot (Figure~\ref{fig:kmeans_skinall}) for {\sc skin}, {\sc skin10} and {\sc skin01} the ratio of the objective value attained by the Laplace method to the objective value attained by one of the Blowfish policies: $G^{d, 128}$. In all cases, we see an improvement in the objective under Blowfish. The improvement in the objective is smaller for larger $\epsilon$ and larger datasets (since the Laplace mechanism solution is close to the non-private solutions on {\sc skin}).

Finally, Figures~\ref{fig:kmeans_attr} and \ref{fig:kmeans_partition} summarize our results on the $G^{\mathrm{attr}}$ and $G^{\cal P}$ discriminative graphs. Figure~\ref{fig:kmeans_attr} shows that under the $G^{\mathrm{attr}}$ Blowfish policy, the error decreases by an order of magnitude compared to the Laplace mechanism for {\sc skin01} and the synthetic dataset due to higher dimensionality and small dataset size. On the other hand, there is little gain by using $G^{\mathrm{attr}}$ for the larger 2D \tweet dataset.

Figure~\ref{fig:kmeans_partition} shows ratio of the objective attained by the private methods to that of the non-private k-means under $G^{\cal P}$, for partitions ${\cal P}$ of different sizes. In each case, the 300x400 grid is uniformly divided; e.g., in $\tt{partition|100}$, we consider a uniform partitioning in 100 coarse cells, where each new cell contains 30x40 cells from the original grid. Thus an adversary will not be able to tell whether an individual's location was within an area spanned by the 30x40 cells (about 36,300 sq km). {\tt partition|120000} corresponds to the original grid; thus we only protects pairs of locations within each cell in the original grid (about 30 sq km). We see that the objective value for Blowfish policies are smaller than the objective values under Laplace mechanisms, suggesting more accurate clustering. We also note that under {\tt partition|120000}, we can do the clustering exactly, since the sensitivity of both $q_{size}$ and $q_{sum}$ are 0. 

To summarize, Blowfish policies allow us to effectively improve utility by trading off privacy. In certain cases, we observe that Blowfish policies attain an objective value that is close to 10 times smaller than that for the Laplace mechanism. The gap between Laplace and Blowfish policies increases with dimensionality, and reduces with data size.  
}

\section{Cumulative Histograms}
\label{sec:cdf}
In this section, we develop novel query answering strategies for two workloads -- cumulative histograms and range queries.
Throughout this section, we will use Mean Squared Error as a measure of accuracy/error defined in Def~\ref{def:mse}. 
\begin{definition}[Cumulative Histogram]
Consider a domain $\dom = \left\{x_1,...,x_{|\dom |} \right\}$ that has a total ordering $x_1\leq...\leq x_{|\dom |}$. Let $c(x_i)$ denote the number of times $x_i$ appears in the database $D$. Then, the cumulative histogram of $\dom $, denoted by $S_{\dom}(\cdot)$ is a sequence of cumulative counts
\begin{equation}
\left\{ s_i \mid s_i=\sum_{j=1}^i c(x_j), \forall i=1,...,|\dom |  \right\}
\end{equation}
\end{definition}

Since we know the total size of the dataset $|D|=n$, dividing each cumulative count in $S_{\dom}(\cdot)$ by $n$ gives us the cumulative distribution function (CDF) over $\dom$. Releasing the CDF has many applications including computing quantiles and histograms, answering range queries and constructing indexes (e.g. $k$-d tree). This motivates us to design a mechanism for releasing cumulative histograms.

The cumulative histogram has a global sensitivity of $|\dom |-1$ because all the counts in cumulative histogram except $s_{|\dom |}$ will be reduced by 1 when a record in $D$ changes from $x_1$ to $x_{|\dom |}$. Similar to $k$-means clustering, we could reduce the sensitivity of cumulative histogram $S_{\dom}(\cdot)$ by specifying the sensitive information, such as $\Spairs^{\cal P}$ and $\Spairs^{\dist, \theta}$. For this section, we focus on $\Spairs^{\dist, \theta}$, where $d(\cdot)$ is the $L1$ distance on the domain and we assume that all the domains discussed here have a total ordering. 

\subsection{Ordered Mechanism}
Let us first consider a policy $P_\theta=(\dom,G^{d,\theta},\inp_n)$ with $\theta=1$. The discriminative secret graph is a line graph, $G^{d,1}=(V,E)$, where $V=\dom $ and $E=\left\{(x_i,x_{i+1})|\forall i=1,...,|\dom|-1 \right\}$. This means that only adjacent domain values $(x_i,x_{i+1}) \in \dom \times \dom $ can form a secret pair. Therefore, the policy specific sensitivity of $S_{\dom}(\cdot)$ for a line graph is 1. Based on this small sensitivity, we propose a mechanism, named Ordered Mechanism $M_{G^{d,1}}^O$ to perturb cumulative histogram $S_{\dom}(\cdot)$ over line graph $G^{d,1}$ in the following way. For each $s_i$, we add $\eta_i \sim Laplace(\frac{1}{\epsilon})$ to get $\tilde{s}_i$ to ensure $(\epsilon, P)$-Blowfish privacy. Each $\tilde{s}_i$ has an error with an expectation equals to $\frac{2}{\epsilon^2}$. Note that Theorem~\ref{thm:lappolicy} already ensures that releasing $\tilde{s}_i$'s satisfies $(\epsilon, P_1)$-Blowfish privacy. Furthermore, observe that the counts in $S_{\dom}(\cdot)$ are in ascending order. Hence, we can boost the accuracy of $\tilde{S}_{\dom}(\cdot)$ using constrained inference proposed by Hay et al. in~\cite{vldb:HayRMS10}. 
In this way, the new cumulative histogram, denoted by $\hat{S}_{\dom}(\cdot)$, satisfies the ordering constraint and has an error $\error_{\hat{S}}=O(\frac{p\log^3|\dom |}{\epsilon^2})$, where $p$ represents the number of distinct values in $S_{\dom}(\cdot)$~\cite{vldb:HayRMS10}. Note that, if we additionally enforce the constraint that $s_1>0$, then all the counts are also positive. In particular, when $p=1$, $\error_{\hat{S}}=O(\frac{\log^3 |\dom |}{\epsilon^2})$ and when $p=|\dom |$, $\error_{\hat{S}} = O(\frac{|\dom |}{\epsilon^2})$. Many real datasets are sparse, i.e. the majority of the domain values have zero counts, and hence have fewer distinct cumulative counts, i.e. $p \ll |\dom |$. This leads to much smaller $\error_{\hat{S}}$ compared to $\error_{\tilde{S}}$. The best known strategy for releasing the cumulative histogram is using the hierarchical mechanism~\cite{vldb:HayRMS10}, which results in a total error of $O(\frac{|\dom |\log^3|\dom |}{\epsilon^2})$. Moreover, the SVD bound~\cite{edbt:chaoli13} suggests that no strategy can release the cumulative histogram with $O(\frac{|\dom|}{\epsilon^2})$ error. Thus under the line graph policy, the Ordered Mechanism is a much better strategy for cumulative histogram.

One important application of cumulative histogram is answering range query, defined as follows.   
\begin{definition}[Range Query]
Let $D$ has domain $\dom = \left\{x_1,...,x_{|\dom |} \right\}$, where $\dom$ has a total ordering. A range query, denoted by $q[x_i,x_j]$  counts the number of tuples falling within the range $[x_i,x_j]$ where $x_i,x_j\in\dom$ and $x_i\leq x_j$.
\end{definition}

Range queries can be directly answered using cumulative histogram $\hat{S}_{\dom}(\cdot)$, $q[x_i,x_j]=\hat{s}_j-\hat{s}_{i-1}$. As each range query requires at most two noisy cumulative counts, it has an error smaller than $2\cdot \frac{2}{\epsilon^2}$ (even without constrained inference). Hence, we have the following theorem.
\begin{theorem} Consider a policy ($\dom, G^{d,1}, \inp_n$), where $G^{d,1}$ is a line graph. Then the expected error of a range query $q[x_i,x_j]$ for Ordered Mechanism is given by:
\begin{equation}
 \error_{q[x_i,x_j],M_{G^{d,1}}^O} \leq 4/\epsilon^2
\end{equation}
\end{theorem}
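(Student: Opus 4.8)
The plan is to reduce the statement to an elementary variance computation: the Ordered Mechanism answers $q[x_i,x_j]$ as a difference of at most two of the independently perturbed cumulative counts, so its error is just the variance of that difference.

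First I would recall the mechanism's definition from the preceding discussion: for every $\ell = 1, \ldots, |\dom|$ it releases $\tilde{s}_\ell = s_\ell + \eta_\ell$, where the $\eta_\ell$ are i.i.d.\ draws from $Laplace(1/\epsilon)$. Adopting the convention $s_0 = 0$ with no associated noise (there is no cumulative count before $x_1$), the mechanism reports a range query as $\tilde{q}[x_i,x_j] = \tilde{s}_j - \tilde{s}_{i-1}$. Since $q[x_i,x_j] = s_j - s_{i-1}$ holds exactly, the additive error on this query is $\tilde{q}[x_i,x_j] - q[x_i,x_j] = \eta_j - \eta_{i-1}$.

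Next I would plug this into the mean-squared-error of Definition~\ref{def:mse}, which for a scalar-valued query is simply $\error_{q[x_i,x_j], M_{G^{d,1}}^O} = \mathbb{E}\big[(\eta_j - \eta_{i-1})^2\big]$. As $\eta_j$ and $\eta_{i-1}$ are independent and each has mean $0$, this expands to $\mathrm{Var}[\eta_j] + \mathrm{Var}[\eta_{i-1}]$; using $\mathrm{Var}[Laplace(b)] = 2b^2$ with $b = 1/\epsilon$, each summand equals $2/\epsilon^2$, giving $4/\epsilon^2$. The only point requiring care is the boundary case $i = 1$: there $s_{i-1} = s_0 = 0$ carries no noise, only a single noisy count is used, and the error is $2/\epsilon^2 \le 4/\epsilon^2$, so the claimed bound holds uniformly. (If one instead answers the query from the constrained-inference estimates $\hat{S}_\dom$ of Hay et al.~\cite{vldb:HayRMS10}, the bound is unchanged, since that post-processing is an orthogonal projection onto the monotone cone — which contains the true $S_\dom$ — and hence is non-expansive, so it cannot increase the expected squared error.)

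There is no substantive obstacle here: the argument is a one-line variance calculation once the query is written as a difference of two noised counts; the only things to state carefully are the $i=1$ boundary convention and, if the $\hat{S}_\dom$ version is intended, the non-expansiveness of the constrained-inference step.
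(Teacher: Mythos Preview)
Your proposal is correct and matches the paper's own argument, which is just the one sentence preceding the theorem: ``As each range query requires at most two noisy cumulative counts, it has an error smaller than $2\cdot \frac{2}{\epsilon^2}$ (even without constrained inference).'' Your write-up is simply a more careful unpacking of that sentence --- explicit variance computation, the $i=1$ boundary case, and the (optional) non-expansiveness remark for $\hat{S}_\dom$ --- all of which are fine additions.
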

This error bound is independent of $|\dom |$, much lower than the expected error using hierarchical structure with Laplace mechanism to answer range queries, $\error_{q[x_i,x_j],lap} = \frac{\log^3|\dom |}{\epsilon^2}$. Again, the SVD bound~\cite{edbt:chaoli13} suggests that no differentially private strategy can answer each range query with $O(\frac{1}{\epsilon^2})$ error. Other applications of $S_{\dom}(\cdot)$ including computing quantiles and histograms and constructing indexes (e.g. $k$-d tree) could also use cumulative histogram in a similar manner as range query to obtain a much smaller error by trading utility with privacy under $G^{d,1}$. Next, we describe the {\em ordered hierarchical mechanism} that works for general graphs, $G^{d,\theta}$.  


\subsection{Ordered Hierarchical Mechanism}
\label{sec:rangequery}
\begin{figure*}
\centering
\subfigure[Ordered Hierarchical Tree  $\theta = 4$]{
\includegraphics[scale=0.2]{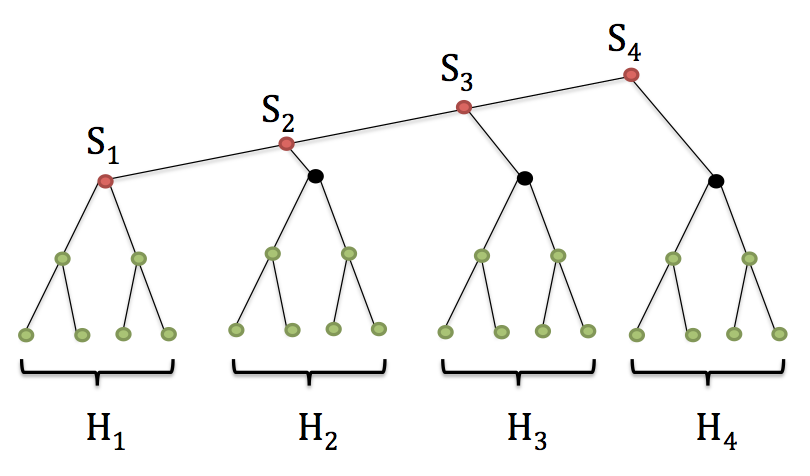}
\label{fig:oh}
}
\subfigure[Adult - capital loss]{
\includegraphics[scale=0.7]{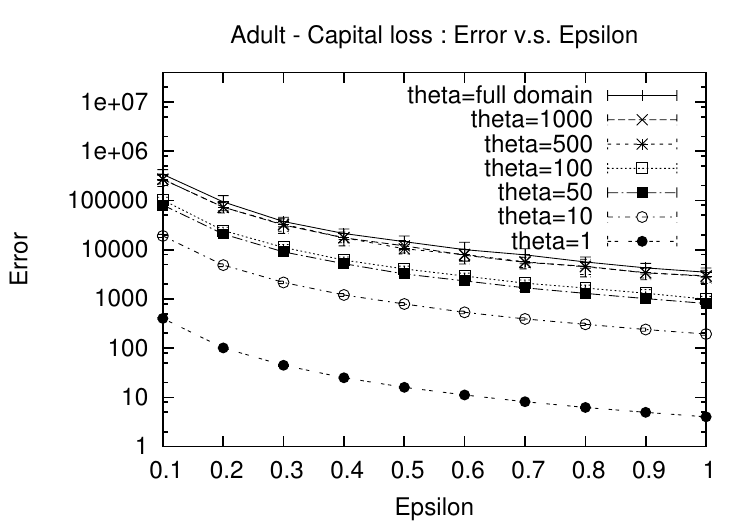}
\label{fig:range_mean_adult}
}
\subfigure[Twitter - latitude]{
\includegraphics[scale=0.7]{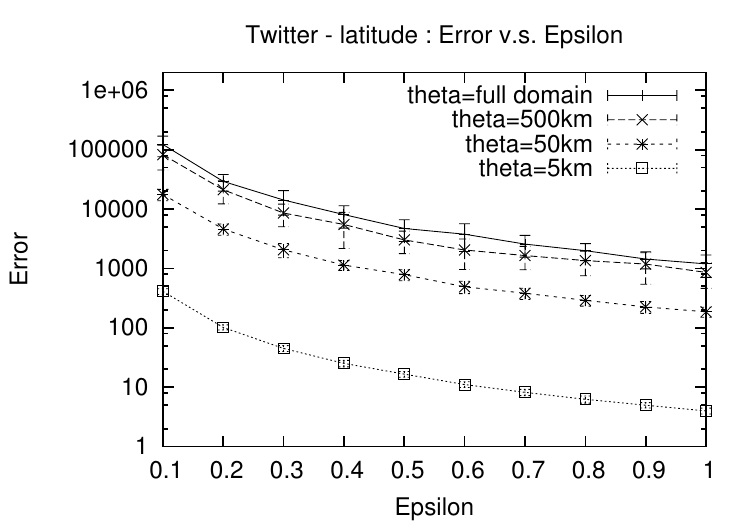}
\label{fig:range_mean_tweetlat}
}
\caption{\label{fig:range}Ordered Hierarchical Mechanism. \ref{fig:oh} gives an example of $OH$, where $\theta=4$. \ref{fig:range_mean_adult} and \ref{fig:range_mean_tweetlat} shows privacy-utility trade-offs for range query, using $G^{d,\theta}$ for sensitive information.}
\end{figure*}

For a more general graph $G^{d,\theta}=(V,E)$, where $V=\dom$ and $E=\left\{(x_i,x_{i\pm1}),..,(x_i,x_{i\pm\theta})|\forall i=1,...,|\dom | \right\}$, the sensitivity of releasing cumulative histogram $S_{\dom}(\cdot)$ becomes $\theta$. The Ordered Mechanism would add noise from $Lap(\frac{\theta}{\epsilon})$ to each cumulative counts $s_i$. The total error in the released cumulative histogram and range queries would still be asymptotically smaller than the error achieved by any differentially private mechanism for small $\theta$. However, the errors become comparable as the $\theta$ reaches $\log|\dom|$, and the Ordered Mechanism's error exceeds the error from the hierarchical mechanism when $\theta = O (\log^{3/2} |\dom |)$. In this section, we present a hybrid strategy for releasing cumulative histograms (and hence range queries), called Ordered Hierarchical Mechanism, that always has an error less than or equal to the hierarchical mechanism for all $\theta$.



Various hierarchical methods have been proposed in the literature~\cite{vldb:HayRMS10, icde:XiaoWG10, pods:LiHRMM10, icde:XuZXYY12, pvldb:qardaji13}. A basic hierarchical structure is usually described as a tree with a regular fan-out $f$. The root records the total size of the dataset $D$, i.e. the answer to the range query $q[x_1, x_{|\dom|}]$. This range is then partitioned into $f$ intervals. If $\delta = \lceil \frac{|\dom|}{f} \rceil$, the intervals are $[x_1,x_{\delta}]$, $[x_{\delta+1},x_{2\delta}]$,..., $[x_{|\dom|-\delta+1},x_{|\dom|}]$ and answers to the range queries over those intervals are recorded by the children of the root node. Recursively, the interval represented by the current node will be further divided into $f$ subintervals. Leaf nodes correspond to unit length interval $q[x_i,x_i]$. The height of the tree is $h=\lceil \log_f |\dom| \rceil$. In the above construction, the counts at level $i$  are released using the Laplace mechanism with parameter $\frac{2}{\epsilon}$, and $\sum_i \epsilon_i = \epsilon$. Prior work has considered distributing the $\epsilon$ uniformly or geometrically~\cite{icde:CormodePSSY12}. We use uniform budgeting in our experiments.

Inspired by ordered mechanism for line graph, we propose a hybrid structure, called Ordered Hierarchical Structure $OH$ for $(\epsilon, (\dom, G^{d,\theta}, \inp_n))$-Blowfish privacy. As shown in Figure~\ref{fig:oh}, $OH$ has two types of nodes, $S$ nodes and $H$ nodes. The number of $S$ nodes is $k=\ceil{\frac{n}{\theta}}$, which is dependent on the threshold $\theta$. 
In this way, we could guarantee a sensitivity of 1 among the $S$ nodes. 
Let us represent $S$ nodes as $s_1,...,s_k$, where $s_1=q[x_1,x_{\theta}]$,..., $s_{k-1}=q[x_1,x_{(k-1)\theta}]$, $s_k=q[x_1,x_{|\dom|}]$. 
Note that the $s_i$ nodes here are not the same as the count for cumulative histogram, so we will use range query $q[x_1,x_i]$ to represent the count $s_i$ in a cumulative histogram.
The first $S$ node, $s_1$ is the root of a subtree consisting of $H$ nodes. This subtree is denoted by $H_1$ and is used for answering all possible range queries within this interval $[x_1,x_{\theta}]$.
For all $1<i\leq k$, $s_i$ has two children: $s_{i-1}$ and the root of a subtree made of $H$ nodes, denoted by $H_i$. Similarly, it also has a fan-out of $f$ and represents counts for values $[(i-1)\theta+1, i\theta]$. We denote the height of the subtree by $h=\ceil{\log_f\theta}$. Using this hybrid structure, we could release cumulative counts in this way: $q[x_1,x_{l\theta}]+q[x_{l\theta+1},x_j]$, where $l\theta\leq j < (l+1)\theta$. Here $q[x_1,x_{l\theta}]$ is answered using $s_l$ and $q[x_{l\theta+1},x_j]$ is answered using $H_l$. Then any range query could be answered as $q[x_i,x_j]=q[x_1,x_j]-q[x_1,x_{i-1}]$.

\stitle{Privacy Budgeting Strategy} 
Given total privacy budget $\epsilon$, we denote the privacy budget assigned to all the $S$ nodes by $\epsilon_S$ and to all the $H$ nodes by $\epsilon_H$. When a tuple change its value from $x$ to $y$, where $d_G(x,y)\leq \theta$ , at most one $S$ node changes its count value and at most $2h$ $H$ nodes change their count values. Hence, for $i=2,...,k$, we add Laplace noise drawn from $Lap(\frac{1}{\epsilon_S})$ to each $s_i$ and we add Laplace noise drawn from $Lap(\frac{2h}{\epsilon_H})$ to each $H$ node in the subtree $H_i$. As $S_1$ is the root of $H_1$, we assign $\epsilon = \epsilon_S + \epsilon_H$ to the tree $H_1$ and hence we add Laplace noise drawn from $Lap(\frac{2h}{\epsilon_H+\epsilon_S})$ to each $H$ node in $H_1$, including $s_1$.
In this way, we could claim that this $OH$ tree satisfies  $(\epsilon, (\dom, G^{d,\theta}, \inp_n))$-Blowfish privacy. 
When $\theta = |\dom|$, $H_1$ is going to be the only tree to have all the privacy budget. This is equivalent to the hierarchical mechanism for differential privacy.
\begin{theorem}
Consider a policy $P=(\dom, G^{d,\theta}, \inp_n)$. (1) The Ordered Hierarchical structure satisfies $(\epsilon, P)$-Blowfish privacy. (2) The expected error of releasing a single count in cumulative histogram or answering a range query this structure over $\dom$ is ,
\begin{equation} \label{eq:error_range}
\error_{q[x_i,x_j],M_{G^{d,\theta}}^{OH}} = O \left( \frac{|\dom |-\theta}{|\dom |\epsilon_S^2} + \frac{(f-1)\log_f^3\theta}{\epsilon_H^2}\right)
\end{equation}
\end{theorem}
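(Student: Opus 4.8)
The plan is to prove the two claims separately: claim~(1) is a short but delicate budgeting argument, and claim~(2) is a routine variance computation plus the known error of the hierarchical mechanism.

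\textbf{Privacy.} The $OH$ mechanism outputs a fixed vector of tree‑node counts, each perturbed by independent Laplace noise, so by the standard argument behind Theorem~\ref{thm:lappolicy} it suffices to show that for every $(D_1,D_2)\in N(P)$ we have $\sum_{v}\lvert v(D_1)-v(D_2)\rvert/b_v\le\epsilon$, where $v$ ranges over the tree nodes and $b_v$ is the Laplace scale used at $v$. Since $P$ has no constraints, $D_1$ and $D_2$ differ in one tuple whose value moves from $x$ to $y$ with $d_{G^{d,\theta}}(x,y)\le\theta$; write the domain as the disjoint length‑$\theta$ blocks $B_1,B_2,\dots$ that $H_1,H_2,\dots$ cover, where $B_l$ ends at the index on which $s_l=q[x_1,x_{l\theta}]$ is computed. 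I would first record two elementary facts: (a) for $l\ge1$, $s_l$ changes iff the index $l\theta$ lies strictly between $x$ and $y$, so at most one $s_l$ changes (and by exactly $1$), and moreover if that $l$ is $\ge2$ then $x$ and $y$ necessarily lie in the two \emph{adjacent} blocks $B_l$ and $B_{l+1}$; (b) inside any subtree $H_i$ only the root‑to‑leaf path of a leaf falling in $B_i$ can change, i.e.\ at most $h$ nodes, each by $1$, and any common ancestor of both $x$ and $y$ inside a subtree is unchanged.

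Then I would split into three cases by where $x,y$ fall. If $x,y\in B_1$: all changes sit inside $H_1$ (at most $2h$ nodes, scale $2h/(\epsilon_H+\epsilon_S)$) and by fact~(a) no $s_l$ changes at all, so the sum is $\le 2h\cdot\frac{\epsilon_H+\epsilon_S}{2h}=\epsilon$. If exactly one of $x,y$ lies in $B_1$ and the other in $B_2$ (the only possibilities once one point is in $B_1$): at most $h$ nodes of $H_1$ change (scale $2h/(\epsilon_H+\epsilon_S)$, this path may include $s_1$, the root of $H_1$) and at most $h$ nodes of $H_2$ change (scale $2h/\epsilon_H$), while by fact~(a) no $s_l$ with $l\ge2$ changes, so the sum is $\le\frac{\epsilon_H+\epsilon_S}{2}+\frac{\epsilon_H}{2}\le\epsilon$. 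Finally, if $x,y$ both lie in blocks of index $\ge2$ (the same block, or two adjacent ones): all affected $H$‑nodes lie in at most two subtrees of index $\ge2$ (scale $2h/\epsilon_H$, at most $2h$ of them) and at most one $s_l$ with $l\ge2$ changes (scale $1/\epsilon_S$), so the sum is $\le 2h\cdot\frac{\epsilon_H}{2h}+\epsilon_S=\epsilon$. In every case the bound is $\epsilon$, which is $(\epsilon,P)$‑Blowfish privacy; the informal statement ``at most one $S$ node and at most $2h$ $H$ nodes change'' is exactly the worst case used for budgeting, and setting $\theta=\lvert\dom\rvert$ collapses $OH$ to the ordinary hierarchical tree with budget $\epsilon$.

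\textbf{Error.} A $\mathrm{Lap}(b)$ variable has variance $2b^2$, so $\tilde s_l$ has variance $2/\epsilon_S^2$, an $H$‑node of $H_i$ with $i\ge2$ has variance $8h^2/\epsilon_H^2$, and an $H$‑node of $H_1$ has variance $8h^2/(\epsilon_H+\epsilon_S)^2$. For a cumulative count $q[x_1,x_j]$ with $l\theta\le j<(l+1)\theta$: if $l=0$ it is answered entirely inside $H_1$, and the standard analysis of the hierarchical mechanism with constrained inference~\cite{vldb:HayRMS10,icde:CormodePSSY12} over a domain of size $\theta$ gives expected squared error $O\!\big(\tfrac{(f-1)\log_f^{3}\theta}{(\epsilon_H+\epsilon_S)^2}\big)=O\!\big(\tfrac{(f-1)\log_f^{3}\theta}{\epsilon_H^2}\big)$; if $l\ge1$ it is answered as $\tilde s_l$ plus an independent noisy estimate of the prefix range query $q[x_{l\theta+1},x_j]$ drawn from a single size‑$\theta$ subtree of budget $\epsilon_H$, giving expected squared error $\tfrac{2}{\epsilon_S^2}+O\!\big(\tfrac{(f-1)\log_f^{3}\theta}{\epsilon_H^2}\big)$. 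A range query $q[x_i,x_j]=q[x_1,x_j]-q[x_1,x_{i-1}]$ costs at most twice this. Averaging over the $\lvert\dom\rvert$ cumulative counts, the fraction whose index satisfies $j\ge\theta$ (equivalently $l\ge1$) is $\tfrac{\lvert\dom\rvert-\theta}{\lvert\dom\rvert}+o(1)$, so the averaged expected squared error is $\tfrac{\lvert\dom\rvert-\theta}{\lvert\dom\rvert}\cdot\tfrac{2}{\epsilon_S^2}+O\!\big(\tfrac{(f-1)\log_f^{3}\theta}{\epsilon_H^2}\big)$, which is the claimed bound; cross‑structure constrained inference over the full $OH$ tree can only reduce it further.

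\textbf{Main obstacle.} The delicate step is the privacy budgeting: the allocation works only because the event that forces an $S$‑node change, or that spills changes into the doubly‑budgeted subtree $H_1$, is precisely a move across a block boundary, and such a move can disturb at most a single root‑to‑leaf path in each of exactly two blocks—hence it uses at most half of each subtree's $H$‑budget, leaving exactly the slack needed to absorb the extra $S$‑node cost (or $H_1$'s larger scale). Verifying that no neighbor pair simultaneously triggers a full $2h$‑node change in a subtree \emph{and} an $S$‑node change, and that a full change confined to $H_1$ never coexists with an $S$‑node change, is the crux of the argument; the error estimate is then bookkeeping on top of the known hierarchical‑mechanism bounds.
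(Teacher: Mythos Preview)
Your proof is correct and follows essentially the same approach as the paper's sketch: for privacy, at most one $S$ node and $2h$ $H$ nodes change under any neighbor pair, then composition gives the $\epsilon$ bound; for error, the $S$-node variance $2/\epsilon_S^2$ combines with the standard $O((f-1)\log_f^3\theta/\epsilon_H^2)$ hierarchical-mechanism error over a size-$\theta$ subtree, and averaging over the $|\dom|$ cumulative counts introduces the $(|\dom|-\theta)/|\dom|$ factor. Your privacy case analysis is in fact more careful than the paper's one-line sketch, which invokes sequential composition without explicitly verifying the subtle budgeting of $H_1$ (the scale $2h/(\epsilon_H+\epsilon_S)$ there is only sound because, as you observe, a move touching $H_1$ either stays inside $B_1$ and uses no $S$-budget, or crosses into $B_2$ and uses only half of each subtree's $H$-budget).
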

\begin{proof} (sketch) 1) Increasing count of $x$ by 1 and decreasing count of $y$ by 1, where $d(x,y)<\theta$ only affect the counts of at most one $S$ node and $2h$ $H$ nodes. Since we draw noise from $Lap(\frac{1}{\epsilon_S})$ for $S$ nodes and from $Lap(\frac{2h}{\epsilon_H})$ for $H$ nodes (where $\epsilon=\epsilon_S+\epsilon_H$), we get $(\epsilon, P)$-Blowfish privacy based on sequential composition.\\
 2) Consider all the counts in cumulative histogram, and there are $|\dom|$ of them, only $|\dom|-\theta$ requires $S$ nodes. Each $S$ node has a error of $\frac{2}{\epsilon_S ^2}$. This gives the first fraction in Eqn. (\ref{eq:error_range}). On average, the number of $H$ nodes used for each count in cumulative histogram is bounded by the height of the $H$ tree and each $H$ node has an error of $\frac{8h^2}{\epsilon_H ^2}$, which explains the second fraction in Eqn. (\ref{eq:error_range}).  
\end{proof}

Each range query $q[x_i,x_j]$ requires at most 2 counts from cumulative histogram and its exact form of expected error is shown below,
\begin{eqnarray} \label{eqn:range_ave_error}
\error_{q[x_i,x_j],M_{G^{d,\theta}}^{OH}} = \frac{c_1}{\epsilon_S^2} +\frac{c_2}{\epsilon_H^2}, 
\end{eqnarray}
where $c_1 = \frac{4(|\dom| -\theta)}{|\dom |+1}$ and $c_2 = \frac{8(f-1)\log_f\theta^3|\dom |}{|\dom |+1}$.
Given $\theta$ and $f$, at $\epsilon_S^* = \frac{c_1^{1/3}}{c_1^{1/3} + c_2^{1/3}}\epsilon $, 
we obtain the minimum 
\begin{eqnarray} \label{eqn:range_ave_error_opt}
 \error_{q[x_i,x_j],M_{G^{d,\theta}}^{OH}}^* = \frac{(c_1^{1/3} + c_2^{1/3})^3}{\epsilon^2}
\end{eqnarray}
In particular, when $\theta=|\dom |$, $c_1 = 0$, this is equivalent to the classical hierarchical mechanism without $S$ nodes and we have $\error_{q[x_i,x_j],M_{G^{d,\theta}}^{OH}} = O(\frac{\log^3|\dom |}{\epsilon^2})$. When $\theta=1$, $c_2 = 0$, this is the pure Ordered Mechanism without using $H$ nodes and $\error_{q[x_i,x_j],M_{G^{d,\theta}}^{OH}}=O(\frac{1}{\epsilon^2})$.

\stitle{Complexity Analysis} The complexity of construction of the hybrid tree $OH$ and answering range query are $O(|\dom |)$ and $O(\log\theta)$ respectively, where $\theta \leq |\dom|$, which is not worse than the classical hierarchical methods.

\newstuff{
\subsection{Empirical Evaluation}\label{sec:cdfeval}
}
We empirically evaluate the error of range queries for the Ordered Hierarchical Mechanism with $(\epsilon, (\dom,G^{d,\theta}, \inp_n))$-Blowfish privacy on two real-world datasets -- \adult and \tweet. The \adult data set\footnote{\url{http://mlr.cs.umass.edu/ml/datasets/Adult}} consists of Census records of $48842$ individuals. We consider the ordinal attribute \caploss with a domain size of $4357$. The \tweet data set is the same dataset used for $k$-means clustering (Sec~\ref{sec:kmeanseval}). Here, in order to have a total ordering for the dataset, we project the \tweet data set on its \lat with a domain size of $400$, around $2222$ km. The fan-out $f$ is set to be 16 and each experiment is repeated 50 times. 
Figure \ref{fig:range_mean_adult} shows the mean square error $\error$ of $10000$ random range queries for various values of $\epsilon=\left\{0.1,0.2,...,0.9,1.0\right\}$. Seven threshold values $\theta =\left\{full, 1000, 500,100,50,10,1\right\}$ are considered. 
For \adult, for example, $\theta = 100$ means the adversary cannot distinguish between values of \caploss within a range of $100$ and $\theta=full$ means the adversary cannot distinguish between all the domain values (same as differential privacy). Figure \ref{fig:range_mean_tweetlat} considers 4 threshold values $\theta=\left\{full, 500km, 50km, 5km \right\}$ for \tweet. 
\newstuff{When $\theta = 1$ (\adult) or $\theta = 5km$ (\tweet), the ordered hierarchical mechanism is same as the ordered mechanism.}
From both figures, we see that as the $\theta$ increases, $\error$ decreases and orders of magnitude difference in  error between $\theta=1$ and $\theta=|\dom|$.

\section{Blowfish with Constraints}
\label{sec:auxiliary}
\newstuff{
In this section, we consider query answering under Blowfish policies with constraints $P = (\dom, G, \inp_Q)$, where $(\inp_Q \subsetneq \inp_n)$. 
In the presence of general deterministic constraints $Q$, pairs of neighboring databases can differ in any number of tuples, depending on structures of $Q$ and the discriminative graph $G$. Computing the policy specific sensitivity in this general case is a hard problem, as shown next.
\begin{theorem}\label{thm:sens_hardness}
Given a function $f$ and a policy \\$P = (\dom, G, \inp_Q)$. Checking whether $S(f,P) > 0$ is NP-hard. The same is true for the complete histogram query $h$.
\end{theorem}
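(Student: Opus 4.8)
The plan is to exhibit a polynomial-time reduction from an NP-complete problem, namely \textsc{Exact Cover by 3-Sets} (\textsc{X3C}); the heart of the reduction is a small gadget that turns the question ``is $\inp_Q$ nonempty?'' into the question ``is $S(h,P)>0$?''. We will in fact prove hardness already for $f = h$ (the complete histogram query), so the general statement follows as a special case.

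Given an \textsc{X3C} instance with universe $U = \{e_1,\ldots,e_{3q}\}$ and a family of $3$-element subsets $\{C_1,\ldots,C_m\}$, I would build the policy $P = (\dom, G, \inp_Q)$ as follows. Let $\dom = \{v_1,\ldots,v_m\} \cup \{z, u, w\}$, where intuitively $v_j$ indicates ``$C_j$ is chosen'', $z$ is a filler value, and $(u,w)$ is a ``free switch''; fix the database size to $n = m+1$. The constraints $Q$, all of them count-query constraints, are: (i) for every element $e_i$, the number of tuples with value in $\{v_j : e_i \in C_j\}$ equals $1$; and (ii) the number of tuples with value in $\{u,w\}$ equals $1$. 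Because each $C_j$ has exactly three elements, constraint (i) together with the nonnegativity of counts already forces $c(v_j) \in \{0,1\}$ for every $j$, so any $D \in \inp_Q$ corresponds to a subfamily $\{C_j : c(v_j) = 1\}$ that covers every element exactly once, i.e.\ an exact cover; the filler $z$ absorbs the remaining $m-q$ tuples and (ii) is met by placing one tuple on $u$ or $w$. Conversely, any exact cover yields such a database. Hence $\inp_Q \neq \emptyset$ \emph{iff} the \textsc{X3C} instance is a yes-instance. Finally, let the discriminative secret graph $G$ on $\dom$ consist of the single edge $(u,w)$ (every other vertex isolated), so the only discriminative secrets are the pairs $(s^i_u, s^i_w)$, and take $f = h$.

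It then remains to check that $S(h,P) > 0 \iff \inp_Q \neq \emptyset$. If $\inp_Q = \emptyset$ there are no neighbors, so $N(P) = \emptyset$ and $S(h,P) = 0$. If $\inp_Q \neq \emptyset$, pick any $D_1 \in \inp_Q$; by constraint (ii) it has exactly one tuple, say tuple $\ell$, whose value lies in $\{u,w\}$, say $u$, and let $D_2$ be $D_1$ with that tuple relabeled $w$. Then $D_2 \in \inp_Q$, since (ii) is the only constraint mentioning $u$ or $w$ and it is preserved, while $(u,w)\in E$, so $T(D_1,D_2) = \{(s^\ell_u, s^\ell_w)\}$. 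This set has size one, the smallest possible for a nonempty $T$, and $D_1,D_2$ differ in exactly one tuple, so the minimality requirement (condition~3 of the neighbor definition) is met and $(D_1,D_2) \in N(P)$. Since $h(D_1) \neq h(D_2)$ (the count of $u$ changed), $S(h,P) > 0$. As the construction is plainly polynomial in the size of the \textsc{X3C} instance, deciding $S(h,P) > 0$ is NP-hard; and since this is the special case $f = h$, deciding $S(f,P) > 0$ for general $f$ is NP-hard as well.

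The step I expect to be the main obstacle is the interaction with the minimality condition~3 in the definition of $N(P)$: one must argue that the flip-pair $(D_1,D_2)$ above is a bona fide neighbor — here the one-edge choice of $G$ pays off, since a difference in a single discriminative secret is automatically minimal and no constraint-satisfying database can lie strictly between $D_1$ and $D_2$ — and, conversely, that when $\inp_Q = \emptyset$ no neighbors of any shape can appear (which is immediate). A secondary point needing care is the faithfulness of the count-constraint encoding: verifying that the $3$-set structure plus nonnegativity genuinely pins down $c(v_j) \in \{0,1\}$ without extra slack values, that $n = m+1$ and the filler value $z$ keep the system feasible exactly for exact covers, and that constraint (ii) mentions $u$ and $w$ symmetrically so the $u \leftrightarrow w$ flip stays inside $\inp_Q$.
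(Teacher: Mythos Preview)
Your argument is correct. The reduction is sound: the count constraints~(i) really do force $c(v_j)\in\{0,1\}$ and encode an exact cover, the arithmetic $n=m+1$ leaves exactly $m-q\geq 0$ tuples for the filler $z$ in any yes-instance, and the $(u,w)$ gadget works as intended. Your handling of the minimality condition~3 is the right one: because $G$ has a single edge, $|T(D_1,D_2)|=1$, so no $D_3$ can satisfy $\emptyset\neq T(D_1,D_3)\subsetneq T(D_1,D_2)$, and since $D_1,D_2$ differ in exactly one tuple, no $D_3$ with the same $T$ can have strictly smaller $\Delta$. (The paper's neighbor definition must be read with the implicit requirement $T(D_1,D_3)\neq\emptyset$ in 3(a), else $D_3=D_1$ kills every pair; you are implicitly using this, which is fine.)

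The paper's route is different in flavor. Its sketch reduces from the problem of deciding whether a 3SAT formula has at least two satisfying assignments: take $\dom=\{0,1\}$, let the $n$ tuples be the Boolean variables, let $Q$ be the clauses, and let the discriminative pairs be all $(s^i_0,s^i_1)$. There the existence of a neighbor pair is tied directly to the existence of two satisfying assignments, so no separate ``free switch'' gadget is needed. Your approach instead reduces from a pure \emph{feasibility} problem (\textsc{X3C}) and manufactures the second database via the $(u,w)$ gadget; this is a clean and standard move. A side benefit of your construction is that all of your constraints are already count-query equalities, so you simultaneously establish the hardness remark the paper makes a paragraph later (``the same hardness result holds for count query constraints, via Vertex Cover'') without a second reduction. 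Conversely, the paper's reduction is more compact (binary domain, complete $G$) and avoids the filler/size bookkeeping, at the cost of invoking a slightly less standard source problem.
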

\begin{proof}(sketch)
The proof  follows from the hardness of checking whether 3SAT has at least 2 solutions. The reduction  uses $\inp = \{0,1\}^n$, constraints corresponding to clauses in the formula, and  $\{s^i_0, s^i_1\}_i$ as secret pairs. 
\end{proof}
\cthm\ref{thm:sens_hardness} implies that checking whether $S(f,P) \leq z$ is co-NP-hard for general constraints $Q$. In fact, the hardness result holds even if we just consider the histogram query $h$ and general {\em count query constraints}. 

Hence, in the rest of this section, we will focus on releasing histograms under a large subclass of constraints called {\em sparse count query constraints}. In \csec\ref{sec:auxiliary:sparse} we show that when the count query constraint is ``sparse'', we can efficiently compute  $S(h,P)$, and thus we can use the Laplace mechanism to release the histogram. In \csec\ref{sec:auxiliary:applications}, we will show that our general result about $S(h,P)$ subject to sparse count query constraints can be applied to several important practical scenarios.
}

%
%
%

\subsection{Global Sensitivity for Sparse Constraints}
\label{sec:auxiliary:sparse}




\noindent
%
A {\em count query} $\query_\pred$ returns the number of tuples satisfying predicate $\pred$ in a database $D$, \ie, $\query_\pred(D) = \sum_{t \in D} \one_{\pred(t) = \true}$. The auxiliary knowledge we consider here is a {\em count query constraint} $Q$, which can be expressed as a conjunction of query-answer pairs:
\begin{equation}
\query_{\pred_1}(D) = {\sf cnt}_1 \wedge \query_{\pred_2}(D) = {\sf cnt}_2 \wedge \ldots \wedge \query_{\pred_p}(D) = {\sf cnt}_p.
\end{equation}
Since the answers ${\sf cnt}_1, {\sf cnt}_2, \ldots, {\sf cnt}_p$ do not affect our analysis, we denote the auxiliary knowledge or count query constraint as $Q = \{\query_{\pred_1}, \query_{\pred_2}, \ldots, \query_{\pred_p}\}$. Note that this class of auxiliary knowledge is already very general and commonly seen in practice. For example, marginals of contingency tables, range queries, and degree distributions of graphs can all be expressed in this form. 

Even for this class of constraints, calculating $S(h,P)$ is still hard. In fact, the same hardness result in \cthm\ref{thm:sens_hardness} holds for count query constraints (using a reduction from the {Vertex Cover} problem).

\subsubsection{Sparse Auxiliary Knowledge}

Consider a secret pair $(s^i_x, s^i_y) \in \Spairs^G$ about a tuple $t$ with $t.\_id = i$, and a count query $\query_\pred \in Q$. If the tuple $t \in D$ changes from $x$ to $y$, there are three mutually exclusive cases about $\query_\pred(D)$: 
i) increases by one ($\neg \pred(x) \wedge \pred(y)$), 
ii) decreases by one ($\pred(x) \wedge \neg\pred(y)$), or 
iii) stays the same (otherwise).

\begin{definition}[Lift and Lower] \label{def:affect}
A pair $(x,y)$ $\in$ $\dom \times \dom$ is said to lift a count query $\query_\pred$ iff $\pred(x) = \false \wedge \pred(y) = \true$, or lower $\query_\pred$ iff $\pred(x) = \true \wedge \pred(y) = \false$.
\end{definition}

\noindent Note that one pair may lift or lower many count queries simultaneously. We  now define sparse auxiliary knowledge. 

\begin{definition}[Sparse Knowledge] \label{def:sparseknowledge}
The auxiliary knowledge $Q = \{\query_{\pred_1}, \query_{\pred_2}, \ldots, \query_{\pred_p}\}$ is {\em sparse} w.r.t. the discriminative secret graph $G=(V,E)$, iff each pair $(x,y) \in E$ lifts at most one count query in $Q$ and lowers at most one count query in $Q$. 
\end{definition}

\begin{example}\label{exa:sparsek} {\em (Lift, Lower, and Sparse Knowledge)}
Consider databases from domain $\dom = A_1 \times A_2 \times A_3$, where $A_1 = \{a_1, a_2\}$, $A_2 = \{b_1, b_2\}$, and $A_3 = \{c_1, c_2, c_3\}$ and count query constraint $Q = \{\query_1, \query_2, \query_3, \query_4\}$ as in \cfig\ref{fig:policygraph}(a). With full-domain sensitive information, any pair in $\dom\times\dom$ is a discriminative secret and thus the discriminative secret graph $G$ is a complete graph. A pair $((a_1, b_1, c_1), (a_2, b_2, c_2))$ lifts $\query_4$ and lowers $\query_1$; and a pair $((a_1, b_2, c_1), (a_1, b_2, c_2))$ neither lifts nor lowers a query. We can verify every pair either (i) lifts exactly one query in $Q$ and lowers exactly one in $Q$, or (ii) lifts or lowers no query in $Q$. So $Q$ is sparse w.r.t. the discriminative secret graph $G$.
\end{example}

We will show that when the auxiliary knowledge $Q$ is sparse w.r.t. the discriminative secret graph $G = (V,E)$ in a policy $P = (\dom, G, \inp_Q)$, it is {\em possible} to analytically bound the policy specific global sensitivity $S(h,P)$. To this end, let's first construct a directed graph called {\em policy graph} ${\cal G}_P = ({\cal V}_P, {\cal E}_P)$ from $P$, the count queries in $Q$ forming the vertices, and the relationships between count queries and secret pairs forming edges. 

\begin{figure}[t]
\centering
\includegraphics[scale=0.6]{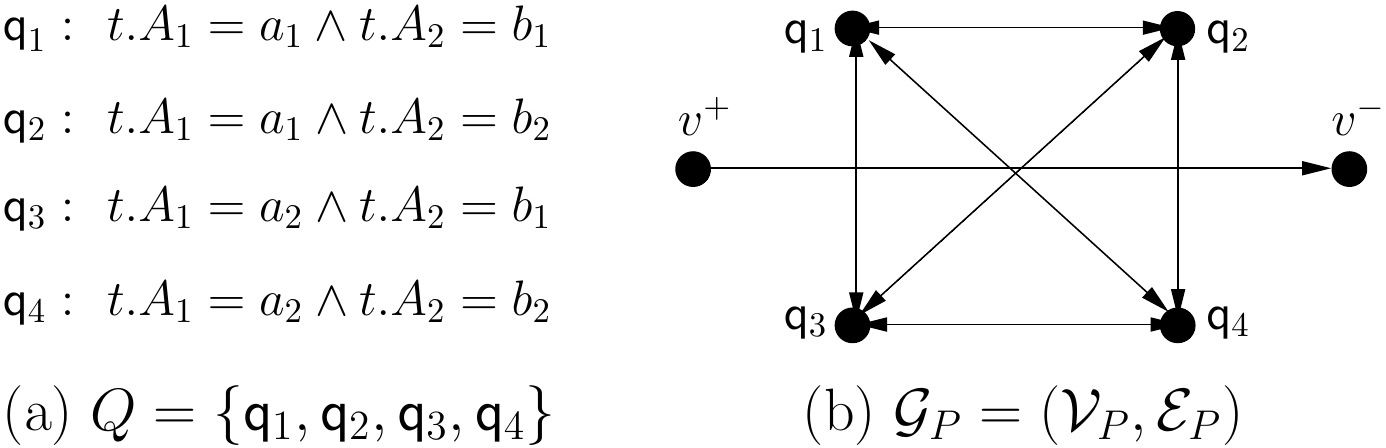}
\caption{Policy graph ${\cal G}_P = ({\cal V}_P, {\cal E}_P)$ of databases with three attributes $A_1 = \{a_1, a_2\}$, $A_2 = \{b_1, b_2\}$, and $A_3 = \{c_1, c_2, c_3\}$ subject to count query constraint $Q = \{\query_1, \query_2, \query_3, \query_4\}$ and full-domain sensitive information}
\label{fig:policygraph}
\end{figure}

\begin{definition}[Policy Graph] \label{def:policygraph} Given a policy $P$ $=$ $(\dom, G(V,E), \inp_Q)$, and a sparse count constraint $Q$, the policy graph ${\cal G}_P = ({\cal V}_P, {\cal E}_P)$ is a directed graph, where
\squishlist
\item ${\cal V}_P = Q \cup \{v^+, v^-\}$: Create a vertex for each count query $\query_\pred \in Q$, and two additional special vertices $v^+$ and $v^-$.
\item ${\cal E}_P$: i) add a directed edge $(\query_\pred, \query_{\pred'})$ iff there exists a secret pair $(x,y) \in E$ lifting $\query_{\pred'}$ and lowering $\query_{\pred}$; ii) add a directed edge $(v^+, \query_\pred)$ iff there is a secret pair in $E$ lifting $\query_\pred$ but not lowering any other $\query_{\pred'}$; iii) add a directed edge $(\query_\pred, v^-)$ iff there is a secret pair in $E$ lowering $\query_\pred$ but not lifting any other $\query_{\pred'}$; and iv) add edge $(v^+, v^-)$.
\squishend
\end{definition}

Let $\alpha({\cal G}_P)$ denote the length (number of edges) of the longest simple cycle in ${\cal G}_P$. $\alpha({\cal G}_P)$ is defined to be $0$ if ${\cal G}_P$ has no directed cycle. Let $\xi({\cal G}_P)$ be the length (number of edges) of a longest simple path from $v^+$ to $v^-$ in ${\cal G}_P$.

\begin{example}\label{exa:policygraph} {\em (Policy Graph)}
Followed by \cexa\ref{exa:sparsek}, since the count query constraint $Q$ is sparse w.r.t. the discriminative secret graph, we have its policy graph ${\cal G}_P = ({\cal V}_P, {\cal E}_P)$ as in \cfig\ref{fig:policygraph}(b). For example, a pair $((a_1, b_1, c_1),$ $(a_2, b_2, c_2))$ in $\dom\times\dom$ lifts $\query_4$ and lowers $\query_1$, so there is an edge $(\query_1, \query_4)$. There is no edge from $v^+$ or to $v^-$, except $(v^+, v^-)$, because every pair in $\dom\times\dom$ either lifts one query and lowers one, or lifts/lowers no query. In this policy graph ${\cal G}_P$, we have $\alpha({\cal G}_P) = 4$ and $\xi({\cal G}_P)=1$.
\end{example}

\begin{theorem}\label{thm:calGS}
Let $h$ be the complete histogram query. In a policy $P = (\dom, G, \inp_Q)$, if the auxiliary knowledge $Q$ is sparse w.r.t. $G$, then we have: 
\[S(h,P) \leq 2 \max\{\alpha({\cal G}_P), \xi({\cal G}_P)\},\]
%
%
If there exist two neighboring databases $(D_1, D_2) \in N(P)$ s.t. $||h(D_1) - h(D_2)||_1 = 2 |T(D_1, D_2)|$ when $|T(D_1, D_2)| = \max_{(D',D'') \in N(P)}|T(D', D'')|$, then we have the equality: \[S(h,P) = 2 \max\{\alpha({\cal G}_P), \xi({\cal G}_P)\}.\]
\end{theorem}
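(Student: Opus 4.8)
The plan is to characterize, for a pair of neighboring databases $(D_1, D_2) \in N(P)$, the combinatorial structure of the set $T(D_1, D_2)$ of discriminative pairs in which they differ, and to show this structure forces $|T(D_1,D_2)|$ to be at most $\max\{\alpha({\cal G}_P), \xi({\cal G}_P)\}$; the factor of $2$ then comes from the fact that changing a single tuple from $x$ to $y$ alters the complete histogram $h$ in exactly two coordinates (the $x$ and $y$ bins), so $\|h(D_1)-h(D_2)\|_1 \le 2|T(D_1,D_2)|$. First I would set up the correspondence between the tuple changes witnessing $(D_1,D_2)\in N(P)$ and a multiset of edges in the discriminative secret graph $G$: since $D_1, D_2$ both satisfy $Q$ and are minimally different (condition 3 of the neighbor definition), each tuple $i$ with differing value contributes an edge $(x_i,y_i)\in E$, and minimality rules out any ``wasted'' changes. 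By sparseness, each such edge lifts at most one count query and lowers at most one; I would map the edge to the corresponding edge of the policy graph ${\cal G}_P$ — namely $(\query_{\pred},\query_{\pred'})$ if it lowers $\query_\pred$ and lifts $\query_{\pred'}$, $(v^+,\query_{\pred'})$ if it only lifts, $(\query_\pred,v^-)$ if it only lowers, and $(v^+,v^-)$ if it does neither.

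The key structural claim is then: because $D_1 \vdash Q$ and $D_2 \vdash Q$, for every count query $\query_\pred \in Q$ the number of tuple changes that lower $\query_\pred$ must equal the number that lift it (the net change in $\query_\pred(D)$ is zero). This is a conservation/flow condition: the multiset of policy-graph edges coming from $T(D_1,D_2)$ has, at every vertex $\query_\pred \in Q$, in-degree equal to out-degree. Such an Eulerian-type balanced edge multiset decomposes into simple cycles (through the $Q$-vertices) and simple $v^+\!\to\!v^-$ paths, plus possibly copies of the edge $(v^+,v^-)$ — the standard decomposition of a balanced flow. Minimality (condition 3a) of $(D_1,D_2)$ forbids the decomposition from containing a genuine simple cycle {\em together with} anything else, and forbids two cycles, because removing a cycle from the edge multiset yields a database $D_3 \vdash Q$ with $T(D_1,D_3)\subsetneq T(D_1,D_2)$; likewise it forbids more than one $v^+\!\to\!v^-$ path, and forbids mixing a path with the trivial $(v^+,v^-)$-type changes when a strictly smaller $\Delta$ witness exists (condition 3b handles the $(v^+,v^-)$ copies). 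Hence $T(D_1,D_2)$ corresponds to {\em exactly one} simple cycle, or {\em exactly one} simple $v^+$-to-$v^-$ path, in ${\cal G}_P$ — giving $|T(D_1,D_2)| \le \max\{\alpha({\cal G}_P),\xi({\cal G}_P)\}$ and therefore the upper bound $S(h,P) \le 2\max\{\alpha({\cal G}_P),\xi({\cal G}_P)\}$.

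For the equality, I would reverse the construction: take a longest simple cycle (of length $\alpha({\cal G}_P)$) or longest simple $v^+\!\to\!v^-$ path (of length $\xi({\cal G}_P)$) in ${\cal G}_P$, realize each of its edges by a concrete tuple change along the corresponding edge of $G$, starting from any database $D_1 \vdash Q$ that is ``generic enough'' to contain tuples with the needed source values, and verify the resulting $D_2$ satisfies $Q$ (by the balance property) and that $(D_1,D_2) \in N(P)$ (minimality holds since a single cycle or single path cannot be shortened). Then the hypothesis that for such an extremal pair $\|h(D_1)-h(D_2)\|_1 = 2|T(D_1,D_2)|$ — i.e., all $2|T|$ affected histogram bins are distinct — yields $S(h,P) \ge 2|T(D_1,D_2)| = 2\max\{\alpha,\xi\}$, matching the upper bound.

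The main obstacle I anticipate is pinning down exactly how the minimality conditions 3(a) and 3(b) interact with the cycle/path decomposition — in particular, arguing carefully that a balanced edge multiset that is not a single simple cycle or single simple path always admits a strictly smaller witness $D_3$ (which requires showing one can ``undo'' a sub-cycle or swap to a shorter path while staying inside $\inp_Q$), and handling the bookkeeping of the special vertices $v^+, v^-$ and the edge $(v^+,v^-)$, which encode tuple changes invisible to the constraints but still visible to $h$. Getting the realizability direction right — ensuring a source database with the required tuples exists for {\em every} policy graph, or else stating the equality as conditional (as the theorem does) — is the other delicate point.
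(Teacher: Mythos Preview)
Your approach is essentially the same as the paper's. Both arguments map the tuple changes in $T(D_1,D_2)$ to a multiset of edges in the policy graph ${\cal G}_P$, observe that constraint preservation forces this multigraph to be balanced (Eulerian) at every $Q$-vertex, decompose it into simple cycles and simple $v^+$--$v^-$ paths, and then use minimality (condition~3) to conclude that only a single piece can survive --- the paper phrases this contrapositively (if $|T|>\max\{\alpha,\xi\}$ then a proper sub-cycle or sub-path exists and yields a witness $D_3$), while you phrase it as a direct characterization of neighboring pairs, but the content is identical. The equality direction is also handled the same way in both: realize an extremal cycle or path by concrete tuple changes starting from some $D\in\inp_Q$. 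One small correction to your obstacle paragraph: a $(v^+,v^-)$-type tuple change is itself a secret pair in $T(D_1,D_2)$, so removing it strictly shrinks $T$, and it is condition~3(a), not~3(b), that disposes of such edges.
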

\begin{proof} See Appendix.
\end{proof}

For count query constraint $Q$ that is sparse with respect to policy $P = (\dom, G, \inp_Q)$, we have the following immediate corollary about an upper bound. 
\vspace{-1mm}
\begin{corollary}
In a policy $P = (\dom, G, \inp_Q)$, if $Q$ is sparse w.r.t. $G$, then $S(h, P) \leq 2\max\{|Q|, 1\}$.
\end{corollary}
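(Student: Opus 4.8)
The plan is to read the corollary off \cthm\ref{thm:calGS} directly. Since $Q$ is sparse with respect to $G$, that theorem already gives $S(h,P) \le 2\max\{\alpha(\mathcal{G}_P), \xi(\mathcal{G}_P)\}$, so it suffices to bound the two purely combinatorial quantities $\alpha(\mathcal{G}_P)$ and $\xi(\mathcal{G}_P)$ by $\max\{|Q|,1\}$. The only structural facts I need come straight from \cdef\ref{def:policygraph}: the policy graph has vertex set $\mathcal{V}_P = Q \cup \{v^+, v^-\}$ (hence exactly $|Q|+2$ vertices), rules (i)--(iv) never add an edge pointing \emph{into} $v^+$, and they never add an edge \emph{leaving} $v^-$; so $v^+$ is a source and $v^-$ is a sink of $\mathcal{G}_P$.

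First I would bound $\alpha(\mathcal{G}_P)$. Because $v^+$ has in-degree $0$ and $v^-$ has out-degree $0$, neither vertex can occur on any directed cycle, so every simple cycle of $\mathcal{G}_P$ lives inside the sub-digraph induced by the $|Q|$ vertices of $Q$; a simple cycle on at most $|Q|$ vertices uses at most $|Q|$ edges, giving $\alpha(\mathcal{G}_P) \le |Q|$. (For $|Q| \le 1$ this degenerates to $\alpha(\mathcal{G}_P) = 0$, since a lone query vertex carries no self-loop: a pair cannot simultaneously lift and lower the same count query.) Next I would bound $\xi(\mathcal{G}_P)$. A simple path from $v^+$ to $v^-$ must begin at $v^+$ and end at $v^-$ — these are the only positions those two vertices can take, again by the source/sink property — so all of its remaining vertices are distinct elements of $Q$ and the path threads at most $|Q|$ of them; counting the two endpoints once apiece yields $\xi(\mathcal{G}_P) \le \max\{|Q|, 1\}$, where the ``$1$'' covers the always-present edge $(v^+, v^-)$ from rule (iv), which is the only available $v^+$-to-$v^-$ path when $Q = \emptyset$. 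Substituting both bounds into \cthm\ref{thm:calGS} gives $S(h,P) \le 2\max\{|Q|, 1\}$.

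There is no deep obstacle here — the statement is genuinely an immediate consequence of the sensitivity characterization of \cthm\ref{thm:calGS}. The only points meriting care are the bookkeeping at the two distinguished vertices $v^+$ and $v^-$ (making sure they are never placed on a cycle and occur at most once along a path) and the degenerate regimes $|Q| = 0$ and $|Q| = 1$, where the cycle term $\alpha(\mathcal{G}_P)$ collapses to $0$ and the whole bound is carried by the $v^+$-to-$v^-$ path term; everything else is a one-line appeal to the already established result.
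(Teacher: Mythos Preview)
Your overall plan is exactly what the paper intends: the corollary is meant to be read off \cthm\ref{thm:calGS} by bounding $\alpha(\mathcal{G}_P)$ and $\xi(\mathcal{G}_P)$ in terms of $|Q|$, and your argument for $\alpha(\mathcal{G}_P)\le |Q|$ (cycles cannot visit the source $v^+$ or the sink $v^-$) is correct.

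The gap is in your bound on $\xi(\mathcal{G}_P)$. A simple $v^+$-to-$v^-$ path that passes through $k$ query vertices has $k+1$ edges, not $k$; so the counting you describe only yields $\xi(\mathcal{G}_P)\le |Q|+1$. This off-by-one is not recoverable: already with a single constraint $Q=\{\query_\pred\}$, if some secret pair lifts $\query_\pred$ and another lowers it, the policy graph contains the path $v^+\to \query_\pred\to v^-$ of length $2>\max\{|Q|,1\}=1$. Moreover, in that situation one can exhibit neighbors $(D_1,D_2)\in N(P)$ differing in two tuples (one change lifts $\query_\pred$, the other lowers it, so the constraint is preserved and no intermediate $D_3\in\inp_Q$ exists), giving $\|h(D_1)-h(D_2)\|_1=4>2\max\{|Q|,1\}$. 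So the sentence ``counting the two endpoints once apiece yields $\xi(\mathcal{G}_P)\le\max\{|Q|,1\}$'' is simply an arithmetic slip, and your argument as written only delivers $S(h,P)\le 2(|Q|+1)$. Note that the applications in \csec\ref{sec:auxiliary:applications} that actually hit $2|Q|$ do so because there $v^+$ and $v^-$ are isolated apart from the edge $(v^+,v^-)$, forcing $\xi(\mathcal{G}_P)=1$; and the range-query application explicitly carries the extra $+1$ (the bound there is $2(\maxcomp(Q)+1)$). In short, your approach matches the paper's, but the specific inequality you need for $\xi$ is false, and the stated bound does not follow from \cthm\ref{thm:calGS} without that extra additive $1$.
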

Thus, drawing noise from Laplace$(2\max\{|Q|, 1\}/\epsilon)$  suffices (but may not be necessary) for releasing the complete histogram while ensuring $(\epsilon, P)$-Blowfish privacy.

%

\subsection{Applications}
\label{sec:auxiliary:applications}
The problem of calculating $\alpha({\cal G}_P)$ and $\xi({\cal G}_P)$ exactly in a general policy graph ${\cal G}_P$ is still a hard problem, but becomes tractable in a number of practical scenarios. We give three such examples: i) the policy specific global sensitivity $S(h,P)$ subject to auxiliary knowledge of one marginal for full-domain sensitive information; ii) $S(h,P)$ subject to auxiliary knowledge of multiple marginals for attribute sensitive information; and iii) $S(h,P)$ subject to auxiliary knowledge of range queries for distance-threshold sensitive information.

\subsubsection{Marginals and Full-domain Secrets}

{\em Marginals} are also called {\em cuboids} in data cubes. Intuitively, in a marginal or a cuboid $C$, we project the database of tuples onto a subset of attributes $[C] \subseteq \{A_1, A_2, \ldots, A_k\}$ and count the number of tuples that have the same values on these attributes. Here, we consider the scenario when the adversaries have auxiliary knowledge about one or more marginals, \ie, the counts in some marginals are known.

\begin{definition}[Marginal]
Given a database $D$ of $n$ tuples from a $k$-dim domain $\dom = A_1 \times A_2 \times \ldots \times A_k$, a $d$-dim marginal $C$ is the (exact) answer to the query:

~~{\sf SELECT $A_{i_1}$, $A_{i_2}$, \ldots, $A_{i_d}$, COUNT($*$) FROM $D$}

~~{\sf GROUP BY $A_{i_1}$, $A_{i_2}$, \ldots, $A_{i_d}$}

\noindent Let $[C]$ denote the set of $d$ attributes $\{A_{i_1}, A_{i_2}, \ldots, A_{i_d}\}$.
%
\end{definition}

A marginal $[C] = \{A_{i_1}, A_{i_2}, \ldots, A_{i_d}\}$ is essentially a set of count queries $C^\query = \{\query_\pred\}$, where the predicate $\pred(t) := (t.A_{i_1} = a_{i_1}) \wedge (t.A_{i_2} = a_{i_2}) \wedge \ldots \wedge (t.A_{i_2} = a_{i_d})$, for all possible $(a_{i_1}, a_{i_2}, \ldots, a_{i_d}) \in A_{i_1} \times A_{i_2} \times \ldots \times A_{i_d}$.

Let ${\cal A} = \{A_1, A_2, \ldots, A_k\}$ be the set of all attributes. For a marginal $C$, define $\size(C) = \prod_{A_i \in [C]} |A_i|$, where $|A_i|$ is the cardinality of an attribute $A_i$. So $\size(C)$ is the number of possible rows in the marginal $C$, or the number of the count queries in $C^\query$ constructed as above.
%

Suppose a marginal with $[C] \subsetneq {\cal A}$ is known to the adversary. Let $\inp_{Q(C)}$ denote the set of databases with marginal $C$ equal to certain value. Recall that we want to publish the complete histogram $h$ of a database $D$ from a domain $\dom = A_1 \times A_2 \times \ldots \times A_k$. For the full-domain sensitive information, using \cthm\ref{thm:calGS}, we have the global sensitivity equal to $2~\size(C)$.

%

\begin{theorem}
Let $h$ be the complete histogram. For a policy $P = (\dom, G, \inp_{Q(C)})$, where $G$ represents the full-domain sensitive information $\Spairs^{\rm full}$ and $[C] \subsetneq {\cal A}$ is a marginal, we have  $S(h,P) = 2~\size(C)$.
\end{theorem}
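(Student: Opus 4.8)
The plan is to instantiate \cthm\ref{thm:calGS} for this policy. Write $Q = C^\query = \{\query_{\pred_1}, \ldots, \query_{\pred_m}\}$ for the set of $m = \size(C)$ count queries of the marginal, one per row; the predicates $\pred_1, \ldots, \pred_m$ are mutually exclusive and jointly exhaustive, so they partition $\dom$. The graph $G$ here is the complete graph $K$ on $\dom$. First I would check that $Q$ is sparse w.r.t.\ $G$, so that \cthm\ref{thm:calGS} applies: a pair $(x,y)$ with $x,y$ in the same row of $C$ has $\pred(x)=\pred(y)$ for every $\pred \in Q$, hence lifts and lowers nothing, while a pair with $x,y$ in distinct rows lowers exactly the query of $x$'s row and lifts exactly the query of $y$'s row.

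Next I would build the policy graph ${\cal G}_P$ (Definition~\ref{def:policygraph}) and read off its parameters. Because $G$ is complete and every row is nonempty, for each ordered pair of distinct queries $\query_\pred,\query_{\pred'}$ there is a secret pair $(x,y)$ with $x$ in $\pred$'s row and $y$ in $\pred'$'s row, so $(\query_\pred,\query_{\pred'}) \in {\cal E}_P$: the $m$ query vertices induce the complete digraph. There is no edge touching $v^+$ or $v^-$ other than $(v^+,v^-)$, since any pair lifting some $\query_{\pred'}$ has its first coordinate in a different row and therefore also lowers that row's query, and symmetrically for lowering. Hence $\xi({\cal G}_P)=1$ and $\alpha({\cal G}_P)$ is the length of a Hamiltonian cycle of the complete digraph on $m$ vertices, so $\max\{\alpha({\cal G}_P),\xi({\cal G}_P)\} = \size(C)$ in all cases (when $[C]=\emptyset$ we have $m=1$ and the maximum is $\xi=1=\size(C)$). \cthm\ref{thm:calGS} (indeed already the corollary after it, since $|Q|=\size(C)$) then gives the upper bound $S(h,P) \le 2\,\size(C)$.

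For the matching lower bound I would exhibit a neighbor pair attaining it and apply the equality clause of \cthm\ref{thm:calGS}. Since $[C] \subsetneq {\cal A}$, each row of $C$ contains at least two domain values; fix a Hamiltonian cycle $\query_{\pred_1}\to\cdots\to\query_{\pred_m}\to\query_{\pred_1}$ and in each row $\pred_j$ choose distinct values $u_j \ne w_j$. Let $D_1$ contain distinguished tuples $t_1,\ldots,t_m$ with $t_j$ set to $u_j$ (padded arbitrarily to size $n$), and take the constraint $Q(C)$ to be the marginal of $D_1$. Let $D_2$ be obtained from $D_1$ by resetting each $t_j$ to $w_{j+1}$ (indices taken cyclically), i.e.\ sliding $t_j$ one step along the cycle into $\pred_{j+1}$'s row. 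Every row then loses one distinguished tuple and gains one, so $D_2$ has the same marginal, i.e.\ $D_1,D_2 \in \inp_{Q(C)}$; and since the $2m$ values $\{u_j\}\cup\{w_j\}$ are pairwise distinct, $\|h(D_1)-h(D_2)\|_1 = 2m$.

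It remains to verify $(D_1,D_2)\in N(P)$ with $|T(D_1,D_2)| = m = \max_{N(P)}|T|$, and this is the step I expect to be the main obstacle. For minimality, suppose $D_3 \vdash Q(C)$ satisfies $\emptyset \ne T(D_1,D_3) \subsetneq T(D_1,D_2)$; as the secrets are full-domain, $D_3$ agrees with $D_2$ on the distinguished tuples in some $\emptyset \ne J \subsetneq \{1,\ldots,m\}$ and with $D_1$ on every other tuple. Equating the occupancy of each row $\pred_j$ in $D_1$ and $D_3$ forces $\one[j\notin J] + \one[j-1\in J] = 1$ for every $j$, i.e.\ $\one[j\in J]$ is constant around the cycle, which is impossible for a nonempty proper $J$; condition 3(b) is vacuous because full-domain secrets determine each changed value, so $(D_1,D_2)\in N(P)$. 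Finally, in any neighbor pair $(D',D'')$ no discriminative change can move a tuple within a single row (otherwise reverting that change stays in $\inp_{Q(C)}$ and shrinks $T$, contradicting minimality), so the changes form a balanced collection of directed edges among the $m$ query vertices; minimality again rules out proper sub-collections, so they form a single simple cycle, of length at most $m$. Hence $\max_{N(P)}|T| = m$, attained by $(D_1,D_2)$ with $\|h(D_1)-h(D_2)\|_1 = 2\,|T(D_1,D_2)|$, and the equality clause of \cthm\ref{thm:calGS} yields $S(h,P) = 2\max\{\alpha({\cal G}_P),\xi({\cal G}_P)\} = 2\,\size(C)$.
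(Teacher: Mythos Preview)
Your proposal is correct and follows the same approach as the paper's sketch: verify that $C^\query$ is sparse with respect to the complete graph, identify the policy graph as the complete digraph on the $\size(C)$ query vertices (with only the edge $(v^+,v^-)$ touching the special vertices), apply \cthm\ref{thm:calGS} for the upper bound, and exhibit an explicit neighbor pair realizing $2\,\size(C)$ for the lower bound---you simply fill in the details the paper leaves as ``not hard.'' One minor simplification: once you have $(D_1,D_2)\in N(P)$ with $\|h(D_1)-h(D_2)\|_1=2m$, the lower bound $S(h,P)\ge 2m$ follows directly from the definition of policy specific global sensitivity, so the detour through the equality clause of \cthm\ref{thm:calGS} (and the accompanying verification that $|T(D_1,D_2)|$ is maximal over $N(P)$) is unnecessary.
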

\begin{proof} (sketch)
Consider the set of count queries $C^\query$. It is not hard to show that $C^\query$ is sparse w.r.t. the complete graph $G$. So we can construct a policy graph from $(\dom, G, \inp_{C^\query})$, which is a complete graph with vertex set $C^\query$. From \cthm\ref{thm:calGS}, we have $S(f,P) = 2|C^\query| = 2~\size(C)$. The upper bound $S(h,P) \leq 2|C^\query|$ is directly from \cthm\ref{thm:calGS}, and it is not hard to construct two neighboring databases to match this upper bound as $[C] \subsetneq{\cal A}$.
\end{proof}

\begin{example}
Continuing with \cexa\ref{exa:policygraph}, note that the constraints in \cfig\ref{fig:policygraph}(a) correspond to the marginal $[C] = \{A_1, A_2\}$. So from (i) in \cthm\ref{thm:calGS}, we have $S(h,P) \leq 2 \times 4 = 8$. The worst case $S(h,P) = 8$ can be verified by considering the two neighboring databases $D_1$ and $D_2$, each with four rows: $a_1b_1c_1$ (in $D_1$)/$a_1b_2c_2$ (in $D_2$), $a_1b_2c_1$/$a_2b_1c_2$, $a_2b_1c_1$/$a_2b_2c_2$, and $a_2b_2c_1$/$a_1b_1c_2$.
\end{example}

\subsubsection{Marginals and Attribute Secrets}

Now suppose a set of $p$ marginals $C_1,$ $\ldots,$ $C_p$ with $[C_1],$ $\ldots,$ $[C_p]$ $\subsetneq \{A_1, A_2, \ldots, A_k\}$ are auxiliary knowledge to the adversary. Let $\inp_{Q(C_1, \ldots, C_p)}$ be the set of databases with these $p$ marginals equal to certain values. For the attribute sensitive information, if the $p$ marginals are disjoint, using \cthm\ref{thm:calGS} the global sensitivity is $2\max_{1 \leq i \leq p} \size(C_i)$.

\begin{theorem}
Let $h$ be the complete histogram. Consider a policy $P = (\dom, G^{\rm attr}, \inp_{Q(C_1, \ldots, C_p)})$, where $[C_i] \subsetneq {\cal A}$ for any marginal $C_i$, and $[C_i] \cap [C_j] = \emptyset$ for any two $C_i$ and $C_j$. Then we have $S(h,P) = 2\max_{1 \leq i \leq p} \size(C_i)$.
\end{theorem}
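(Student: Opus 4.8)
The plan is to route everything through \cthm\ref{thm:calGS}: take the count-query constraint to be $Q=\bigcup_{i=1}^{p}C_i^{\query}$, the union of the count queries making up the $p$ marginals. It then suffices to (i) verify $Q$ is sparse w.r.t. $G^{\rm attr}$, (ii) determine the policy graph ${\cal G}_P$ and compute $\alpha({\cal G}_P)$ and $\xi({\cal G}_P)$, and (iii) produce one neighboring pair attaining the upper bound so that the equality clause of \cthm\ref{thm:calGS} applies.

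\emph{Step 1: sparsity and the shape of ${\cal G}_P$.} An edge $(x,y)\in E(G^{\rm attr})$ changes exactly one attribute, say $A$. For a marginal $C_i$, the pair $(x,y)$ lifts or lowers a query in $C_i^{\query}$ only if $A\in[C_i]$ (otherwise $x$ and $y$ agree on $[C_i]$ and hit the same cell of $C_i$). Since the $[C_i]$ are pairwise disjoint, $A$ lies in at most one of them; and when $A\in[C_i]$, the pair lowers exactly the query of $x$'s cell and lifts exactly the query of $y$'s cell, both inside $C_i^{\query}$. Hence every edge of $G^{\rm attr}$ lifts at most one query of $Q$ and lowers at most one, so $Q$ is sparse, and moreover no edge ``lifts a query without lowering another'' (nor conversely). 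Feeding this into \cdef\ref{def:policygraph}, ${\cal G}_P$ is the disjoint union, over $i=1,\dots,p$, of the Hamming graph ${\cal H}_i$ on vertex set $C_i^{\query}$ with an arc in each direction between cells of $C_i$ that differ in one attribute of $[C_i]$, together with $v^+$, $v^-$, and the single arc $(v^+,v^-)$; there are no other arcs incident to $v^+$ or $v^-$.

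\emph{Step 2: $\alpha$ and $\xi$.} Since $(v^+,v^-)$ is the only arc out of $v^+$, the only simple $v^+$-to-$v^-$ path has length $1$, so $\xi({\cal G}_P)=1$. Any simple directed cycle lies inside one component ${\cal H}_i$ and hence uses at most $|C_i^{\query}|=\size(C_i)$ arcs, so $\alpha({\cal G}_P)\le\max_i\size(C_i)$. Conversely, the Hamming graph on $C_i^{\query}$ is Hamiltonian whenever $\size(C_i)\ge 3$ (a standard fact, by induction on $|[C_i]|$), and for exactly two cells the pair of opposite arcs is a simple cycle of length $2$; either way ${\cal H}_i$ contains a simple directed cycle of length $\size(C_i)$. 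Thus $\alpha({\cal G}_P)=\max_i\size(C_i)$, and the first part of \cthm\ref{thm:calGS} gives $S(h,P)\le 2\max\{\alpha,\xi\}=2\max_i\size(C_i)$. For the matching lower bound, fix $i^\star$ with $\size(C_{i^\star})=\max_i\size(C_i)$, set $m=\size(C_{i^\star})$, and take a Hamiltonian cycle $r_1\to r_2\to\cdots\to r_m\to r_1$ of ${\cal H}_{i^\star}$. Build $D_1$ with $m$ ``active'' tuples $t_1,\dots,t_m$, where $t_j$ has $[C_{i^\star}]$-value $r_j$ and its remaining coordinates chosen so that the cells of the $t_j$ and of the $t_j'$ below never coincide (here $[C_{i^\star}]\subsetneq{\cal A}$ leaves the room to do this), plus $n-m$ inert tuples; let $D_2$ advance each $t_j$ to $t_j'$, equal to $t_j$ except its $[C_{i^\star}]$-value is moved from $r_j$ to $r_{j+1}$ (one attribute changes). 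Every $C_i$ with $i\neq i^\star$ is untouched (the changed attribute is disjoint from $[C_i]$) and $C_{i^\star}$ is unchanged because the cell decrements and increments telescope around the cycle, so $D_1,D_2\in\inp_{Q(C_1,\dots,C_p)}$; $|T(D_1,D_2)|=m$; and the no-coincidence choice gives $\|h(D_1)-h(D_2)\|_1=2m=2|T(D_1,D_2)|$. As Step 2 already forces $|T|\le\alpha=m$ on all of $N(P)$, this witness is maximal, so the equality clause of \cthm\ref{thm:calGS} yields $S(h,P)=2\max\{\alpha,\xi\}=2\max_i\size(C_i)$.

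\emph{Main obstacle.} The one delicate point is checking that the witness pair really lies in $N(P)$ — specifically the minimality condition~3 of the neighbor definition: there is no $D_3\vdash Q$ with $T(D_1,D_3)\subsetneq T(D_1,D_2)$ (nor the $\Delta$-minimal variant). This amounts to two facts: no proper nonempty sub-collection of the cyclic single-attribute moves preserves all the marginals (true because a Hamiltonian cycle visits distinct cells, so its $\pm 1$ pattern is balanced only on the whole cycle), and each individual move, changing one attribute along a $G^{\rm attr}$ edge, is itself irreducible. This is precisely the combinatorial argument already carried out in the proof of \cthm\ref{thm:calGS} (and in the full-domain/one-marginal special case), so I would invoke that rather than reprove it; the rest — picking off-$[C_{i^\star}]$ coordinates to avoid histogram-cell cancellation, padding to size $n$, and the harmless non-degeneracy assumptions on the attributes and on $n$ — is routine bookkeeping.
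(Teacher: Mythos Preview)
Your proposal is correct and follows the same route as the paper's sketch: take $Q=\bigcup_i C_i^{\query}$, check sparsity, compute the policy graph, and invoke \cthm\ref{thm:calGS} for both the upper bound and the matching witness.

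One point worth noting: you identify each component of ${\cal G}_P$ as the Hamming graph on $C_i^{\query}$ (arcs only between cells differing in a single attribute of $[C_i]$), whereas the paper's sketch asserts these components are cliques. Your description is the accurate one---under $G^{\rm attr}$ a secret pair changes exactly one attribute, so two cells of $C_i$ differing in two or more attributes of $[C_i]$ are \emph{not} directly joined in ${\cal G}_P$. This forces you to invoke Hamiltonicity of the Hamming graph (Cartesian products of complete graphs), which is indeed standard, to get $\alpha({\cal G}_P)=\max_i\size(C_i)$; the paper's ``clique'' shortcut sidesteps this but is strictly speaking an over-claim when some $|[C_i]|\ge 2$. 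The outcome is the same, and your handling of $\xi({\cal G}_P)=1$, the witness construction, and the minimality check (deferred to the cycle argument in the proof of \cthm\ref{thm:calGS}) are all in line with the paper.
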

\begin{proof} (sketch)
Consider the set of count queries $Q = C_1^\query \cup \ldots \cup C_p^\query$, it is not hard to show that $Q$ is sparse w.r.t. $G^{\rm attr}$. The policy graph from $(\dom, G^{\rm attr}, Q)$ is the union of $p$ cliques with vertex sets $C_1^\query$, \ldots, $C_p^\query$. From \cthm\ref{thm:calGS}, we have $S(h,P) = 2\max_i|C_i^\query| = 2\max_{i} \size(C_i)$. The upper bound $S(h,P) \leq 2\max_{i}|C_i^\query|$ is directly from \cthm\ref{thm:calGS}, and it is not hard to construct two neighboring databases to match this upper bound as $[C_i] \neq  {\cal A}$.
\end{proof}

\subsubsection{Grid and Distance-threshold Secrets}

Our general theorem about $S(f,P)$ can be also applied to databases with geographical information.

Consider a domain $\dom = [m]^k$, where $[m] = \{1,2, \ldots, m\}$. When $k = 2$ or $3$, $\dom$ can be used to approximately encode a $2$-dim plane or a $3$-dim space. For two points $x, y \in \dom$, we define {\em distance} $d(x,y)$ to be the $L^p$ distance $||x-y||_p$. For two point sets $X, Y \subset \dom$, we define $d(X,Y) = \min_{x \in X, y \in Y} d(x,y)$. A geographical database $D$ consists of $n$ points, each of which is drawn from the domain $\dom$ and may represent the location of an object.

Define a rectangle $R = [l_1, u_1] \times [l_2, u_2] \times \ldots \times [l_k, u_k]$, where $l_i \in [m]$, $u_i \in [m]$, and $l_i \leq u_i$. A {\rm range count query} $\query_R$ returns the number of tuples whose locations fall into the rectangle $R$. 
$R$ is called a {\em point query} if $l_i = u_i$ for all $i$. 

In this scenario, suppose the answers to a set of $p$ range count queries are known to the adversary. So we can represent the auxiliary knowledge as $Q = \{\query_{R_1},$ $\query_{R_2},$ $\ldots,$ $\query_{R_p}\}$. Also, suppose we aim to protect the distance-threshold sensitive information $\Spairs^{d, \theta} \!\! =$ $\{(s^i_x, s^i_y)$ $\mid$ $d(x,y)$ $\leq$ $\theta\}$ while the publishing complete histogram $h$.

Using \cthm\ref{thm:calGS}, we can calculate the global sensitivity if all rectangles are disjoint, \ie, $R_i \cap R_j = \emptyset$ for any $i \neq j$, as follows. Construct a graph ${\cal G}_R(Q) = ({\cal V}_R, {\cal E}_R)$ on the set of rectangles in $Q$: i) create a vertex in ${\cal V}_R$ for rectangle $R_i$ in each range count query $\query_{R_i}$ in $Q$; and ii) add an edge $(R_i, R_j)$ into ${\cal E}_R$ iff $d(R_i, R_j) \leq \theta$. 
We can prove that the policy specific global sensitivity equals to $2(\maxcomp(Q)+1)$ when there are no point query constraints, where $\maxcomp(Q)$ is the number of nodes in the largest connected component in ${\cal G}_R(Q)$. Note that $\maxcomp(Q)$ (and hence $S(h,P)$) can be computed efficiently. 

\begin{theorem}
Let $h$ be the complete histogram. For a policy $P = (\dom, G, \inp_{Q})$, where $\dom = [m]^k$, $G$ represents the distance-threshold sensitive information $\Spairs^{d, \theta}$ ($\theta > 0$), and $Q$ is a set of disjoint range count queries $\{\query_{R_1},$ $\query_{R_2},$ $\ldots,$ $\query_{R_p}\}$ with $R_i \cap R_j = \emptyset$ for $i \neq j$. We have $S(h,P) \leq 2(\maxcomp(Q)+1)$. If none of the constraints are point queries, then $S(h,P) = 2(\maxcomp(Q)+1)$.
\end{theorem}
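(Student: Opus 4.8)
The plan is to apply \cthm\ref{thm:calGS}: I would first check that $Q = \{\query_{R_1}, \ldots, \query_{R_p}\}$ is sparse with respect to $G = G^{d,\theta}$, then describe the policy graph ${\cal G}_P$ it induces, and finally evaluate $\alpha({\cal G}_P)$ and $\xi({\cal G}_P)$ in terms of $\maxcomp(Q)$. Sparseness is immediate: since the rectangles $R_1, \ldots, R_p$ are pairwise disjoint, every point of $\dom$ lies in at most one of them, so for any discriminative pair $(x,y) \in E$ (i.e.\ any $x,y$ with $d(x,y)\le\theta$) moving a tuple from $x$ to $y$ can lower at most the one query $\query_{R_i}$ with $x \in R_i$ and lift at most the one query $\query_{R_j}$ with $y \in R_j$. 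Hence $Q$ is sparse w.r.t.\ $G$ and \cthm\ref{thm:calGS} is available.

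Next I would unpack ${\cal G}_P$. On the query vertices it is exactly the ``bidirected'' version of ${\cal G}_R(Q)$: there is an edge $(\query_{R_i}, \query_{R_j})$ iff some pair $(x,y) \in E$ has $x \in R_i$ and $y \in R_j$, which happens iff $d(R_i, R_j) \le \theta$, and by symmetry of $d$ the reverse edge is then present as well. The vertex $v^+$ is a source and $v^-$ a sink, with $v^+ \to \query_{R_i}$ exactly when some cell outside $\bigcup_j R_j$ lies within distance $\theta$ of $R_i$ (symmetrically for $\query_{R_i} \to v^-$), and $(v^+, v^-) \in {\cal E}_P$ always. From this structure the upper bound is bookkeeping: any simple cycle of ${\cal G}_P$ avoids the source $v^+$ and sink $v^-$, so it stays inside one connected component of ${\cal G}_R(Q)$ and has at most $\maxcomp(Q)$ edges, giving $\alpha({\cal G}_P) \le \maxcomp(Q)$; any simple $v^+$--$v^-$ path uses at most $\maxcomp(Q)$ intermediate query vertices (all in one component) plus the two boundary edges, giving $\xi({\cal G}_P) \le \maxcomp(Q)+1$. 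With \cthm\ref{thm:calGS} this yields $S(h,P) \le 2\max\{\alpha({\cal G}_P),\xi({\cal G}_P)\} \le 2(\maxcomp(Q)+1)$.

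For the equality when no constraint is a point query, I would invoke the equality clause of \cthm\ref{thm:calGS}: it suffices to exhibit neighbors $(D_1,D_2) \in N(P)$ differing in $\maxcomp(Q)+1$ tuples -- the maximum allowed -- with $||h(D_1)-h(D_2)||_1 = 2(\maxcomp(Q)+1)$, i.e.\ with no histogram cancellation. Taking a spanning path $R_{i_1}, \ldots, R_{i_c}$ ($c = \maxcomp(Q)$) of the largest component of ${\cal G}_R(Q)$, build a cascading chain of single-tuple moves: one tuple enters $R_{i_1}$ from an outside cell within $\theta$; to restore $\query_{R_{i_1}}$ a second tuple leaves $R_{i_1}$ into $R_{i_2}$; this forces a tuple of $R_{i_2}$ into $R_{i_3}$; and so on, until finally a tuple leaves $R_{i_c}$ to an outside cell. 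This changes $c+1$ tuples, and here the absence of point queries is essential: because each $R_{i_t}$ has at least two cells, the incoming and outgoing tuples at $R_{i_t}$ can be placed at distinct cells, so their histogram contributions do not cancel -- whereas for a point query $R_{i_t} = \{u\}$ both would touch the single cell $u$ and cancel. One then checks that this pair is a genuine neighbor (both satisfy $Q$; minimality holds because deleting any move of the chain violates some $\query_{R_{i_t}}$), and \cthm\ref{thm:calGS} gives $S(h,P) = 2(\maxcomp(Q)+1)$.

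The main obstacle is this last construction. The delicate points are: realizing the largest component of ${\cal G}_R(Q)$ by a simple path whose consecutive rectangles are within $\theta$ and which can be flanked by $v^+$ and $v^-$ edges (so that $\xi({\cal G}_P)$ truly equals $\maxcomp(Q)+1$); routing the two boundary tuples to outside cells within distance $\theta$ without colliding with any rectangle; and guaranteeing that no two moved tuples share a source or target cell, which is exactly where the hypothesis that no constraint is a point query is used. The upper bound, by contrast, is essentially a direct application of \cthm\ref{thm:calGS} once the policy graph has been described.
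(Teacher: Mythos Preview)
Your upper-bound argument is exactly the intended one and matches the paper's framework: disjointness of the $R_i$ gives sparseness, the policy graph restricted to the query vertices is the bidirected version of ${\cal G}_R(Q)$, simple cycles avoid the source $v^+$ and sink $v^-$ and hence lie in a single component (so $\alpha({\cal G}_P)\le\maxcomp(Q)$), and any simple $v^+$--$v^-$ path threads through at most $\maxcomp(Q)$ query vertices (so $\xi({\cal G}_P)\le\maxcomp(Q)+1$). Invoking \cthm\ref{thm:calGS} then gives the stated inequality. The paper does not spell out a proof of this particular theorem beyond pointing to \cthm\ref{thm:calGS}, so on this half you are fully aligned with it.

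The equality direction, however, has a real gap that you yourself flag but do not close. Your construction needs a \emph{spanning path} $R_{i_1},\ldots,R_{i_c}$ of the largest component of ${\cal G}_R(Q)$---in other words a Hamiltonian path---and there is no reason one should exist. Concretely, take four disjoint non-point rectangles with $R_1$ a large central box and $R_2,R_3,R_4$ small boxes placed near three far-apart corners of $R_1$, each within $\theta$ of $R_1$ but pairwise at distance $>\theta$. Then ${\cal G}_R(Q)$ is the star $K_{1,3}$, $\maxcomp(Q)=4$, yet the longest simple $v^+$--$v^-$ path in ${\cal G}_P$ has length $4$ (e.g.\ $v^+\to R_2\to R_1\to R_3\to v^-$) and $\alpha({\cal G}_P)=2$, so \cthm\ref{thm:calGS} already forces $S(h,P)\le 8<10=2(\maxcomp(Q)+1)$. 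Thus the obstruction you list under ``delicate points'' is not merely a technicality to be worked around: without an additional hypothesis guaranteeing a Hamiltonian $v^+$--$v^-$ traversal of the largest component, the equality $S(h,P)=2(\maxcomp(Q)+1)$ cannot be established by your route, and indeed appears not to hold in general. What does go through unconditionally (and is all \cthm\ref{thm:calGS} actually yields) is $S(h,P)=2\max\{\alpha({\cal G}_P),\xi({\cal G}_P)\}$ once you verify the no-cancellation condition, which is precisely where the ``no point queries'' hypothesis enters, as you correctly explain.
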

%




\balance 

\section{Conclusions}
\label{sec:conclusions}
We propose a new class of privacy definitions, called Blowfish privacy, with the goal of seeking better trade-off between privacy and utility. The key feature of Blowfish is a  policy, where users can specify sensitive information that needs to be protected and knowledge about their databases which has been released to potential adversaries. Such a rich set of ``tuning knobs'' in the policy enable users to improve the utility by customizing sensitive information and to limit attacks from adversaries with auxiliary knowledge. Using examples of kmeans clustering, cumulative histograms and range queries, we show how to tune utility using reasonable policies with weaker specifications of privacy. For the latter, we develop strategies that are more accurate than any differentially private mechanism. 
Moreover, we study how to calibrate noise for Blowfish policies with count constraints when publishing histograms, and the general result we obtain can be applied in several practical scenarios.

\vspace{5pt}
\noindent
\textbf{Acknowledgements}: We would like to thank Jiangwei Pan and the anonymous reviewers for their comments. This work was supported by the National Science Foundation under Grant \# 1253327 and a gift from Google.

{\small 

}

\appendix
\section{Proof of Theorem 7.1}\label{proof_calGS}
\begin{proof}(sketch)
Recall that the policy specific global sensitivity is defined as
\[
S(h,P) = \max_{(D_1, D_2) \in N(P)} ||h(D_1) - h(D_2)||_1.
\]

\smallskip\noindent {\bf Direction I ($S(h,P) \leq 2 \max\{\alpha({\cal G}_P), \xi({\cal G}_P)\}$).} It suffices to prove that for any two databases $D_1, D_2 \in \inp_Q$, if $|T(D_1, D_2)| > \max\{\alpha({\cal G}_P), \xi({\cal G}_P)\}$, there must exist another database $D_3 \in \inp_Q$ s.t. $T(D_1, D_3) \subsetneq T(D_1, D_2)$, \ie, $(D_1, D_2) \notin N(P)$; and thus for any two databases $(D_1, D_2) \in N(P)$, we have $||h(D_1) - h(D_2)||_1 \leq 2|T(D_1, D_2)| \leq$ $2\max\{\alpha({\cal G}_P),$ $\xi({\cal G}_P)\}$ which implies $S(h,P) \leq 2\max\{\alpha({\cal G}_P),$ $\xi({\cal G}_P)\}$.

To complete the proof, we consider two databases $D_1, D_2 \in \inp_Q$ with $|T(D_1, D_2)| > \max\{\alpha({\cal G}_P), \xi({\cal G}_P)\}$, and show how to construct the $D_3$ defined above.

First of all, for any secret pair $(s^i_x, s^i_y) \in T(D_1, D_2)$, it must lift and/or lower some count query $\query_\pred \in Q$; otherwise, we can construct $D_3$ by changing the value of tuple $t$ with $t.\_id = i$ in $D_1$ into its value in $D_2$.

To construct $D_3$, now let's consider a directed graph ${\cal G}_{D_1|D_2} = ({\cal V}_{D_1|D_2}, {\cal E}_{D_1|D_2})$, where ${\cal V}_{D_1|D_2} \subseteq {\cal V}_P$ and ${\cal E}_{D_1|D_2}$ is a multi-subset of ${\cal E}_P$ (\ie, an edge in ${\cal E}_P$ may appear multiple times in ${\cal E}_{D_1|D_2}$). ${\cal E}_{D_1|D_2}$ is constructed as follows: for each $(s^i_x, s^i_y) \in T(D_1,D_2)$, i) if $(x, y)$ lifts $\query_{\pred'}$ and lowers $\query_{\pred}$, add a directed edge $(\query_\pred, \query_{\pred'})$ into ${\cal E}_{D_1|D_2}$; ii) if $(x, y)$ lifts $\query_\pred$ but not lowering any other $\query_{\pred'}$, add an edge $(v^+, \query_\pred)$; and iii) if $(x, y)$ lowers $\query_\pred$ but not lifting any other $\query_{\pred'}$, add an edge $(\query_\pred, v^-)$. ${\cal V}_{D_1|D_2}$ is the set of count queries involved in ${\cal E}_{D_1|D_2}$.

${\cal G}_{D_1|D_2}$ is Eulerian, \ie, each vertex has the same in-degree as out-degree except $v^+$ and $v^-$ (if existing in ${\cal G}_{D_1|D_2}$), because of the above construction and the fact that $D_1, D_2 \in \inp_Q$. As $|{\cal E}_{D_1|D_2}| = |T(D_1, D_2)| > \max\{\alpha({\cal G}_P), \xi({\cal G}_P)\}$ (\ie, ${\cal G}_{D_1|D_2}$ is larger than any simple cycle or simple $v^+$-$v^-$ path in ${\cal G}_P$) and ${\cal G}_{D_1|D_2}$ is Eulerian, ${\cal G}_{D_1|D_2}$ must have a proper subgraph which is either a simple cycle or a simple $v^+$-$v^-$ path. Let ${\cal E}_{D_1 \rightarrow D_2}$ be the edge set of this simple cycle/path. Construct $D_3$ that is identical to $D_1$, except that for each secret pair $(s^i_x, s^i_y)$ associated with each edge in ${\cal E}_{D_1 \rightarrow D_2}$, the value of tuple $t$ with $t.\_id = i$ is changed from $x$ to $y$. We can show that $D_3$ satisfies its definition, and thus the proof for Direction I is completed.

\smallskip\noindent {\bf Direction II ($S(h,P) \geq 2\max\{\alpha({\cal G}_P), \xi({\cal G}_P)\}$).} Let's first prove a weaker inequality:
\begin{equation} \label{equ:GS:directionII}
\max_{(D_1,D_2) \in N(P)}|T(D_1, D_2)| \geq \max\{\alpha({\cal G}_P), \xi({\cal G}_P)\}.
\end{equation}
It implies $S(h,P) \geq 2\max\{\alpha({\cal G}_P), \xi({\cal G}_P)\}$ if the condition in (ii) of the theorem holds. Combined with Direction I, we can conclude $S(h,P) = 2\max\{\alpha({\cal G}_P), \xi({\cal G}_P)\}$.

To prove \eqref{equ:GS:directionII}, it suffices to show that for any simple cycle/$v^+$-$v^-$ path in ${\cal G}_P$, we can construct two databases $D_1$ and $D_2$ s.t. $(D_1, D_2) \in N(P)$ and $|T(D_1, D_2)| = \hbox{its length}$. Consider a simple cycle $\query_{\pred_1},$ $\query_{\pred_2},$ $\ldots,$ $\query_{\pred_l},$ $\query_{\pred_{l+1}}=\query_{\pred_1}$. Starting with any database $D \in \inp_Q$, let $D_1 \leftarrow D$ and $D_2 \leftarrow D$ initially. For each edge $(\query_{\pred_i}, \query_{\pred_{i+1}})$, from the definition of policy graphs, we can find a secret pair $(x,y) \in E(G)$ s.t. $\left(\neg\query_{\pred_i}(x) \wedge \query_{\pred_i}(y)\right) \wedge \left(\query_{\pred_{i+1}}(x) \wedge \neg\query_{\pred_{i+1}}(y)\right)$; create two new tuples: $t_1.{\_id} = t_2.{\_id} = i$, $t_1 = x$, and $t_2 = y$; and then let $D_1 \leftarrow D_1 \cup \{t_1\}$ and $D_2 \leftarrow D_2 \cup \{t_2\}$. It is not hard to verify that finally we get two databases $D_1$ and $D_2$ s.t. $(D_1, D_2) \in N(P)$ and $|T(D_1, D_2)| = \hbox{cycle length}$. The proof is similar for a simple $v^+$-$v^-$ path. 
\end{proof}

\section{Proof of Theorem 4.1}\label{proof_seqComp}
\begin{proof}(sketch)
Let $M_{M_1,M_2}$ denote the mechanism that outputting the results of $M_1$ and $M_2$ sequentially.
As $M_1$ satisfies $(\epsilon_1,P)$-Blowfish privacy, for every pair of neighboring databases $(D_a,D_b)\in N(P)$,
and every result $r_1\in range(M_1)$, we have  
\begin{eqnarray}
 Pr[M_1(D_a)=r_1] \leq e^{\epsilon_1}Pr[M_1(D_b)=r_1]
\end{eqnarray}
The result of $M_1$ is outputted before the result of $M_2$,
so $r_1$ will turn out to be another input of $M_2$, together with the original dataset.
As $M_1$ satisfies $(\epsilon_1,P)$-Blowfish privacy,
for every pair of neighboring databases $(D_a,D_b)\in N(P)$ coupling with the same $r_1$, 
and for every result $r_2\in range(M_2)$, we have
\begin{eqnarray}
 Pr[M_2(D_a,r_1)=r_2] \leq e^{\epsilon_1}Pr[M_2(D_b,r_1)=r_2]
\end{eqnarray}
Therefore, for every pair of neighboring databases $(D_a,D_b)\in N(P)$,
and every set of output sequence $(r_1,r_2)$, we have  
\begin{eqnarray}
&& Pr[M_{M_1,M_2}(D_a)=(r_1,r_2)] \nonumber \\
&=& Pr[M_1(D_a)=r_1]Pr[M_2(D_a,r_1)=r_2] \nonumber \\
&\leq& e^{\epsilon_1}Pr[M_1(D_b)=r_1]e^{\epsilon_2}Pr[M_2(D_b,r_1)=r_2] \nonumber \\
&\leq& e^{\epsilon_1+\epsilon_2}Pr[M_1(D_b)=r_1]Pr[M_2(D_b,r_1)=r_2] \nonumber \\
&=& e^{\epsilon_1+\epsilon_2} Pr[M_{M_1,M_2}(D_b)=(r_1,r_2)]
\end{eqnarray}
\end{proof}

\section{Proof of Theorem 4.2-4.3}\label{proof_parComp}
\begin{proof}(sketch)
For every pair of neighboring databases $(D_a,D_b)\in N(P)$ with the cardinality constraint or with disjoint subsets of constraints $Q_1,...,Q_p$, there is only one subset  of \_ids, let's say $S_i*$, with different values in $D_a$ and $D_b$ while $D_a\cap S_i = D_b \cap S_i$ for all $i\neq i^*$. Hence,  for every set of output sequence $r$,
\begin{eqnarray}
&& Pr[M(D_a)=r] = \prod_i Pr[M_i(D_{a} \cap S_i)=r_i] \nonumber \\
&\leq &  e^{\epsilon_{i^*}} Pr[M_{i^*}(D_{b}\cap S_{i^*})=r_{i^*}] \prod_{i,i\neq i^*}Pr[M_i(D_{b}\cap S_i)=r_i] \nonumber \\
&\leq & e^{\max_i \epsilon_i} \prod_i Pr[M_i(D_{b}\cap S_i)=r_i] \nonumber \\
&=& e^{\max_i \epsilon_i}Pr[M(D_b)=r]
\end{eqnarray}
\end{proof}

\end{document}